\newif\ifprocs
\newif\ifarxiv
\newif\ifcomments
\newcommand{\defproblem}[2]{
\bigskip
\begin{center}
\noindent\fbox{
	\begin{minipage}{.96\linewidth}
	\textbf{#1}
	
	\smallskip
		{#2}
	\end{minipage}
}
\end{center}
\medskip
}
\newtheorem{theorem}{Theorem}[section]
\newtheorem{lemma}[theorem]{Lemma}
\newtheorem{definition}[theorem]{Definition}
\newtheorem{openq}[theorem]{Open Question}
\newtheorem{observation}[theorem]{Observation}
\newtheorem{fact}[theorem]{Fact}
\newtheorem{corollary}[theorem]{Corollary}
\theoremstyle{plain}
\newtheorem{claim}[theorem]{Claim}
\newtheorem*{rep@theorem}{\rep@title}
\newcommand{\newreptheorem}[2]{%
\newenvironment{rep#1}[1]{%
 \def\rep@title{#2 \ref{##1}}%
 \begin{rep@theorem}}%
 {\end{rep@theorem}}}
\newtheorem*{rep@corollary}{\rep@title}
\newcommand{\newrepcorollary}[2]{%
\newenvironment{rep#1}[1]{%
 \def\rep@title{#2 \ref{##1}}%
 \begin{rep@corollary}}%
 {\end{rep@corollary}}}
\def\compactify{\itemsep=0pt \topsep=0pt \partopsep=0pt \parsep=0pt}
\newcommand{\colnote}[3]{\textcolor{#1}{$\ll$\textsf{#2}$\gg$}}
\newcommand{\rnote}[1]{\colnote{red}{#1--Robi}{RK}}
\newcommand{\anote}[1]{\colnote{olive}{#1--Amir}{AA}}
\newcommand{\onote}[1]{\colnote{blue}{#1--Ohad}{OT}}
\newcommand{\rnote}[1]{}
\newcommand{\anote}[1]{}
\newcommand{\onote}[1]{}
\newcommand{\ProblemName}[1]{\textsf{#1}}
\newcommand{\MF}{\ProblemName{Max-Flow}\xspace}
\newcommand{\MC}{\ProblemName{Min-Cut}\xspace}
\newcommand{\GMC}{\ProblemName{Global-Min-Cut}\xspace}
\newcommand{\EC}{\ProblemName{Edge-Connectivity}\xspace}
\newcommand{\LMC}{\ProblemName{Latest Min-Cut}\xspace}
\newcommand{\LMwC}{\ProblemName{Latest Min$_{2w}$-Cut}\xspace}
\newcommand{\expanders}{\textsc{Expanders-Guided Querying}\xspace}
\newcommand{\pC}{\textsc{Isolating-Cuts}\xspace}
\newcommand{\APMF}{\ProblemName{All-Pairs Max-Flow}\xspace}
\newcommand{\APEC}{\ProblemName{All-Pairs Edge Connectivity}\xspace}
\newcommand{\APSP}{\ProblemName{All-Pairs Shortest-Path}\xspace}
\newcommand{\SPath}{\ProblemName{Shortest-Path}\xspace}
\newcommand{\lucky}{isolated\xspace}
\newcommand{\GHT}{Gomory-Hu tree\xspace}
\newcommand{\GHEPT}{\ProblemName{GH-Equivalent Partition Tree}\xspace}
\DeclareMathOperator{\cdeg}{cdeg} 
\DeclareMathOperator{\vol}{vol} 
\let\poly\relax
\DeclareMathOperator{\poly}{poly}
\newcommand\eps{\varepsilon}
\renewcommand\epsilon{\varepsilon}
\newcommand\tO{\ensuremath{\tilde O}}
\newcommand{\Raecke}{R\"{a}cke\xspace} 
\newcommand{\T}{\mathcal{T}}
\newcommand{\TG}{\mathcal{T}^*} 
\providecommand{\set}[1]{{\{#1\}}}
\providecommand{\card}[1]{\lvert#1\rvert}
\begin{document}

\ifprocs
\title{Subcubic Algorithms for Gomory-Hu Tree in Unweighted Graphs}
\titlenote{Full version available at \href{https://arxiv.org/abs/2012.10281}{arXiv:2012.10281}.}
\else
\title{Subcubic Algorithms for Gomory-Hu Tree in Unweighted Graphs}
\fi

\ifprocs
\author{Amir Abboud}
\authornote{Work partially done at the IBM Almaden Research Center, US.}
\affiliation{%
  \institution{Weizmann Institute of Science, Israel}
  \city{}
  \country{}
}
\email{amir.abboud@weizmann.ac.il}

\author{Robert Krauthgamer}
\authornote{Work partially supported by ONR Award N00014-18-1-2364,
  the Israel Science Foundation grant \#1086/18,
  and a Minerva Foundation grant.
}
\affiliation{%
  \institution{Weizmann Institute of Science, Israel}
  \city{}
  \country{}
}
\email{robert.krauthgamer@weizmann.ac.il}

\author{Ohad Trabelsi}
\affiliation{%
  \institution{Weizmann Institute of Science, Israel}
  \city{}
  \country{}
}
\email{ohad.trabelsi@weizmann.ac.il}

\else

\author[1]{Amir Abboud%
  \thanks{Work partially done at the IBM Almaden Research Center.
    Email: \texttt{amir.abboud@weizmann.ac.il}
  }}
\author[1]{Robert Krauthgamer\thanks{Work partially supported by ONR Award N00014-18-1-2364, the Israel Science Foundation grant \#1086/18, and a Minerva Foundation grant.
    Email: \texttt{robert.krauthgamer@weizmann.ac.il}
  }}
\author[1]{Ohad Trabelsi%
  \thanks{Email: \texttt{ohad.trabelsi@weizmann.ac.il}
  }}
\affil[1]{Weizmann Institute of Science}

\fi

\ifprocs
\setcopyright{acmlicensed}
\acmPrice{15.00}
\acmDOI{10.1145/3406325.3451073}
\acmYear{2021}
\copyrightyear{2021}
\acmSubmissionID{stoc21main-p262-p}
\acmISBN{978-1-4503-8053-9/21/06}
\acmConference[STOC '21]{Proceedings of the 53rd Annual ACM SIGACT Symposium on Theory of Computing}{June 21--25, 2021}{Virtual, Italy}
\acmBooktitle{Proceedings of the 53rd Annual ACM SIGACT Symposium on Theory of Computing (STOC '21), June 21--25, 2021, Virtual, Italy}
\fi 

\ifprocs\else
\maketitle
\thispagestyle{empty}
\setcounter{page}{0}
\fi

\begin{abstract}
Every undirected graph $G$ has a (weighted) \emph{cut-equivalent tree} $T$, commonly named after Gomory and Hu who discovered it in 1961.
Both $T$ and $G$ have the same node set, and for every node pair $s,t$,
the minimum $(s,t)$-cut in $T$ is also an exact minimum $(s,t)$-cut in $G$.

We give the first subcubic-time algorithm that constructs such a tree
for a simple graph $G$ (unweighted with no parallel edges). 
Its time complexity is $\tilde{O}(n^{2.5})$, for $n=|V(G)|$;
previously, only $\tilde{O}(n^3)$ was known, 
except for restricted cases like sparse graphs.
Consequently, we obtain the first algorithm
for \APMF in simple graphs that breaks the cubic-time barrier.

Gomory and Hu compute this tree using $n-1$ queries to (single-pair) \MF; the new algorithm can be viewed as a fine-grained reduction to $\tilde{O}(\sqrt{n})$ \MF computations on $n$-node graphs.
\end{abstract}

\ifprocs
\begin{CCSXML}
<ccs2012>
   <concept>
       <concept_id>10003752.10003809</concept_id>
       <concept_desc>Theory of computation~Design and analysis of algorithms</concept_desc>
       <concept_significance>500</concept_significance>
       </concept>
   <concept>
       <concept_id>10003752.10003809.10003635</concept_id>
       <concept_desc>Theory of computation~Graph algorithms analysis</concept_desc>
       <concept_significance>500</concept_significance>
       </concept>
   <concept>
       <concept_id>10003752.10003809.10003635.10003644</concept_id>
       <concept_desc>Theory of computation~Network flows</concept_desc>
       <concept_significance>500</concept_significance>
       </concept>
 </ccs2012>
\end{CCSXML}

\ccsdesc[500]{Theory of computation~Design and analysis of algorithms}
\ccsdesc[500]{Theory of computation~Graph algorithms analysis}
\ccsdesc[500]{Theory of computation~Network flows}
\keywords{Gomory-Hu tree, graph cuts, maximum flow, expander decomposition, fine-grained complexity}
\fi 

\ifprocs
\maketitle
\fi

\ifprocs\else
\newpage
\fi

\section{Introduction}

A fundamental discovery of Gomory and Hu in 1961, now a staple of textbooks on Algorithms, says that every undirected graph can be compressed into a single tree while exactly preserving all pairwise minimum cuts (and maximum flows). 

\begin{theorem}[Gomory and Hu~\cite{GH61}]
Every undirected graph $G$ (even with edge weights)
has an edge-weighted tree $T$ on the same set of vertices $V(G)$ such that:
\begin{itemize} \compactify 
\item for all pairs $s,t\in V(G)$ the minimum $(s,t)$-cut in $T$ is also a minimum $(s,t)$-cut in $G$, and their values are the same.
\end{itemize}
Such a tree is called a \emph{cut-equivalent tree}, aka \emph{\GHT}. 
Moreover, the tree can be constructed in the time of $n-1$ queries to a \MF algorithm,%
\footnote{The notation \MF refers to the maximum $(s,t)$-flow problem, 
  which clearly has the same value as minimum $(s,t)$-cut.
  In fact, we often need algorithms that find an optimal cut (not only its value),
  which is clearly different (and usually harder) than \GMC.
}
where throughout $n=|V(G)|$.
\end{theorem}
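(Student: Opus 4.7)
The plan is to prove this by describing the classical Gomory--Hu procedure and arguing correctness via a non-crossing (or ``contractibility'') lemma for minimum cuts. I would maintain the invariant that we have a tree $T$ whose nodes form a partition of $V(G)$ into super-nodes, together with edge weights, such that for every tree edge $e=(A,B)$ the two sides of $T-e$ (viewed as subsets of $V(G)$) form a minimum cut in $G$ between any $a\in A$ and $b\in B$, of value equal to $w(e)$. Initially $T$ is a single super-node containing all of $V(G)$.

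The algorithm refines $T$ until every super-node is a singleton. At each step I pick a super-node $S$ with $|S|\ge 2$ and two distinct vertices $s,t\in S$. For each component $C$ of $T-S$, let $V(C)\subseteq V(G)$ be the union of super-nodes in $C$. I form $G'$ from $G$ by contracting each such $V(C)$ into a single vertex; then I run a single \MF call to get a minimum $(s,t)$-cut $(X,\bar X)$ in $G'$ of value $\lambda_G(s,t)$. Un-contracting, $X$ and $\bar X$ induce subsets $X_V,\bar X_V$ of $V(G)$ that partition $V(G)$ and split $S$ into $S_s=S\cap X_V$ and $S_t=S\cap \bar X_V$. I replace $S$ in $T$ by two super-nodes $S_s,S_t$ joined by an edge of weight $\lambda_G(s,t)$, reattaching each former neighbor of $S$ to whichever side contains its component. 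This produces one new tree edge per iteration, so after $n-1$ iterations (and $n-1$ \MF calls) the super-nodes are singletons and $T$ is a tree on $V(G)$.

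The heart of the proof is the non-crossing lemma: if $U$ is any minimum $(s,t)$-cut in $G$ with $s\in U$, and $x,y$ are both in $U$, then there is a minimum $(x,y)$-cut of $G$ that lies entirely inside $U$; equivalently, contracting $\bar U$ to a single vertex preserves the value of the minimum $(x,y)$-cut. I would prove this by the standard submodular/posimodular argument on the cut function: take any minimum $(x,y)$-cut $B$ with $x\in B$, separately consider the four quadrants induced by $U$ and $B$, and use submodularity $d(U)+d(B)\ge d(U\cap B)+d(U\cup B)$ and/or $d(U)+d(B)\ge d(U\setminus B)+d(B\setminus U)$ together with the minimality of $U$ to replace $B$ by one of $B\cap U$ or $B\cup U$ that still separates $x,y$ and still achieves the $(x,y)$-cut value. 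Iterating this across all contractions performed before visiting super-node $S$ justifies that the cut found in $G'$ lifts to a genuine minimum $(s,t)$-cut in $G$, and also that the tree invariant is preserved for the newly created edge as well as for all previously created edges reattached across the split.

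The main obstacle is precisely this lemma and, more subtly, verifying that the invariant continues to hold for the \emph{old} tree edges after $S$ is split: a previously established edge $(S',S)$ of weight $w$ certified a minimum cut between $V(C)$ and its complement, and one must check that reassigning its endpoint to either $S_s$ or $S_t$ still certifies a minimum cut between the appropriate vertex pairs. This follows by applying the non-crossing lemma in the other direction --- using the newly found $(s,t)$-cut together with the old $(s',\cdot)$-cut --- but the bookkeeping (and the fact that the contraction in $G'$ does not decrease the min $(s,t)$-cut value) is where all the care is needed. Finally, because each iteration adds exactly one tree edge and one \MF query, the total number of \MF calls is $n-1$, which gives the ``moreover'' clause.
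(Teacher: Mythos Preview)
Your proposal is correct and is exactly the classical Gomory--Hu argument. Note that the paper does not actually prove this theorem: it is stated in the introduction as a cited result of~\cite{GH61}, and Section~\ref{prelim:GH} only \emph{describes} the algorithm (matching your description essentially verbatim) while deferring correctness to the original reference, remarking only that ``the submodularity of cuts ensures that this cut is also a minimum $(s,t)$-cut in the original graph $G$'' and ``it can be shown~\cite{GH61} that\ldots''. Your sketch fills in precisely the details the paper omits, including the subtle point about old tree edges remaining valid after a split.
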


In the interim sixty years, the \GHT has been investigated thoroughly from various angles (see Section~\ref{related_work}).
In spite of ingenious new approaches and much progress in simpler or harder settings, the chief open question 
remains elusive.

\begin{openq}
\label{oq1}
Can one construct a \GHT faster than the time of $O(n)$ \MF computations?
\end{openq}

In the most basic setting of \emph{simple graphs} (undirected, unweighted, no parallel edges), each \MF query can be answered in  $\tilde{O}(n^2)$ time%
\footnote{The notation $\tilde{O}(\cdot)$ hides a factor that is polylogarithmic in $n$.}
using the algorithm of Karger and Levine~\cite{KL15}, and the question becomes whether the natural cubic barrier can be broken.

\begin{openq}
\label{oq2}
Can one construct a \GHT of a simple graph $G$ in $o(n^3)$ time?
\end{openq}

Before the current work, subcubic%
\footnote{An algorithm is called subcubic if its running time is $o(n^3)$ where $n$ is the number of nodes. It is common to distinguish between mildly-subcubic $O(\frac{n^3}{\poly\log{n}})$ and truly-subcubic $O(n^{3-\eps})$. In this paper we refer to the latter.
}
algorithms were known only for sparse graphs \cite{BHKP07,KL15,AKT20}, planar graphs \cite{BSW15}, surface-embedded graphs \cite{BENW16}, and bounded-treewidth graphs \cite{ACZ98,AKT20_b}.

\subsection{Results}

We resolve Open Question~\ref{oq2} in the affirmative by giving the first subcubic algorithm for computing a \GHT for unweighted graphs.

\begin{theorem}
There is a randomized algorithm, with success probability $1-1/\poly(n)$, that 
constructs a \GHT of a simple graph $G$ in $\tilde{O}(n^{2.5})$ time.
\end{theorem}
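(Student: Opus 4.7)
The plan is to replace the classical $n-1$ \MF queries of Gomory and Hu with only $\tilde{O}(\sqrt{n})$ queries, each costing $\tilde{O}(n^2)$ via the Karger--Levine algorithm for simple graphs. The key is to extract many \GHT edges per \MF call by combining an \pC subroutine with an \expanders technique to reach the $\tilde{O}(\sqrt{n})$ bound on calls.

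First, I would implicitly partition the edges of the \GHT according to the size of the smaller side of the corresponding minimum cut: an edge is \emph{balanced} if both sides of its cut contain at least $\sqrt{n}$ vertices, and \emph{unbalanced} otherwise. The two regimes call for very different tools.

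For balanced edges, sample a set $R\subseteq V(G)$ of $\tilde{\Theta}(\sqrt{n})$ vertices uniformly at random. With high probability, every balanced cut intersects $R$ on both sides, so every balanced \GHT edge is witnessed by some pair of sample vertices. To avoid a per-pair \MF call, invoke \pC: given terminal set $R$, it uses only $O(\log |R|)$ \MF queries to produce, for each $v\in R$, the minimum cut that isolates $v$ from $R\setminus\{v\}$. Iterating this procedure on shrinking terminal sets, in the spirit of recent Gomory--Hu speedups, recovers all balanced \GHT edges using $\tilde{O}(\sqrt{n})$ total \MF calls.

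For unbalanced edges, one side of the cut has fewer than $\sqrt{n}$ vertices, so random sampling is unlikely to hit it. I would instead use \expanders: expander-decompose $G$ so that within each expander small cuts are rare and structurally restricted, and then use the expander structure to identify a small set of candidate sources whose single-pair \MF computations expose the remaining unbalanced \GHT edges. A charging argument against the laminar family formed by the small sides of these cuts, together with the budget $|V(G)|\le n$, should bound the calls in this branch by $\tilde{O}(\sqrt{n})$ as well.

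The main obstacle I foresee is precisely the unbalanced case: sampling gives no leverage, so the expander-based subroutine must correctly pick up every min-cut whose small side is tiny, even when adversarially placed across the expander decomposition, and the charging argument must amortize all calls across the $\tilde{O}(\sqrt{n})$ budget without double-counting on nested laminar cuts. Once both subroutines are in place, instantiating each \MF query with Karger--Levine's $\tilde{O}(n^2)$-time algorithm for simple graphs yields the claimed $\tilde{O}(n^{2.5})$ running time, with error probability $1/\poly(n)$ absorbed by a union bound over the $\tilde{O}(\sqrt{n})$ randomized calls.
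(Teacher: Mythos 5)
Your split into balanced and unbalanced \GHT edges, with \pC handling the balanced side via a random sample, inverts the way the \pC procedure actually works, and this is a fatal gap rather than a presentational one. The \pC lemma only returns the correct minimum $(p,v)$-cut for a terminal $v$ when $v$'s side of that cut contains \emph{no other terminal}; for a balanced cut both sides have $\geq\sqrt{n}$ vertices, so a uniform sample $R$ of $\tilde\Theta(\sqrt{n})$ vertices will with high probability place many terminals of $R$ on $v$'s side, and the procedure then returns some smaller, unrelated cut with no way to certify that it is wrong. So sampling plus \pC is precisely the tool for cuts whose relevant side is \emph{small}, not large. Moreover, the statement that ``every balanced cut is witnessed by some pair of sample vertices'' does not make it recoverable even by direct \MF calls on sampled pairs: if $a,b\in R$ lie on opposite sides of a balanced \GHT edge, the minimum $(a,b)$-cut is the lightest edge on their tree path, which need not be that edge, so a partial tree on $R$ (which is what $|R|-1$ queries buy you) does not yield the balanced \GHT edges. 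The genuinely hard instance is exactly of this type (see the two-star example of Section~\ref{sec:clcr}): the sought cut is balanced, the vertex $c_\ell$ is never isolated by random terminals, and no pair query involving other vertices reveals it.

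Your unbalanced branch, where the real content would have to live, has no mechanism: ``use the expander structure to identify a small set of candidate sources'' plus ``a charging argument against the laminar family'' does not specify which vertices to query or why $\tilde{O}(\sqrt{n})$ queries suffice. What the paper actually needs here is a per-connectivity-scale construction: for each $w$, an expander decomposition of a Nagamochi--Ibaraki sparsifier with $\phi_w\approx 1/\sqrt{w}$, the trichotomy of the expander containing $v$ into small, large with few relevant vertices on $v$'s side (handled by \pC with terminals sampled at rate $1/\sqrt{w}$ inside that expander), and large with few relevant vertices on the far side (handled by querying the top $O(\sqrt{n})$ vertices by a monotone estimate $c'(\cdot)$ maintained over $O(\log n)$ rounds of \expanders, so that $v$ enters the top list once its subtree has few undone vertices). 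On top of that the algorithm needs the Gomory--Hu recursion with a randomly chosen pivot whose cut-membership tree is shallow, latest minimum cuts to keep the recovered cuts non-crossing and assemble them into a tree, and a $k$-partial tree for connectivities below $\approx\sqrt{n}$. Your proposal also never addresses how the individually recovered cuts are made mutually consistent, which is what the latest-cut machinery and the GH-equivalent partition-tree framework are for; without it, a collection of per-edge cuts does not compile into a \GHT.
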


Like the Gomory-Hu algorithm (and others), our new algorithm relies on queries to \MF on contracted graphs.
While the number of queries is still $\Omega(n)$, the gains come from reducing the sizes of the contracted graphs more massively and rapidly, making the total time proportional to only $\tilde{O}(\sqrt{n})$ queries.
The key ingredient is a new \expanders procedure that leads the algorithm to make \MF queries on ``the right'' pairs, guaranteeing that large portions of the graph get contracted. 
In fact, since these $\Omega(n)$ queries can be done in parallel inside each of $O(\log n)$ levels of recursion, they can be simulated using only $\tilde{O}(\sqrt{n})$ queries on graphs with $n$ nodes (by simply taking disjoint unions of the smaller graphs and connecting all sources/sinks into one source/sink, see e.g. \cite{LP20}). 
However, due to the contractions, these queries need to be performed on weighted graphs.

\paragraph{\APMF}
Once a \GHT is computed many problems become easy.
Notably, the first algorithm for computing the \EC of a graph, i.e. \GMC, computes the \GHT and returns the smallest edge \cite{GH61}. Faster near-linear time algorithms are known by now \cite{Gabow95,KS96,Karger00,KThorup19,henzinger2020local,ghaffari2020faster,MN20,GMW20}, but state-of-the-art algorithms for computing the maximum-flow value between all ${n \choose 2}$ pairs of nodes, called the \APMF problem,%
\footnote{Since we are in the setting of simple graphs, a possible name for the problem is \APEC. However, \EC is often used for the size of the \emph{global minimum cut} in simple graphs, not the $st$-version.
}
still rely on the reduction to a \GHT.
Indeed, to compute the maximum flow,
simply identify the lightest edge on the path between each pair in the tree,
which takes only $\tilde{O}(1)$ time per pair;
the bottleneck is computing the tree. 

The modern tools of fine-grained complexity have been surprisingly unhelpful in establishing a conditional lower bound for \APMF.
The Strong ETH\footnote{The Strong Exponential Time Hypothesis (SETH) of Impagliazzo and Paturi~\cite{ImpaSETH,CIP06} states that $k$-SAT cannot be solved in $(2-\eps)^n$ time for an $\eps>0$ independent of $k$. It has been a popular hardness assumption for proving polynomial lower bounds like $n^{3-o(1)}$ in recent years.} rules out fast algorithms in \emph{directed} graphs (where the \GHT does not exist \cite{HL07}) \cite{AVY18,KT18,A+18}, but it probably cannot give an $n^{2+\varepsilon}$ 
lower bound for simple graphs \cite{AKT20}.\footnote{Reducing $k$-SAT to \APMF would give a faster co-nondeterministic algorithm for $k$-SAT and refute the so-called Nondeterministic SETH \cite{carmosino2016nondeterministic}.}
Moreover, all natural attempts for placing it in the \APSP 
subcubic-hardness class \cite{VW18} had failed, despite the fact that \MF feels harder than \SPath, e.g., single-pair in $\tilde{O}(n^2)$ time is an easy coding-interview question for \SPath but certainly not for \MF.
It turns out that subcubic time is indeed possible, at least in unweighted graphs.

\begin{corollary}
There is a randomized algorithm, with success probability $1-1/\poly(n)$, that
solves \APMF in simple graphs in $\tilde{O}(n^{2.5})$ time.
\end{corollary}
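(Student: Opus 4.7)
The plan is to derive this corollary directly from the main theorem: build a \GHT of $G$, then read off all pairwise max-flow values from the tree.

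First, I would invoke the main theorem to compute a cut-equivalent tree $T$ of the simple graph $G$ in $\tilde{O}(n^{2.5})$ time, with failure probability $1/\poly(n)$. By the defining property of the \GHT, for every pair $s,t\in V(G)$ the minimum $(s,t)$-cut value in $G$ equals the minimum weight of any edge on the unique $s$-$t$ path in $T$. Thus the entire answer to \APMF is determined by $n-1$ edge weights and a collection of tree-path bottleneck queries.

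Next, I would preprocess $T$ for fast bottleneck-on-path queries. A clean way is: root $T$ arbitrarily, build an $O(n)$-preprocessing / $O(1)$-query \LCA structure (e.g.\ Bender--Farach-Colton), and then build a binary-lifting table storing, for each node $v$ and each $i=0,1,\dots,\lceil\log n\rceil$, the minimum edge weight on the path from $v$ to its $2^i$-th ancestor. This takes $\tilde{O}(n)$ time and space. A query between $s$ and $t$ is answered by computing $u=\mathrm{LCA}(s,t)$ and then combining the two upward sweeps $s\to u$ and $t\to u$ using the sparse table, in $O(\log n)$ time. Iterating over all $\binom{n}{2}$ pairs yields total post-processing time $\tilde{O}(n^2)$.

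Adding the two phases gives the claimed $\tilde{O}(n^{2.5}) + \tilde{O}(n^2) = \tilde{O}(n^{2.5})$ bound, and the success probability is inherited from the \GHT construction. The only conceptual point to note is that the output of \APMF has size $\Theta(n^2)$, so the reduction must amortize well across pairs; the \GHT does exactly this by compressing all pairwise min-cut values into $n-1$ edge weights from which each pair's answer is extracted in $\tilde{O}(1)$ time. All real difficulty lies in the underlying theorem, not in this corollary.
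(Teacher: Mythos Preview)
Your proposal is correct and follows essentially the same approach as the paper: compute the \GHT in $\tilde{O}(n^{2.5})$ time via the main theorem, then answer each of the $\binom{n}{2}$ bottleneck-edge queries on the tree in $\tilde{O}(1)$ time, for an additional $\tilde{O}(n^2)$. The paper states this reduction in one sentence without spelling out the data structure; your explicit LCA plus binary-lifting implementation is a perfectly valid instantiation of the claimed $\tilde{O}(1)$-per-pair query.
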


For both \APMF and \APSP, cubic time is a natural barrier ---
shouldn't each answer take $\Omega(n)$ time on average? 
About thirty years ago, Seidel's algorithm \cite{seidel1995all} broke this barrier for \APSP in unweighted graphs using fast matrix multiplication with an $O(n^{\omega})$ upper bound; back then $\omega$ was $2.375477$ \cite{CW87} and today it is $2.37286$ \cite{AlmanW20}.
Our new $\tilde{O}(n^{2.5})$ algorithm for \APMF breaks the barrier using a different set of techniques, mainly expander decompositions, randomized hitting sets, and fast \MF algorithms (that in turn rely on other methods such as continuous optimization).
Interestingly, while designing an $O(n^{3-\eps})$ algorithm with ``combinatorial'' methods has been elusive for \APSP (see e.g. \cite{BansalW09,VW18,Chan12,Chan15,AbboudW14,ABV15,Yu18}), it can already be accomplished for \APMF. Our upper bound when using only combinatorial methods is $\tilde{O}(n^{2 \frac{5}{6}})$.

\paragraph{More Bounds}

Since the first version of this paper came out, a new algorithm has been published for solving a single \MF query in weighted graphs in time $\tilde{O}(m+n^{1.5})$ \cite{linearflow21}, improving over the previous bound $\tilde{O}(m\sqrt{n})$ \cite{LS14}, where throughout $m=|E(G)|$.
This development directly has led to the improvement of our bound for \GHT from $\tilde{O}(n^{2.75})$ in the previous version to $\tilde{O}(n^{2.5})$ in the current version.
In fact, our previous version had already included an $\tilde{O}(n^{2.5})$-time algorithm assuming a hypothetical $\tilde{O}(m)$-time \MF algorithm; it turns out that the additive term $+n^{1.5}$ in the new algorithm \cite{linearflow21} is inconsequential for our result.

The running time of our new algorithm improves beyond $n^{2.5}$ if the number of edges is below $n^2$. The precise bound is $\tilde{O}(n^{3/2}m^{1/2})$. However, for density below $n^{1.5}$, a previously known (and very different) algorithm \cite{AKT20} is faster; its time bound is $\tilde{O}(m^{3/2})$ under the hypothesis that \MF is solved in near-linear time, and using the new algorithm \cite{linearflow21} instead gives a slightly worse bound. 
The previous bounds and the state of the art for all density regimes are summarized in Figure~\ref{Bound:Existing:Algebraic:Theorem}.
It is also summarized in the following theorem together with the best combinatorial algorithms. In each item, the rightmost term is new, and the other terms (which are better for sparse graphs) are by previous work \cite{AKT20} (when plugging in \cite{linearflow21}).

\begin{theorem}
\label{mainthm}
There is a randomized algorithm, with success probability $1-1/\poly(n)$, that constructs a \GHT of a simple graph with $n$ nodes and $m$ edges in time:
\begin{enumerate}
\item\label{Bound:Existing:Algebraic:Theorem} $\tO\left(\min\set{m^{3/2}n^{1/6}, mn^{3/4}, \ n^{3/2}m^{1/2}} \right)$ using existing \MF algorithms,
\item\label{Bound:Existing:Combinatorial:Theorem}
$\tO\left( \min\set{m^{3/2}n^{1/3},\ n^{11/6}m^{1/2}} \right)$ using existing combinatorial \MF algorithms.
\end{enumerate}
\end{theorem}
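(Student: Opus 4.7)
The plan is to prove the theorem by first establishing the main technical contribution of the paper—a reduction from \GHT construction to $\tilde{O}(\sqrt{n})$ \MF queries on $n$-node weighted graphs—and then combining this reduction with existing \MF solvers and with the previous algorithm of \cite{AKT20}. The reduction itself is the novel content; the rest is essentially plug-and-play.

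For the reduction, I would design \expanders as an $O(\log n)$-level recursion. At each level: (i) compute an expander decomposition of the current (possibly already contracted) graph; (ii) invoke the \pC subroutine on each expander piece, combined with a hitting-set argument, to identify $\Omega(n)$ source/sink pairs whose \MF queries are guaranteed to contract large portions of the graph; (iii) batch these pairs across vertex-disjoint pieces by forming a single disjoint-union graph on $O(n)$ nodes with merged super-source/super-sink, reducing the $\Omega(n)$ per-level queries to $\tilde{O}(\sqrt{n})$ \MF calls on $n$-node graphs. Correctness follows from the standard Gomory--Hu contraction property (a min $(s,t)$-cut still separates $s$ from $t$ after contracting sides of other min cuts that do not cross it), and high success probability follows by a union bound over the $O(\log n)$ levels.

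For Item~\ref{Bound:Existing:Algebraic:Theorem}, I would plug each \MF call into the $\tilde{O}(m+n^{3/2})$ algorithm of \cite{linearflow21}, obtaining
\[
\tilde{O}\!\left(\sqrt{n}\,(m+n^{3/2})\right) \;=\; \tilde{O}(m\sqrt{n}+n^2) \;=\; \tilde{O}(n^{3/2}m^{1/2}),
\]
where the last equality uses $n\le m\le n^2$. Taking the minimum with $\tilde{O}(m^{3/2}n^{1/6})$ and $\tilde{O}(mn^{3/4})$ from \cite{AKT20} (which are sharper in the sparse regime) yields Item~\ref{Bound:Existing:Algebraic:Theorem}. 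For Item~\ref{Bound:Existing:Combinatorial:Theorem}, I would instantiate the \MF calls with Goldberg--Rao; since a direct substitution into the $\sqrt{n}$-call bound is suboptimal for dense graphs, the analysis has to re-balance the expander-decomposition parameter against the \MF cost to extract $\tilde{O}(n^{11/6}m^{1/2})$. Taking the minimum with $\tilde{O}(m^{3/2}n^{1/3})$ from \cite{AKT20} then gives Item~\ref{Bound:Existing:Combinatorial:Theorem}.

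The hardest step is the $\tilde{O}(\sqrt{n})$-query reduction itself: showing that the expander decomposition guarantees that each invocation of \pC produces only polylogarithmically many \MF queries per contractible piece, that each such call indeed shrinks a constant fraction of the ``active'' graph so that $O(\log n)$ recursive levels suffice, and that all the randomized steps (hitting set, sampling within \pC, and the expander decomposition itself) succeed simultaneously with probability $1-1/\poly(n)$. A secondary subtlety is the combinatorial bound in Item~\ref{Bound:Existing:Combinatorial:Theorem}, which does not follow by a black-box substitution of Goldberg--Rao into the $\sqrt{n}$-call template and instead requires re-opening the \expanders parameter trade-off to match the $n^{2/3}$ scaling of the weighted combinatorial \MF.
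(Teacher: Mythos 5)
Your proposal reproduces the paper's high-level architecture (expander decompositions, the \pC procedure, random hitting sets, an $O(\log n)$-depth recursion, then plugging in \cite{linearflow21}, \cite{GR98} and the sparse-regime bounds of \cite{AKT20}), but the step you yourself identify as the hardest one --- the reduction itself --- is asserted rather than proved, and the sketch you give of it would not work as stated. Saying that \pC plus a hitting set ``identifies $\Omega(n)$ pairs whose \MF queries are guaranteed to contract large portions of the graph'' skips exactly the mechanisms the paper needs: (i) a pivot-selection step (sampling $\tilde{O}(n/r)$ candidates and building a partial tree on them) to guarantee a centroid-like pivot whose cut-membership tree has depth $O(r)$; (ii) a three-way case analysis per expander (small, large ``righty'', large ``lefty''), where \pC helps only in the righty case --- for nodes whose side of the cut is large, \pC can return \emph{fake} small cuts, which is precisely why the lefty case needs the separate argument querying the top-$O(r)$ nodes by current estimate $c'$, and the small case needs the degree-counting argument that uses simplicity of $G$; (iii) consistency of the collected cuts, which fails for arbitrary minimum cuts because the \GHT is not unique and is restored only by working with \emph{latest} minimum cuts; and (iv) the fact that low-degree (contracted) nodes inside expanders break the volume arguments used in the \GMC literature, forcing the $O(\sqrt{w})$-type savings. ``Correctness follows from the standard Gomory--Hu contraction property'' does not address any of (i)--(iv), and the $O(\log n)$-level progress claim needs the done/undone potential argument over the cut-membership tree, not just a union bound.

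There is also a quantitative sign that your framing does not match what is actually proven: if the algorithm really were $\tilde{O}(\sqrt{n})$ \MF calls on $n$-node, $m$-edge graphs plus negligible overhead, Item~\ref{Bound:Existing:Algebraic:Theorem} would improve to $\tilde{O}(m\sqrt{n}+n^2)$, which is strictly better than $\tilde{O}(n^{3/2}m^{1/2})$ for $m=o(n^2)$; your derivation only lands on the stated bound because you weaken $m\sqrt{n}+n^2$ to $n^{3/2}m^{1/2}$ at the end. In the paper the $\tilde{O}(\sqrt{n})$-query statement is an informal gloss (via batching disjoint auxiliary graphs), and the real time bound comes from a direct accounting: the $k$-partial tree of \cite{BHKP07} ($\tilde{O}(nk^2)$), the $\tilde{O}(n/r)$ pivot-sampling queries per super-node, and the \expanders procedure whose query counts are $\tilde{O}(N/(w\phi_w))$, $\tilde{O}(Nr/w)$ and $\tilde{O}(\beta'\phi_w/w)$ per scale $w$, each run on Nagamochi--Ibaraki sparsifiers with $O(nw)$ edges and with the appropriate choice of \MF subroutine (Karger--Levine for bounded flow values, \cite{linearflow21} or Goldberg--Rao otherwise). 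The combinatorial bound of Item~\ref{Bound:Existing:Combinatorial:Theorem} then comes from re-optimizing $r$, $k$ and $\phi_w$ ($\phi_w=\sqrt{M}/(N^{1/6}w)$), which you acknowledge but do not carry out. So the proposal follows the paper's route in outline, but the core lemma and the parameter balancing that constitute the actual proof are missing.
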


\begin{figure}[ht]
  \begin{center}
    \includegraphics[width=3.3in]{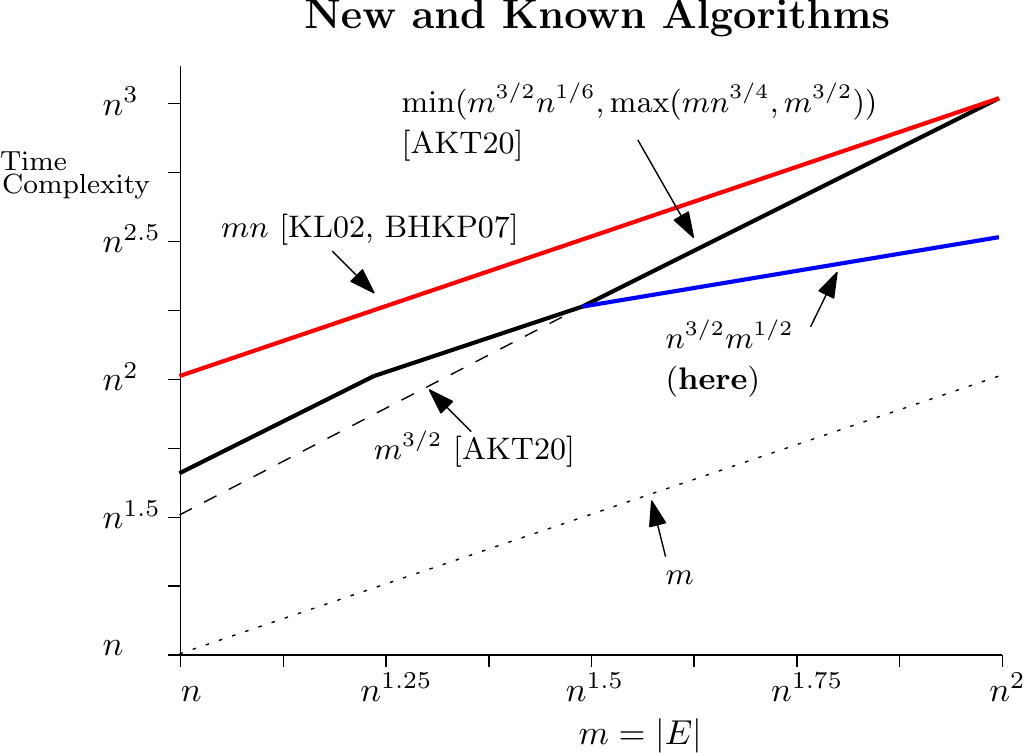}
  \end{center}
  \caption{State-of-the-art time bounds for constructing a \GHT of simple graphs.
    The dotted line is the best one can hope for (as no super-linear lower bound is known).
    The dashed line represents the $m^{3/2+o(1)}$-time algorithm~\cite{AKT20} that assumes a near-linear time \MF algorithm.
    }
    \label{Figs:Bounds}
\end{figure}

\subsection{Previous Algorithms}

Over the years, the time complexity of constructing a \GHT has decreased several times due to improvements in \MF algorithms, but there have also been a few conceptually new algorithms. 
Gusfield~\cite{Gusfield90} presented a modification of the Gomory-Hu algorithm in which all the $n-1$ queries
are made on the original graph $G$ (instead of on contracted graphs). 
Bhalgat, Hariharan, Kavitha, and Panigrahi~\cite{BHKP07} designed an  $\tilde{O}(mn)$-time algorithm utilizing a tree packing approach~\cite{Gabow95,Edmonds70} that has also been used in other algorithms for cut-equivalent trees~\cite{Cole03,HKP07,AKT20}.
In particular, they designed an $O(mk)$-time algorithm for constructing a $k$-partial \GHT, which preserves the minimum cuts if their size is up to $k$ (see \cite{Panigrahi16} and the full version \cite{BCHKP08}).
A simple high-degree/low-degree strategy is helpful in sparse graphs:
only $\sqrt{m}$ nodes can have degree (and therefore outgoing flow) above $\sqrt{m}$, thus a $\sqrt{m}$-partial tree plus $\sqrt{m}$ \MF queries are sufficient,
which takes $O(m^{3/2})$ time if \MF is solved in linear time.
Using the current \MF algorithms, this strategy results in the time bound $\tO(\min\set{m^{3/2} n^{1/6}, \max\set{mn^{3/4},m^{3/2}}})$ \cite{AKT20}. 
Additionally, two recent algorithms accelerate the Gomory-Hu algorithm by making multiple steps at once to achieve $\tilde{O}(m)$ time, one requires nondeterminism \cite{AKT20} and the other requires a (currently non-existent) fast minimum-cut data structure \cite{AKT20_b}.
The new algorithm uses a similar approach.

In weighted graphs, the bound for \GHT is still $n$ times \MF and therefore cubic using the very recent $\tilde{O}(m+n^{1.5})$ algorithm for \MF \cite{linearflow21}.
If the graph is sparse enough and the largest weight $U$ is small,
one can use algorithms~\cite{madry2016computing,LS19,LS20FOCS}
that run in time $\tilde{O}( \min\{m^{10/7}U^{1/7}, m^{11/8}U^{1/4},$ $m^{4/3}U^{1/3}\} )$
that might give a better time bound.
It is likely that the new techniques will lead to faster algorithms for this case as well, although new ideas are required.

Further discussion of previous algorithms can be found at the Encyclopedia of Algorithms~\cite{Panigrahi16}.

\subsection{Technical Overview}
\label{overview}

This section attempts to explain the new algorithm in its entirety in a coherent and concise way.
This is challenging because the algorithm is complicated both conceptually, relying on several different tools, and technically, involving further complications in each ingredient.
The main text gives the overall algorithm while making nonobvious simplifying assumptions, and meanwhile the text inside the boxes discusses how these assumptions are lifted or how to handle other technical issues that arise.
The reader is advised to skip the boxes in the first read.

\medskip

Given a graph $G=(V,E)$ the goal is to compute its cut-equivalent tree $T$.
The new algorithm, like the Gomory-Hu algorithm builds $T$ recursively, by maintaining an intermediate (partial) tree $T'$ that gets \emph{refined} with each recursive call.
The nodes of $T'$ are subsets of $V$ and are called super-nodes.
Each recursion step considers a super-node of $T'$ and splits-off parts of it.
In the beginning, there is a single super-node and $T'=\{V\}$.
If the minimum $(s,t)$-cut $(S,V\setminus S)$ for a pair $s,t\in V$ is found, then $T'$ is refined by splitting $V$ into two super-nodes $S$ and $V \setminus S$ and by connecting them with an edge of weight $\lambda_{s,t}$ the connectivity between $s$ and $t$.
Then, the main observation is that when refining $S$ the entire super-node $V\setminus S$ can be contracted into a single node; this does not distort the connectivities for pairs in $S$ and it ensures that the refinements of $S$ and of $V \setminus S$ are consistent with one another (and can therefore be combined into the final tree $T$).
More generally, consider an intermediate tree $T'$ whose super-nodes $V_1,\ldots,V_l$ form a partition $V=V_1\sqcup\cdots\sqcup V_l$. 
The algorithm then refines each super-node $V_i$ by operating on an \emph{auxiliary graph} $G_i$ that is obtained from $G$ by contracting nodes outside $V_i$ in a manner informed by $T'$: each connected component of $T'$ after the removal of $V_i$ is contracted into a single node.

Now, consider such super-node $V_i$ in an intermediate tree $T'$ along with its auxiliary graph $G_i$,%
\footnote{For most of this overview, it is safe to think of the very first iteration where $V_i=V$ and $G_i=G$. Later on we will point out the complications that arise due to the existence of contracted nodes $V(G_i) \setminus V_i$.}
and let $T_i$ be its (unknown) cut-equivalent tree,
i.e., the subtree of $T$ induced on $V_i$.
The goal is to refine $V_i$.
Gomory and Hu took an arbitrary pair $s,t \in V_i$ and used an $s,t$-\MF query to get a cut. (For a more detailed exposition of the Gomory-Hu algorithm, see Section~\ref{prelim:GH} and Figure~\ref{Figs:GH_trees_3}).
Another natural approach is to take the global minimum cut of $V_i$.
Either way, the depth of the recursion could be $n-1$ if each refinement only splits-off one node from $V_i$.%
\footnote{Another natural approach is to look for a \emph{balanced} minimum cut. It is not clear that such a cut exists or that it can be computed efficiently. However, it is fair to say that part of the new algorithm uses a similar approach.}
The new algorithm aims to finish within recursion depth $O(\log{n})$, and the strategy is to refine $V_i$ into multiple disjoint minimum cuts $V = V_{i,1} \sqcup \cdots \sqcup V_{i,k}$ at once, where $|V_{i,j}| \leq |V_i|/2$ ensuring that the depth of the recursion is logarithmic.

The idea is to pick a pivot node $p \in V_i$ (can be thought of as a root for $T_i$) and to compute \MF between $p$ and every node in $V_i$.
That is, for each node $v \in V_i$, the algorithm computes a minimum $(p,v)$-cut $(C_v,V_i\setminus C_v)$ where $v \in C_v$ and $p \in V \setminus C_v$.
While this gives a lot of information, it is not necessarily sufficient for computing $T_i$ because these cuts may not determine an optimal cut for other pairs in $V_i$.%
\footnote{An extreme scenario is if for every $v \in V_i$
  the minimum $(p,v)$-cut is $(\{p\},V_i \setminus \{p\})$; 
  clearly, this information is not sufficient for knowing the whole tree.
  But this will not happen if we choose a good pivot $p$. 
}
Still, it allows us to make progress by splitting-off cuts $V_{i,j}$ from $V_i$ with $|V_{i,j}|\leq |V_i|/2$ such that $V_{i,j}=C_v$ for some $v\in V_i$ and can therefore be safely used to refine $V_i$.
This approach indeed works if $p$ happens to be a \emph{good} pivot,
which means that it is centroid-like in $T_i$ in the sense that most cuts $C_v$ have less than half the nodes.
Moreover, for the correctness of the algorithm, a good pivot must satisfy that the ``depth'' of $T_i$ from $p$ is $O(\sqrt{n})$.
It turns out that both requirements can be accomplished.

\defproblem{Complication 1: How do we get a good pivot?}
{
A recent nondeterministic algorithm \cite{AKT20} chooses a pivot by guessing the centroid (there are other, more crucial uses for guessing in that algorithm).
An NC algorithm of Anari and Vazirani \cite{AnariV20} for planar graphs tries all nodes as pivots in parallel.
  Bhalgat et al. \cite{BHKP07} and another recent work \cite{AKT20_b} choose a pivot at random in each iteration and argue that the recursion depth is bounded by $O(\log n)$ with high probability.
The new algorithm takes a similar approach but requires more care, because now a bad pivot not only slows down progress but it could lead to wrong cuts.
For the correctness of what follows, a good pivot must also satisfy that a certain notion of depth of $T_i$ when rooted at $p$ is bounded by $O(\sqrt{n})$.
And it must hold, with high probability, for every single iteration.
Thus, to decrease the failure probability from constant to $1/\poly(n)$, the algorithm chooses a random set $S$ of $\tilde{O}(\sqrt{n})$ candidate pivots,
computes a refinement of $T'$ by following the standard Gomory-Hu algorithm while only picking pairs from $S$ using $|S|-1$ queries to \MF,
and then picks the pivot $p$ to be the node in $S$ whose component in the refinement is largest.
It follows that $p$ is both centroid-like ($|C_v| \leq |V_i|/2$ for most $v\in V_i$) and its component (after this refinement) has a small depth. 
Notably, derandomizing this part would lead to a deterministic algorithm with $n^{1-\eps}$ queries to \MF.
}

At this point, the algorithm has a good pivot $p$ for $G_i$ and the goal is to compute the minimum $(p,v)$-cut for all nodes $v \in V_i$.
To focus this exposition on the novel aspects of the new algorithm, let us make a few simplifying assumptions.
First, assume that there is a unique \GHT $T$ for $G$ (and therefore there is a unique $T_i$ for $G_i$).

\defproblem{Complication 2: Operating as if there is a unique tree.}
{
A standard way to make the \GHT unique is to perturb edge weights in $G$ by small $1/\poly(m)$ additive terms to ensure that all minimum cuts are unique.
However, this would prevent the algorithm from using known \MF algorithm for unweighted graphs and hurt the running time, so it cannot  be done in our real algorithm.
The uniqueness of the tree is important both for the analysis, as one can safely talk about \emph{the} minimum cut $C_v$,
and for the correctness --
what guarantees that cuts $C_v$ for different $v \in V_i$ are all consistent with a single \GHT?
Our real algorithm escapes these issues by working with \emph{latest} minimum cuts, i.e., a minimum $(p,v)$-cut $(C_v,V\setminus C_v)$ that has the smallest possible $|C_v|$. 
A similar approach was used by Bhalgat et al. \cite{BHKP07} and others \cite{BGK20esa}. See Section~\ref{sec:latest} for more background on latest cuts.
}

The second simplifying assumption is that the weights in $T_i$ are decreasing along any path from the root $p$.
As a result, the minimum $(p,v)$-cut for every node $v\in V_i$ is exactly its subtree; this greatly simplifies describing and analyzing the cuts with respect to the tree (see Figure~\ref{Figs:Expanders}).

\defproblem{Complication 3: Cut-membership trees.}
{
In general, the lightest edge on the path from a node $v\in V_i$ to $p$ in $T_i$ is not necessarily the first edge (right ``above'' $v$). 
In other words, the minimum $(p,v)$-cut $C_v$, whose nodes we call \textit{cut-members} of $v$, could be a strict superset of $v$'s subtree in $T_i$,
which complicates the analysis of subsequent ingredients, as they rely on bounding the number of cut-members of nodes. 
A convenient tool for reasoning about this is the cut-membership tree $\TG_p$ with respect to $p$ \cite[Section $3$]{AKT20_b}. This is a coarsening of $T_i$ where all nodes $B \subseteq V_i$ whose minimum cut to $p$ is the same are merged into one bag
\ifprocs
.
\else
(see Figure~\ref{Figs:Tzeta}).
\fi
Importantly, it is still a tree and it satisfies the assumption of decreasing weights, meaning that the minimum cuts are indeed always subtrees.
All of the analysis is carried on with $\TG_p$ rather than $T_i$.
  }

The algorithm now creates for each node $v \in V_i$ 
an estimate $c'(v)$ for the connectivity $\lambda_{p,v}$ between $p$ and $v$;
it is initialized to $c'(v)=\deg(v)$, which is always an upper bound.
Then, the algorithm repeats procedure \expanders described below $O(\log{n})$ times; each iteration finds new $(p,v)$-cuts and updates these estimates by  keeping for each $v$ the minimum value seen so far. 
Since each cut is an upper bound on the connectivity, the estimates $c'(v)$ never decrease below $\lambda_{p,v}$, and with high probability they are eventually tight for all nodes.
A node $v$ is called \emph{done} if $c'(v)=\lambda_{p,v}$ and \emph{undone} otherwise.

In the beginning a node is done if and only if it is a leaf in $T_i$. (Imagine a tree with green leaves and red internal nodes.)
Each iteration of the \expanders procedure is makes a node done if its subtree contains at most $\sqrt{n}$ undone nodes, with high probability. (At each step, a red node becomes green if its subtree contains at most $\sqrt{n}$ red nodes.)
How many iterations until all nodes are done (green)?
Using the fact that $p$ is a good pivot and the depth of $T_i$ is $O(\sqrt{n})$, it can be shown that with high probability $O(\log{n})$ iterations suffice.

\paragraph{ Procedure \expanders}
For each value $w=2^{i'}$ for $i'=0,1,\dots,\log n$ there is a sub-procedure
that aims to find $C_v$ for all nodes $v \in V_i$ such that $w \leq \lambda_{p,v}<2w$. 
Fix such a node $v$ that is undone; then $\deg(v)\geq c'(v)>\lambda_{p,v}\geq w$.
Thus, only nodes of degree $>w$ are targeted by the sub-procedure so we call them \emph{relevant}.

In a preprocessing phase, for each $w=2^{i'}$,
the algorithm prepares an expander decomposition $V = H^w_1 \sqcup \cdots \sqcup H^w_h$ of the entire graph $G$ with expansion parameter $\phi_w=1/\sqrt{w}$.
Each component $H^w_j$ is a $\phi_w$-expander, meaning that there is no sparse cut inside it, and the total number of edges outside the expanders is bounded $\tfrac12 \sum_j \delta(H^w_j) = O(|E(G)|\cdot \phi_w\cdot \log^3{n})$.
Efficient and simple algorithms for computing this decomposition exist, e.g.~\cite{SW19}.

\defproblem{Complication 4: Lower bounding $w$.}
{
The expander decomposition algorithm requires the parameter $\phi$ to be $O(1/\log{n})$, and furthermore it must be $O(1/\log^3 n)$ for the outside edges bound to be meaningful. 

When $\phi_w=1/\sqrt{w}$ this is only an issue for very small $w$ (and can be resolved by decreasing $\phi_w$ by a $\log^3 n$ factor).
Our combinatorial $\tilde{O}(n^{2\frac{5}{6}})$ time algorithm, however, uses $\phi_w = n^{1/3}/w$, thus $w$ must be at least $\Omega(n^{1/3}\log^3 n)$.
For this reason, the very first ``refinement'' step of the algorithm is to compute a $k$-partial tree with $k=O(n^{1/3}\log^3 n)$ using the $O(nk^2)$-time algorithm of Bhalgat et al. \cite{BHKP07} (based on tree packings).
This gives a partial tree $T'$ in which two nodes $u,v$ are in separate super-nodes if and only if their connectivity is at most $k$.
Afterwards, values $w<n^{1/3}\log^3 n$ can be ignored as pairs in the same super-node $u,v \in V_i$ are guaranteed to have $\lambda_{u,v}>k$.
  }

Let $H_v$ be the expander containing $v$; it could be of one of three kinds, each kind is solved by a different method (even though neither $v$ nor $H_v$ is known to the algorithm). Let $L = C_v \cap H_v$ be the piece of $H_v$ that falls inside $v$'s subtree $C_v$ and let $R = (V \setminus C_v) \cap H_v$ be the remainder of $H_v$; we call them the left and right parts of $H$, respectively, see Figure~\ref{Figs:Expanders}.

\begin{enumerate}
\item\label{case1} Small: $H_v$ contains $|H_v|< w/8$ nodes.
\item\label{case2} Large righty: $|H_v|\geq w/8$ but contains only $|L| \leq 2\sqrt{w}$ relevant nodes from $C_v$. 
\item\label{case3} Large lefty: $|H_v|\geq w/8$ but contains only $|R| \leq 2\sqrt{w}$ relevant nodes from $V \setminus C_v$. 
\end{enumerate}

A key observation is that $H_v$ cannot have both $|L|,|R| > 2\sqrt{w}$ due to the fact that $H_v$ is a $1/\sqrt{w}$-expander: the cut $(L,R)$ in $H_v$ has at most $\lambda_{p,v}<2w$ edges and thus its volume $\min (\vol (L), \vol(R))\geq \min (|L| w, |R|w)$ cannot be more than $2w\sqrt{w}$. 
Therefore, a large expander must either be lefty or righty; it cannot be balanced.
This analysis assumes that all nodes in $L$ and $R$ are relevant, but in general they could contain many low-degree nodes.
Handling this gap is perhaps the most interesting complication of this paper, distinguishing it from recent applications of expander-based methods for \GMC where low-degree nodes are not an issue: if the global minimum cut has value $w$ then all nodes must have degree $\geq w$.

\defproblem{Complication 5: Low-degree nodes in expanders.}
{
Our use of expander-decompositions is nonstandard but is reminiscent of the way they are used in two recent algorithms for \GMC (which is equivalent to finding the smallest edge in the Gomory-Hu Tree).
The Kawarabayashi-Thorup~\cite{KThorup19} technique (as described by Saranurak~\cite{saranurak2020simple}) takes each expander, \emph{trims} and \emph{shaves} it to make sure it is exclusively on one side of the cut, and then contracts it. 
It is shown that most edges remain inside the expanders (and are therefore contracted away) despite the trimmings.
The algorithm of Li and Panigrahy~\cite{LP20} maintains a set of candidates $U$ containing nodes on each side of the cut and then uses the expanders to iteratively reduce its size by half. 
Some nodes of each expander are kept in $U$, and it is argued that there must be expanders that are mostly on the left and expanders that are mostly on the right.

Unfortunately, neither of these approaches seems to work when searching for all minimum $(p,v)$-cuts rather than for a single global minimum cut.
At their bottom, both rely on the following observation: if $L$, the left side of $H$, contains $\geq \ell$ nodes that are committed to $H$, i.e., most of their edges stay inside $H$, then the volume of $L$ is $\Omega(\ell w)$.
This is because when the global minimum cut has value $w$, all nodes must have degree (or capacitated degree in the weighted case) at least $w$.
However in our setting, the minimum $(p,v)$-cut $C_v$ does not have to be minimal for any $u \in C_v$ except $v$, and nodes on the left could have arbitrarily small degrees (see Section~\ref{sec:clcr} for an extreme example).

Consequently, the arguments in this paper are a little different and only lead to $O(\sqrt{w})$ savings rather than $\Omega(w)$.
If there was a magical way to get rid of nodes of degree $<w$, then a near-linear time algorithm could follow.
The $k$-partial tree (from Complication 4) indeed only leaves high-degree nodes in $V_i$ but the contracted nodes $V(G_i)\setminus V_i$ could have arbitrary degrees.
Instead, the algorithm and the analysis reason about the subsets  $\hat{H},\hat{L},\hat{R}$ of $H,L,R$ containing high degree nodes, as well as $H,L,R$ themselves.
With some care, the arguments presented in this section can indeed be made to work despite the gap.
 }

\begin{figure}[ht]
  \begin{center}
    \includegraphics[width=4.3in]{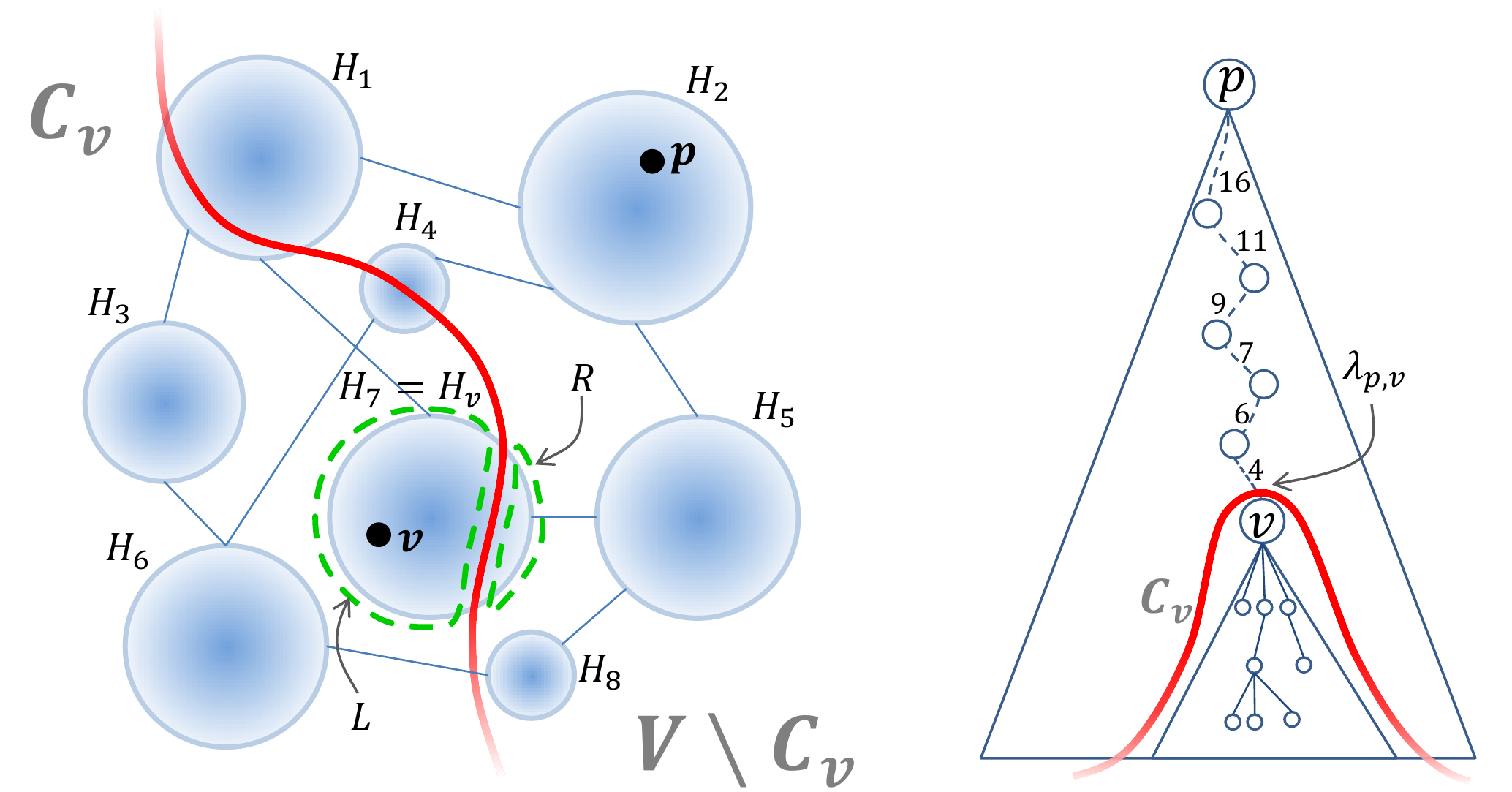}
  \end{center}
  \caption{
    An expander decomposition of the graph $G$ (on left), and the (unknown) cut-equivalent tree $T$ rooted at pivot $p$ (on right).
    The unique minimum $(p,v)$-cut $(C_v, V \setminus C_v)$ for an arbitrary node $v$ is depicted by a red curve in both figures.
    The algorithm must find $C_v$. 
    The edges in $T$ are assumed to be decreasing, thus $C_v$ is exactly $v$'s subtree with an edge of weight $\lambda_{p,v}$ above it.
    The expander decomposition of $G$ is $\{H_i\}_{i=1}^8$,
    and the expander containing $v$ is $H_v=H_7$.
    The sets $L$ and $R$ are depicted by dashed green lines;
    at least one of them must be small.
    The three kinds of expanders can be seen: small ($H_4,H_8$), large lefty ($H_7,H_3,H_6$), and large righty ($H_1,H_2,H_5$).
    Since $H_v$ is a large lefty expander, the cut $C_v$ should be found when handling \text{\bf Case~\ref{case3}}. 
    }
    \label{Figs:Expanders}
\end{figure}

To handle \text{\bf Case~\ref{case1}} the algorithm simply asks a \MF query between $p$ and each \emph{relevant} node $u$ in each small expander $H$, i.e., $|H|<w/8$.
Clearly, if $v$ happens to be in a small expander it gets queried and its optimal cut is found with probability $1$.
The slightly trickier part is arguing that only $t=O(\sqrt{n})$ queries are performed: any queried node $u$ has degree $\geq w$ (since it is relevant) while a small expander only has $<w/8$ nodes, thus overall $\Omega(tw)$ edges are outside the expanders.
But the expander decomposition guarantees that only $O(|E|/\sqrt{w})$ edges are outside the expanders, and thus $t = O(|E|/w^{1.5})$; in the hardest case where $w=\Omega(n)$ this is $t=O(\sqrt{n})$.
The fact that the auxiliary graph $G_i$ is a multigraph does not matter, because the expander decomposition is for $G$.
This is the only case in the algorithm that requires $G$ to be simple;%
\footnote{Besides, of course, to speed up the \MF queries: since all $\lambda_{s,t}\leq n$, the time bound of Karger and Levine~\cite{KL15} is $\tilde{O}(n^2)$.
}
the argument is similar to \cite[Observation 5]{KThorup19}.

\defproblem{Complication 6: Saving when $w=o(n)$ using Nagamochi-Ibaraki sparsification.}
{
In the hardest case where $w=\Omega(n)$ the number of queries $t=O(|E|/w^{1.5})$ in Case~\ref{case1} is already $O(\sqrt{n})$, but to get the $n^{2.5+o(1)}$ bound in general, the algorithm utilizes two sparsifications.
First, it computes the expander-decomposition for each $w=2^i$ on a sparsifier $G_w$ with $|E_w|=O(nw)$ rather than on $G$.
The sparsifier of Nagamochi and Ibaraki~\cite{NI92} with parameter $k$ has $O(nk)$ edges and ensures that all cuts of value up to $k$ are preserved.
This gives a better upper bound on the number of edges outside expanders and leads to $t=O(|E_w|/w^{1.5})=O(n/\sqrt{w})$.
Second, while the number of queries exceeds $\sqrt{n}$ when $w=o(n)$, the algorithm saves by computing each of these queries in $\tilde{O}(nw)$ time rather than $O(m)$ by operating on a sparsifier of the same kind, but now for the auxiliary graph $G_i$, not $G$.
This does not introduce error as the algorithm is only interested in cuts of value $\leq 2w$ and the total time becomes $\tilde{O}(n^2\sqrt{w}) = \tilde{O}(n^{2.5})$.
  }

Handling \text{\bf Case~\ref{case2}} involves a surprising \pC procedure that can \emph{almost} compute single-source all-sinks \MF with $O(\log n)$ queries to \MF.
It is a tricky but lightweight reduction that only involves contracting subsets of nodes.

\begin{lemma}[\cite{LP20}, see also Lemma~\ref{thm:proc_main}]\footnote{In the late stages of writing this paper, we have encountered the very recent work of Li and Panigrahy (FOCS 2020) and found out that they have also discovered this simple yet powerful procedure, naming it \emph{Isolating Cuts Lemma} \cite[Theorem II.2]{LP20}. We have kept our proof in 
\ifprocs
the full version,
\else
Section~\ref{app:proc} (see also Section~\ref{sec:proc}) 
\fi
as it has a different flavor. The usage is similar but different: there, it deterministically finds the global minimum cut when it is unbalanced (has a small side), while here it finds many small subtrees at once.
}
Given a pivot $p$ and a set of terminals $C$, procedure \pC uses $O(\log{n})$ \MF queries and returns an estimate $(p,v)$-cut $S_v$ for each terminal $v\in C$ such that:
for all $v \in C$, if the minimum $(p,v)$-cut $C_v$ satisfies $C_v \cap C = \emptyset$ (i.e., this cut isolates terminal $v$) then $S_v=C_v$.
\end{lemma}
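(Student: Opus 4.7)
The plan is to establish the lemma by a binary-labeling reduction. First I would assign each terminal $v \in C$ a distinct binary label $\ell(v) \in \{0,1\}^{\lceil \log_2 |C| \rceil}$. For each bit position $i \in \{1,\dots,\lceil \log_2 |C|\rceil\}$ and each bit value $b \in \{0,1\}$ the procedure performs a single max-flow computation as follows: letting $A = \{u \in C : \ell(u)_i = b\}$ and $B = C \setminus A$, contract $B \cup \{p\}$ into a super-sink $t$, attach a super-source $s$ by infinite-capacity edges to each $u \in A$, and compute a minimum $(s,t)$-cut $X^{i,b}$ in the resulting contracted graph. This uses $2\lceil \log_2 |C|\rceil = O(\log n)$ max-flow queries, matching the lemma's budget.

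From the min cut $X^{i,b}$ the procedure extracts a candidate isolating cut $Y_u^{i,b}$ for each $u \in A$: the unique minimal subset of $X^{i,b}$ containing $u$ that separates $\{u\}$ from $(A\setminus\{u\})\cup\{t\}$. Operationally, if the source-minimal (``latest'') min cut is used, $X^{i,b}$ decomposes into the disjoint union $\bigsqcup_{u \in A} Y_u^{i,b}$ and these pieces can be read off from the residual graph of a single max-flow; otherwise one local max-flow per source would carve them out. The final output $S_v$ for each $v \in C$ is the candidate with smallest cut value taken over the iterations in which $v \in A$.

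For correctness, suppose $v \in C$ has a minimum $(p,v)$-cut $C_v$ isolating $v$ from every other terminal (reading the hypothesis $C_v \cap C = \emptyset$ as: among $C$, only $v$ lies in $C_v$). Fix any iteration $(i,b)$ with $b = \ell(v)_i$, so $v \in A$. Then $C_v$ is still a valid $(v,t)$-cut in the contracted graph, since $p$ and all other terminals are contracted into $t$ and lie outside $C_v$. Writing $X = X^{i,b}$ and using submodularity of the cut function,
\[
\delta(X \cap C_v) + \delta(X \cup C_v) \leq \delta(X) + \delta(C_v).
\]
Since $X \cup C_v$ is still an $(A,t)$-cut in the contracted graph we have $\delta(X\cup C_v) \geq \delta(X)$, so $\delta(X \cap C_v) \leq \delta(C_v) = \lambda_{p,v}$; conversely $X \cap C_v$ contains $v$ and avoids $p$, so its cut value is at least $\lambda_{p,v}$. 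Hence $X \cap C_v$ attains $\lambda_{p,v}$, and because $C_v$ is taken to be \emph{the} (latest) min $(p,v)$-cut the equality $Y_v^{i,b} = C_v$ follows and $S_v = C_v$ as required.

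The main obstacle will be the extraction step, since minimum $(A,t)$-cuts in the contracted graph are generally not unique and a careless choice can fuse the regions of several sources into a single connected blob that matches no single $C_u$. I would handle this by always computing the source-minimal min cut, which by standard uncrossing arguments produces a \emph{disjoint} decomposition of the source side into optimal per-source isolating regions; with this choice the submodularity calculation above is upgraded from a value bound to the set equality $Y_v^{i,b} = C_v$. A secondary annoyance is verifying that the infinite-capacity super-source edges and the sink contraction do not distort the cut structure, but that is a routine check.
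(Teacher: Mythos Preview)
Your binary-labeling-plus-submodularity argument is essentially the Li--Panigrahi proof that the paper cites but does not reproduce; the paper deliberately gives a different argument. It recursively bipartitions $C$, computes one separating cut per level (via heavy dummy edges from $p$, later replaced by contractions), and proves correctness by analyzing how the Gomory--Hu tree of the auxiliary graph must look, never invoking submodularity directly. The recursion bottoms out at a singleton $\{v\}$, where a direct latest-min-cut query recovers $C_v$. Both routes cost $O(\log|C|)$ queries; yours is the more elementary one, while the paper's ties the procedure to \GHT structure and reuses the ideas from the motivating example in Section~\ref{sec3:motivation}.

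Your submodularity step is correct and is the heart of the matter: combined with $C_v$ being latest, it shows $C_v \subseteq X^{i,b}$ whenever $v$ is isolated and $v\in A$. But the extraction step is a genuine gap. The assertion that the source-minimal min cut $X^{i,b}$ decomposes as $\bigsqcup_{u\in A} Y_u^{i,b}$ readable from one residual graph is unproved, and in general a non-terminal node of $X^{i,b}$ can be residually reachable from several $u\in A$, so no canonical per-source partition falls out of that single computation. Neither \cite{LP20} nor the paper extracts this way. \cite{LP20} intersects \emph{one} cut per bit (so the two sides are complementary) to obtain pairwise-disjoint regions $U_v$, then spends one additional batched \MF over the disjoint contracted graphs $G/(V\setminus U_v)$; your two-queries-per-bit variant would yield regions $\bigcap_i X^{i,\ell(v)_i}$ that need not be disjoint, breaking the size bound for that final call. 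The paper's recursion achieves disjointness structurally and issues the per-$v$ query only at the leaves. Without one of these mechanisms your ``take the smallest-value candidate across iterations'' is undefined, since you have not yet produced any sound candidate $Y_v^{i,b}$ within the $O(\log n)$ query budget (your fallback ``one local max-flow per source'' costs $\Theta(|C|\log|C|)$ queries unless the regions are first made disjoint).
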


This procedure is quite strong in the sense that it solves hard cases with $\tilde{O}(1)$ \MF queries when other approaches require $\Omega(n)$ queries, for instance in cases where all subtrees are small (see Section~\ref{sec:proc}).
But it is also weak in the sense that there is no way to distinguish the correct answers from fake ones. Nodes with large subtrees may never be isolated yet the procedure returns a (fake) small cut that, if used by the algorithm, could lead to a wrong tree.
The canonical such example is discussed in Section~\ref{sec:proc} as it motivates the use of an expanders-based approach.

The \pC procedure is handy in \text{\bf Case~\ref{case2}} where the algorithm takes each expander $H$ with $|H|\geq w/8$ (there are only $O(n/w)=O(1)$ such expanders) and guesses that $H=H_v$ and that it is a large righty expander, i.e., $|L|\leq 2\sqrt{w}$.
In this case, if indeed $v \in H$, it is possible to pick $t=\tilde{O}(\sqrt{w})$ sets of terminals $C_1,\ldots,C_t$ such that at least one of them isolates $v$:
for each $i'\in [t]$ include each node $u \in H$ into $C_{i'}$ with probability $1/\sqrt{w}$.
For a set $C_{i'}$ to isolate $v$, it must (1) pick $v$ and (2) not pick any other node from $L$, which happens with probability $\Omega(1/\sqrt{w})$; therefore, with high probability at least one of the sets $C_1,\ldots,C_t$ isolates $v$.
The \pC procedure is called for each $C_{i'}$, getting new estimates for all $v \in V_i$; a total of $\tilde{O}(\sqrt{w})$ queries for all large expanders.
The intuition is that a large righty expander helps the algorithm pick a good set of terminals that isolates $v$, a highly nontrivial task if $v$'s subtree is large, by focusing the attention on a component that (presumably) only contains few nodes from the subtree.
Indeed, the \pC procedure is not as helpful in \text{\bf Case~\ref{case3}} where $H_v$ is a large lefty expander and $|L|$ can be up to $n$, since randomly chosen terminals from $H$ are unlikely to leave $v$ isolated.

To handle the final \text{\bf Case~\ref{case3}}, the algorithm identifies the set $\text{Top}_{k}(H)$ of the top $k=3\sqrt{n}+1$ nodes in terms of their $c'(u)$ estimate for each large expander $H$, and performs a \MF query for each one. 
Since there are only $O(n/w)=O(1)$ large expanders, only $O(\sqrt{n})$ queries are performed.
Suppose that $v$ is indeed in a large lefty expander $H_v$ with $|R|\leq 2\sqrt{w}$.
The argument is that $v$ must be among the queried nodes unless there are $>\sqrt{n}$ undone nodes $u \in C_v$ in its subtree.%
\footnote{Recall that a node $u$ is undone if $c'(u)> \lambda_{p,u}$ and done if $c'(u)= \lambda_{p,u}$.
}
This is because any done node $u \in C_v$ has $c'(u) = \lambda_{p,u} \leq \lambda_{p,v}$, since $C_v$ is a valid $(p,u)$-cut of value $ \lambda_{p,v}$, while $\lambda_{p,v} < c'(v)$ (otherwise $v$ is already done).
Therefore, any node $u \in H_v$ with $c'(u) \geq c'(v)$ can either be in $R \subseteq V \setminus C_v$ or it could be an undone node in $L \subseteq C_v$.
There are at most $2\sqrt{w}+\sqrt{n}$ such nodes in total; thus $v \in \text{Top}_{k}(H_v)$.
Observe that this argument does not work in \text{\bf Case~\ref{case2}} where $H_v$ is a large righty expander and $|R|$ can be up to $n$, since the nodes in $R$ might have larger connectivity than $v$.

Finally, if $v$ happens to be in a large lefty expander when its subtree contains $>\sqrt{n}$ undone nodes, then the algorithm is not guaranteed to find $C_v$ in this iteration of procedure \expanders.
($v$ is a red node with many red nodes in its subtree.)
Soon enough, within $O(\log n)$ iterations, the algorithm gets to a point where most of $v$'s subtree has become done and then $C_v$ is found.
(The green from the leaves quickly ``infects'' the entire tree.)

\subsection{Related Work}
\label{related_work}

\paragraph{Harder settings}
On the hardness side, the only related lower bounds are for \APMF in the harder settings of directed graphs \cite{AVY18,KT18,A+18} or undirected graphs with node weights \cite{AKT20},
where Gomory-Hu trees cannot even exist, 
because the $\Omega(n^2)$ minimum cuts might all be different~\cite{HL07} (see therein also an interesting exposition of certain false claims made earlier).
In particular, SETH gives an $n^{3-o(1)}$ lower bound for weighted sparse directed graphs \cite{KT18} and the $4$-Clique conjecture gives an $n^{\omega+1-o(1)}$ lower bound for unweighted dense directed graphs \cite{A+18}.
Nontrivial algorithms are known for unweighted directed graphs, with time $O(m^{\omega})$ \cite{CLL13} (fast matrix multiplication techniques have only been helpful in directed graphs so far), and also for special graph classes such as planar \cite{LNSW12} and bounded-treewidth \cite{ACZ98,AKT20_b}.
Moreover, algorithms exist for the case we only care about pairs of nodes whose Max-Flow value is bounded by small $k$~\cite{BHKP07,2ECB,A+18}.
 Generalizations of the \GHT to other cut requirements such as multiway cuts or cuts between groups of nodes have also been studied \cite{Hassin88,Hassin90,Hassin91,Har01,EH05,CKK16}.

\paragraph{Approximations}
Coming up with faster constructions of a \GHT at the cost of approximations has also been of interest, see e.g.~\cite{Panigrahi16}, with only few successful approaches so far.
One approach is to sparsify the graph into $m'=\tO(\eps^{-2} n)$ edges
in randomized $\tO(m)$ time using the algorithm of Benczur and Karger~\cite{BK15} (or its generalizations), and then apply an exact \GHT algorithm on the sparse (but weighted) graph. 
Unfortunately, even when aiming for an approximate tree, each query throughout the Gomory-Hu (or the new) algorithm must be exact (see \cite{Panigrahi16,AKT20_b}).
Therefore, with current \MF algorithms, a $(1+\eps)$-approximate \GHT of unweighted graphs can be constructed in $\tO(\eps^{-2} n^{2.5})$ time.
Using a different approach that produces a flow-equivalent tree (rather than a cut-equivalent tree, meaning that the minimum cuts in the tree, viewed as node bipartitions, might not correspond to minimum cuts in the graph), one can design a $(1+\eps)$-\APMF algorithm that runs in time $\tilde{O}(n^2)$ \cite{AKT20_b}.
Finally, one can use \Raecke's approach to compute a cut-sparsifier tree \cite{Rac02}, which has a stronger requirement (it approximates all cuts of $G$) but can only give polylogarithmic approximation factors.
Its fastest version runs in almost-linear time $m^{1+o(1)}$ and achieves $O(\log^{4}n)$-approximation \cite{RST14}.

\paragraph{Applications and experimental studies}
Cut-equivalent trees have appeared in countless application domains.
One example is the pioneering work of Wu and Leahy~\cite{WL93} in 1993 on image segmentation using \GHT that has evolved into the \emph{graph cuts} paradigm in computer vision.
Another example is in telecommunications where Hu \cite{Hu74} showed that the \GHT is the optimal solution to the \emph{minimum communication spanning tree} problem; consequently there is interest in characterizing which graphs have a \GHT that is a \emph{subgraph} \cite{korte2012combinatorial,Naves18}.
In mathematical optimization, a seminal paper of Padberg and Rao \cite{PR82} uses the \GHT to find odd cuts that are useful for the $b$-matching problem (and that have been used in a breakthrough NC algorithm for perfect matching in planar graphs \cite{AnariV20}).
The question of how the \GHT changes with the graph has arisen in applications such as energy and finance and has been investigated, e.g. \cite{picard1980structure,barth2006revisiting,hartmann2013dynamic}, starting with Elmaghraby in 1964 \cite{elmaghraby1964sensitivity} and up until very recently \cite{BGK20esa}.
Motivated by the need of a scalable algorithm for Gomory-Hu Tree, Akiba et al. \cite{akiba2016cut} have recently introduced a few heuristic ideas for getting a subcubic complexity in social networks and web graphs. 
Earlier, Goldberg and Tsioutsiouliklis \cite{GT01} conducted an experimental comparison of the Gomory-Hu and Gusfield's algorithms.

\paragraph{Expander Decompositions}
A key ingredient of our \GHT algorithm is an expander-decomposition of the graph \cite{KVV04,OV11,OSV12,ST13,SW19,chuzhoy2019deterministic}.
Such decompositions have led to several breakthroughs in algorithms for basic problems in the past decade, e.g. \cite{ST14,KLOS14,NSW17}.
A typical application solves the problem on each expander and then somehow combines the answers, treating each expander as a node. The clique-like nature of each expander and the sparsity of the outer graph lead to gains in efficiency.
This approach does not seem helpful for \GHT since there may not be any connection between the tree and the decomposition.
The application in this paper is more reminiscent of recent deterministic \GMC algorithms \cite{KThorup19,saranurak2020simple,LP20}, but is also different from those (see Complication 5 in Section~\ref{overview}).

\section{Preliminaries}

\subsection{General Notations}
We will mostly work with unweighted graphs $G=(V,E)$, but throughout our algorithms we might contract vertices and end up with auxiliary graphs $G=(V,E,c)$ that are weighted $c:E \to [U]$, i.e. with capacities in $[U]=\{1,\ldots,U\}$ on the edges.
All graphs in this paper will be undirected.
We denote by $\deg(v)$ and $\cdeg(v)$ the number of edges and the total capacity on edges incident to $v\in V$, respectively. 
We treat \emph{cuts} as subsets $S\subset V$, or partitions $(S, V\setminus S)$.
The \emph{value} of a cut $S$ is defined as $\delta(S)=\card{\{\{u,v\}\in E : u\in S,v\in V\setminus S\}}$, and if two subsets $S,T\subseteq V$ are given, then $\delta(S,T)=\card{\{\{u,v\}\in E : u\in S,v\in T\}}$.
When the graph is weighted we define the values as $\delta(S)=\sum_{\{u,v\}\in E, u\in S,v\in V\setminus S} c(u,v)$ and $\delta(S,T)=\sum_{\{u,v\}\in E, u\in S,v\in T} c(u,v)$. 
\ifprocs
For a pair $u,v\in V(G)$, we denote by $\lambda_{u,v}$ or $\MF(u,v)$ the value of a minimum $(u,v)$-cut, $\delta(S)$.
\else
For a pair $u,v\in V(G)$, we denote by $\MC(u,v)$ a minimum cut $(S,V\setminus S)$ between $u$ and $v$, and by $\lambda_{u,v}$ or $\MF(u,v)$ the value of this cut, $\delta(S)$.
\fi

\subsection{Max Flow Algorithms: Unweighted, Weighted, and Combinatorial}
\label{sec:comb}

We use as a black box known algorithms for \MF to get a minimum $(u,v)$-cut for a given pair $u,v$.
Throughout the paper, three existing algorithms are used.
First, if the flow size is bounded, the Karger-Levine algorithm~\cite{KL15} that runs in time $\tO(m+nF)$ where $F$ is the size of the maximum flow is particularly fast.
Second, for larger flows we use the very recent algorithm~\cite{linearflow21}  that runs in time $\tO(m+n^{1.5})$.
All tools used in this paper are considered combinatorial with \cite{linearflow21} being the only exception, as it uses interior-point methods from continuous optimization.
Unlike other non-combinatorial methods such as fast matrix multiplication, these techniques tend to be fast in practice.
Still, if one is interested in a purely combinatorial algorithm one can replace this bound for weighted graphs with the Goldberg-Rao algorithm~\cite{GR98} that has running time $\tO(min(n^{2/3}, m^{1/2})m\log U)$. The resulting algorithm is slower but still subcubic.

\subsection{Gomory-Hu's algorithm and Partial Trees}
\label{prelim:GH}

First, we give some general definitions that many algorithms that are related to Gomory-Hu trees use.
\paragraph{Partition Trees.}
A \emph{partition tree} $T$ of a graph $G=(V,E)$
is a tree whose nodes $1,\dots,l$ are \emph{super-nodes},
which means that each node $i$ is associated with a subset $V_i\subseteq V$; 
and these super-nodes form a disjoint partition $V=V_1 \sqcup\cdots\sqcup V_l$.
An \emph{auxiliary graph} $G_i$ is constructed from $G$ by merging nodes that lie in the same connected component of $T\setminus \{i\}$. For example, if the current tree is a path on super-nodes $1,\ldots,l$, 
then $G_i$ is obtained from $G$ by merging $V_1\cup\cdots\cup V_{i-1}$ into one \emph{contracted node} and $V_{i+1}\cup\cdots\cup V_l$ into another contracted node. We will use the notations $n_i:=\card{V_i}$, $m_i:=\card{E(G_i)}$, and $n'_i:=\card{V(G_i)}$. Note that $n_i' \geq n_i$ since $V(G_i)$ contains $V_i$ as well as some other contracted nodes, with $n_i'=n_i$ only if the tree is a single node.
The following is a brief description of the classical Gomory-Hu algorithm~\cite{GH61} (see Figure~\ref{Figs:GH_trees_3}).

\paragraph{The Gomory-Hu algorithm.}
This algorithm constructs a 
\ifprocs
cut- equivalent 
\else
cut-equivalent
\fi
tree $\T$ in iterations. 
Initially, $\T$ is a single node associated with $V$ (the node set of $G$), 
and the execution maintains the invariant that $\T$ is a partition tree of $V$.
At each iteration, the algorithm picks arbitrarily two graph nodes $s,t$ 
that lie in the same tree super-node $i$, i.e., $s,t\in V_i$.
The algorithm then constructs from $G$ the auxiliary graph $G_i$
and invokes a \MF algorithm to compute in this $G_i$ a minimum $(s,t)$-cut, denoted $C'$.
The submodularity of cuts ensures that this cut is also 
a minimum $(s,t)$-cut in the original graph $G$, 
and it clearly induces a disjoint partition $V_i=S\sqcup T$ 
with $s\in S$ and $t\in T$. The algorithm then modifies $\T$ by splitting super-node $i$
into two super-nodes, one associated with $S$ and one with $T$,
that are connected by an edge whose weight is the value of the cut $C'$,
and further connecting each neighbor of $i$ in $\T$ 
to either $S$ or $T$ (viewed as super-nodes),
depending on its side in the minimum $(s,t)$-cut $C'$
(more precisely, neighbor $j$ is connected to the side containing $V_j$).

\begin{figure*}[ht]
  \begin{center}
    \includegraphics[width=6.2in]{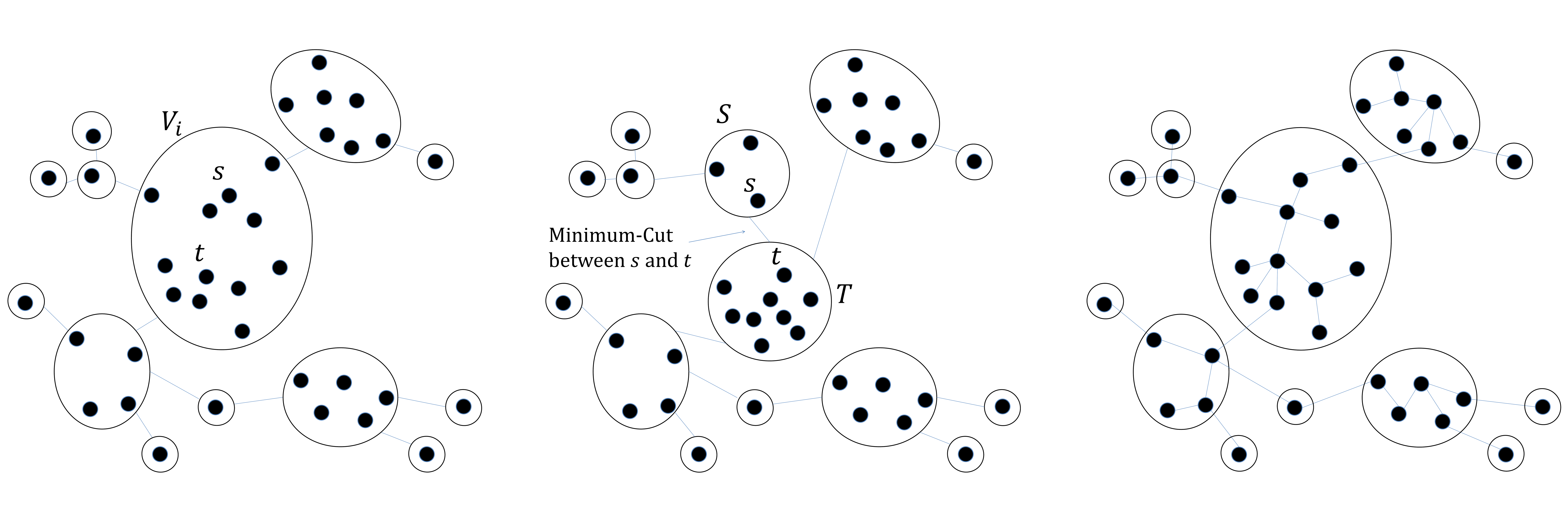}
    \end{center}
    \caption{Illustration of the construction of $\T$.  Left: $\T$ right before the partition of the super-node $V_i$. Middle: after the partitioning of $V_i$. Right: $\T$ as it unfolds after the Gomory-Hu algorithm finishes.}
    \label{Figs:GH_trees_3}
\end{figure*}

The algorithm performs these iterations until all super-nodes are singletons, which happens after $n-1$ iteration.
Then, $\T$ is a weighted tree with effectively the same node set as $G$.
It can be shown \cite{GH61} that for every $s,t\in V$,
the minimum $(s,t)$-cut in $\T$, viewed as a bipartition of $V$,
is also a minimum $(s,t)$-cut in $G$, and of the same cut value.
We stress that this property holds regardless of the choice made at each step
of two nodes $s\neq t\in V_i$. 
A \GHEPT is a partition tree that can be obtained by a truncated execution of the Gomory-Hu algorithm, in the sense that there is a sequence of choices for the pairs $s\neq t\in V_i$ that can lead to such a tree. The following simple lemma describes the flexibility in designing cut-equivalent tree algorithms based on the Gomory-Hu framework.

\begin{lemma}
\label{lem:partial_tree_comb}
Given a \GHEPT $T'$ of an input graph $G$, and a cut-equivalent tree $T_i$ of each auxiliary graph $G_i$ for the super-nodes $V_i$ of $T'$, it is possible to construct a full cut-equivalent tree $T$ of $G$ in linear time.
\end{lemma}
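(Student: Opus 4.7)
The approach is to view this construction as a direct continuation of the Gomory-Hu execution that produced $T'$. By definition, the partition tree $T'$ is the snapshot of some truncated Gomory-Hu run; to finish the run it suffices, for each remaining super-node $V_i$, to resolve the successive \MF queries that Gomory-Hu would make on pairs $s,t\in V_i$. Crucially, all such queries are issued on the auxiliary graph $G_i$, and $T_i$, being cut-equivalent to $G_i$, already encodes every one of them: the minimum $(s,t)$-cut (and its value) is given by the minimum-weight edge on the $s$-to-$t$ path in $T_i$. Thus, we can ``simulate'' the remainder of the Gomory-Hu execution per super-node, and invoke the standard correctness argument for Gomory-Hu (together with the ``regardless of the choices'' property already cited in the preliminaries) to conclude that the resulting tree is cut-equivalent to $G$.

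To execute the plan in linear time, rather than actually running the continuation step by step, I would do the refinement of each super-node in one shot by \emph{plugging} $T_i$ into $T'$. Concretely, build $T$ as follows. For each super-node $i$ of $T'$ and each neighbor $j$ of $i$ in $T'$, the contracted node $v_j^{*}\in V(G_i)$ has a (unique up to the irrelevant internal structure around it) ``boundary'' neighbor in $T_i$; call it $u_{i\to j}\in V_i$. Take the forest obtained from $T_i$ by deleting all contracted nodes $v_j^{*}$, and then for every edge $(i,j)\in E(T')$ add a bridging edge $\{u_{i\to j},u_{j\to i}\}$ whose weight equals the weight of $(i,j)$ in $T'$. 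The union of these pieces, one per super-node, is a weighted tree on $V$ and is taken as the output $T$.

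For correctness, split into two cases. If $s,t\in V_i$ lie in the same super-node, their $T$-path is entirely inside the plugged copy of $T_i$, so its minimum edge equals $\lambda^{G_i}_{s,t}$; by the standard Gomory-Hu argument (submodularity applied to the contractions dictated by $T'$), this equals $\lambda^{G}_{s,t}$. If $s\in V_i$ and $t\in V_j$ with $i\ne j$, the $T$-path decomposes into the $T_i$-path from $s$ to $u_{i\to j}$, a sequence of bridging edges along the $i$-to-$j$ path in $T'$ (possibly interleaved with traversals of intermediate $T_k$'s), and the $T_j$-path from $u_{j\to i}$ to $t$. Because $T_i$ is cut-equivalent and the weight of the edge $(u_{i\to j},v_j^{*})$ in $T_i$ equals $\lambda^{G_i}_{s,v_j^{*}}$, which by the GH-equivalent property of $T'$ equals $\lambda^G_{s,t}$ (and symmetrically from $j$'s side, and analogously at every intermediate super-node), the minimum of the edge weights along this $T$-path matches the minimum-weight edge on the $i$-to-$j$ path in $T'$, which is exactly $\lambda^G_{s,t}$.

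\textbf{Main obstacle.} The delicate part is that a contracted node $v_j^{*}$ need not be a leaf of $T_i$: it can be internal, possibly even adjacent to other contracted nodes, so the ``boundary node'' $u_{i\to j}$ is not a priori canonical. The hard work is showing that one may always take $u_{i\to j}$ to be a concrete node of $V_i$ such that the incident edge in $T_i$ carries weight equal to $\lambda^{T'}_{i,j}$, and that the choice is immaterial up to moves that do not affect any cut value among pairs in $V_i$. This follows by a standard exchange argument for cut-equivalent trees (heaviest edge on any $v_j^{*}$-to-$V_i$ path in $T_i$ has weight $\lambda^{T'}_{i,j}$, matching the value of the boundary cut across $(i,j)$), and once established, the gluing above is well-defined and is clearly computable in time linear in $\sum_i (|V_i|+\deg_{T'}(i))+|T'|=O(n)$ by a single traversal of $T'$ and of each $T_i$.
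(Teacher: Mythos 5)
Your opening plan (continue the Gomory--Hu execution inside each super-node, choosing pairs from $V_i$ and cuts according to $T_i$, then invoke the choice-independence of Gomory--Hu) is exactly the paper's argument, and your gluing is essentially the paper's construction. The gap is that you then abandon the simulation and try to certify the glued tree directly by path-minimum bookkeeping, and that verification is wrong. In the cross case $s\in V_i$, $t\in V_j$ you assert that the weight of the edge $\{u_{i\to j},v_j^{*}\}$ of $T_i$ equals $\lambda_{G_i}(s,v_j^{*})$, that this equals $\lambda_G(s,t)$, and that the minimum along the $s$--$t$ path in $T$ equals the minimum edge on the $i$--$j$ path in $T'$, ``which is exactly $\lambda_G(s,t)$''. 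None of these hold in general: that edge weight is $\lambda_{G_i}(u_{i\to j},v_j^{*})$, not $\lambda_{G_i}(s,v_j^{*})$; one only has $\lambda_{G_i}(s,v_j^{*})\ge\lambda_G(s,t)$; and $\lambda_G(s,t)$ can be strictly smaller than every edge on the $i$--$j$ path of $T'$. For instance, let $G$ be a $K_5$ containing $a$ and $t$ plus a pendant vertex $s$ attached to $a$, and let $T'$ have the two super-nodes $V\setminus\{t\}$ and $\{t\}$ joined by an edge of weight $4$ (a legitimate truncated run choosing the pair $(a,t)$ first); then $\lambda_G(s,t)=1$, realized by the edge $s$--$a$ inside the piece $T_i$, while the $T'$ path carries weight $4$. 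So your argument does not establish the cross case. The paper gets it for free by exhibiting $T$ as the output of a legitimate continued Gomory--Hu run on each super-node (equivalently, one can check, in the spirit of Lemma~\ref{Lemma:GH_edges}, that every edge of $T$ induces a minimum cut in $G$ between its two endpoints); that is the step you need to carry out instead of the path computation.

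There is also a genuine problem with the construction as you state it, which your ``main obstacle'' paragraph waves away. A contracted node of $G_i$ can be an internal node of $T_i$ lying on the path between two vertices of $V_i$: take $G$ to be the $4$-cycle $s\!-\!q\!-\!t\!-\!r\!-\!s$ and $T'$ the star with center super-node $\{s,t\}$ and leaves $\{q\},\{r\}$ (all weights $2$); then $G_i=G$ and the path $s\!-\!q\!-\!t\!-\!r$ is a valid cut-equivalent tree $T_i$. Deleting the contracted nodes from $T_i$ and adding one bridging edge per $T'$-edge then leaves the copy of $V_i$ disconnected, the ``boundary neighbor'' $u_{i\to j}$ is not unique, and the output is not even a tree; your within-super-node case (``the $T$-path stays inside the plugged copy of $T_i$'') fails for the same reason. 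The paper's stitching keeps the connection structure through the contracted nodes --- each contracted node is identified with its adjacent super-node and replaced by that side's attachment vertex --- and, more importantly, the facts you try to prove by a ``standard exchange argument'' (which vertex of $V_i$ each boundary edge attaches to, and that it carries the $T'$ weight) are exactly what the simulated execution delivers. Reinstating that simulation as the actual proof, rather than as motivation, is what is missing from your write-up.
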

\begin{proof}
In a preprocessing step, for every super node $V_i$ in $T'$ and every contracted node $q\in  V(G_i)\setminus V_i$, save a pointer to the super node $V_j\subseteq q$ that is adjacent to $V_i$ in $T'$.
Now, for every super-node $V_i$, identify each contracted node $q\in V(G_i)\setminus V_i$ with the super-node it contains that is adjacent to $V_i$ in $T'$, and connect the nodes of $V_i$ to the super-nodes they are adjacent to (if any).
Finally, if a node $u\in V_i$ is connected to a super-node $V_j$, and a node $v\in V_j$ is connected to $V_i$, remove these connections and connect $u$ to $v$ directly, and call the result $T$. Observe that $T$ must be a tree.

To see why $T$ is a correct cut-equivalent tree of $G$, observe that there exists a simulated Gomory-Hu execution that results in $T$. Given the \GHEPT $T'$, pick pairs of nodes from $V_i$ and cuts according to $T_i$. This is guaranteed to produce a tree $\tilde{T}_i$ whose projection on $V_i$ is identical to $T_i$, while the subtrees adjacent to $V_i$ in $\tilde{T}$ are connected to the same nodes of $V_i$ as their contracted counterparts in $T_i$. 
Applying this simulated execution to all super-nodes concludes the proof.
\end{proof}

Next, we discuss a few kinds of (GH-Equivalent) partition trees.

\subsubsection{Partial Trees for Subsets}
A \emph{partial tree} for a subset $Q\subseteq V(G)$ is a partition tree $T$ of $G$ such that each super-node $V_i$ in $T$ contains exactly one node from $Q$, and for every two nodes $a,b\in Q$, the minimum cut in $T$ between the super-nodes containing them $A,B$ is a minimum $(a,b)$-cut in $G$. It is folklore that for every subset $Q\subseteq V$, a partial tree for $Q$ can be computed by $\card{Q}-1$ applications of \MF.
\begin{lemma}[see e.g.~\cite{GH86}]\label{lem:partialGH}
Given a graph $G=(V,E,c)$ and a subset $Q\subseteq V$, it is possible to compute a partial tree for $Q$ in time $O(\card{Q}T(m))$, where $T(m)$ is the time for solving (single pair) \MF.
\end{lemma}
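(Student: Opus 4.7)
The plan is to run a restricted version of the Gomory--Hu algorithm in which every pair $(s,t)$ chosen for a \MF query is required to lie in the set $Q$, and to stop as soon as each super-node of the partition tree contains at most one element of $Q$.

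More precisely, I would initialize $T'$ to be a single super-node associated with $V(G)$. At each iteration, while there exists some super-node $V_i$ of $T'$ that contains two distinct nodes $s,t\in Q\cap V_i$, I would pick such a pair arbitrarily, construct the auxiliary graph $G_i$ as defined in Section~\ref{prelim:GH}, compute a minimum $(s,t)$-cut $C'$ in $G_i$ with a single \MF call, and refine $T'$ exactly as in the Gomory--Hu iteration: split $V_i$ into the two sides of $C'$, connect them with an edge of weight equal to the value of $C'$, and reattach the neighbors of $i$ in $T'$ to the appropriate side according to $C'$. The resulting $T'$ is a \GHEPT by construction, since the sequence of queries we performed is a valid (prefix of a) Gomory--Hu execution.

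For correctness of the output as a partial tree for $Q$, I would rely on the standard Gomory--Hu correctness argument (submodularity of cuts plus the neighbor-reattachment step), which guarantees that for any two nodes $a,b$ lying in distinct super-nodes $A,B$ of the final $T'$, the minimum cut in $T'$ separating $A$ from $B$ equals the value of a minimum $(a,b)$-cut in $G$. Applied to pairs $a,b\in Q$, which by the termination condition lie in distinct super-nodes, this is exactly the partial-tree property.

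For the running time, observe that each iteration strictly increases the number of super-nodes of $T'$ that contain an element of $Q$ by one, because $s$ and $t$ end up on opposite sides of the split. Starting from the single super-node $V$ containing all $|Q|$ such elements and terminating when that count reaches $|Q|$, we use exactly $|Q|-1$ iterations, each consisting of one \MF call on a graph with at most $m$ edges (contractions do not increase edge count), plus $O(n)$ bookkeeping for the split and reattachment. The total cost is therefore $O(|Q|\,T(m))$ as claimed. I do not foresee a real obstacle here; the only subtlety is making sure that the restricted pair-selection rule still fits into a valid Gomory--Hu execution so that Lemma~\ref{lem:partial_tree_comb}-style reasoning (and the original correctness proof of \cite{GH61}) applies verbatim, and this is immediate because the Gomory--Hu correctness is agnostic to \emph{which} pair in $V_i$ is chosen at each step.
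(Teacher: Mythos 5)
Your proposal is correct and matches the paper's argument, which simply runs a Gomory--Hu execution while always picking query pairs from $Q$ (making the result a \GHEPT) and stops after $|Q|-1$ \MF calls. Your write-up just spells out the same one-line proof in more detail.
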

This lemma follows by running a Gomory-Hu execution, but always picking pairs from $Q$ (thus making $T$ a \GHEPT).

\subsubsection{$k$-Partial Trees}
A $k$-partial tree, formally defined below, can also be thought of as
the result of contracting all edges of weight greater than $k$
in a cut-equivalent tree of $G$. 
Such a tree can obviously be constructed using the Gomory-Hu algorithm,
but as stated below (in Lemma~\ref{Lemma:Partial}), 
faster algorithms were designed in~\cite{HKP07,BHKP07},
see also~\cite[Theorem $3$]{Panigrahi16}.

It is known (see~\cite[Lemma $2.3$]{AKT20}) that such a tree is a \GHEPT.

\begin{definition}[$k$-Partial Tree~\cite{HKP07}] 
  A \emph{$k$-partial tree} of a graph $G=(V,E)$ is a weighted tree on $l\leq \card{V}$ super-nodes constituting a partition $V=V_1 \sqcup\cdots\sqcup V_l$, with the following property: 
  For every two nodes $s,t\in V$ whose minimum-cut value in $G$ is at most $k$, $s$ and $t$ lie in different super-nodes $s\in S$ and $t\in T$, such that the minimum $(S,T)$-cut in the tree defines a bipartition of $V$
  which is a minimum $(s,t)$-cut in $G$ and has the same value.
\end{definition}

\begin{lemma}[\cite{BHKP07}]
\label{Lemma:Partial}
There is an algorithm that given an undirected graph with $n$ nodes and $m$ edges with unit edge-capacities and an integer $k\in [n]$, constructs a $k$-partial tree in time 
\ifprocs
$\min\{\tO(nk^2),$ $\tO(mk)\}$.
\else
$\min\{\tO(nk^2),\tO(mk)\}$.
\fi
\end{lemma}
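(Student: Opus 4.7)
\textbf{Proof proposal for Lemma~\ref{Lemma:Partial}.}
The plan is to build the $k$-partial tree through a tree-packing approach of Gabow, in the spirit of~\cite{BHKP07,Gabow95}. The first step is to compute a maximum edge-disjoint tree packing $\calF=\{F_1,\dots,F_p\}$ of $G$ in time $\tO(mk)$ using Gabow's algorithm (one can abort once $p$ reaches $\Theta(k)$). By Nash-Williams/Tutte together with the fact that the minimum cut of value $\leq k$ separates at most $k$ copies of each tree, any minimum $(s,t)$-cut of value $\lambda\leq k$ in $G$ intersects each $F_i$ in at most $\lambda$ edges, and on average in roughly $\lambda/p$ edges; this gives a compact combinatorial certificate for small cuts.

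The second step is to exploit this certificate. For each tree $F_i$, orient all edges and use the classical observation that every cut of $G$ corresponds to some bipartition of the vertices; the crossing edges of $F_i$ with this bipartition form a small "signature" of size at most $k$. I would then, for each vertex pair sharing a super-node in the currently built partial tree, locate candidate cuts by marking a constant number of edges per tree and checking which bipartitions of $\bigcup_i F_i$ are induced by at most $k$ total crossing edges. This yields the $\tO(mk)$ bound: each of the $p=\Theta(k)$ trees is processed once in $\tO(m)$ time (marking, rooting, LCA queries, link-cut structures), and all small cuts are discovered without a single explicit \MF call. The partial tree is then assembled by repeatedly refining super-nodes according to the discovered cuts, invoking Lemma~\ref{lem:partial_tree_comb}-style consistency to combine the refinements into a single partition tree $T$; uniqueness issues are handled by choosing latest cuts as in Section~\ref{sec:latest}.

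For the alternative bound $\tO(nk^2)$, which is preferable in dense graphs, I would replace the per-edge processing inside each tree by a per-vertex procedure: for each of the $n$ vertices, look at its $O(k)$ incident edges in the packing, and in $O(k)$ time per tree determine which small cuts separate it from its current super-node representative. Summing over $n$ vertices and $p=\Theta(k)$ trees yields $\tO(nk^2)$.

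The main technical obstacle is ensuring that the collection of cuts discovered from the tree packing is simultaneously (i) complete, meaning it captures every cut of value $\leq k$ up to equivalence, and (ii) mutually consistent, so that assembling them into a single weighted tree $T$ actually gives a valid \GHEPT rather than merely a list of pairwise cuts. Completeness requires a careful averaging argument showing that for any small cut there exists a tree $F_i$ in the packing with so few crossing edges that our enumeration locates it; consistency follows because the cuts recovered from a tree packing can always be realized by a Gomory-Hu execution that picks pairs from a canonical tree traversal, hence the resulting object is indeed a \GHEPT and the $k$-partial tree property is immediate.
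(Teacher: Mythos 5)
First, note that the paper does not prove Lemma~\ref{Lemma:Partial} at all: it is imported as a black box from \cite{BHKP07}, so there is no internal proof to match your argument against. Your sketch is in the spirit of the actual tree-packing construction of \cite{BHKP07}, but as a proof it has concrete gaps. The most serious one is the very first step: a maximum edge-disjoint spanning-tree packing of $G$ cannot be ``aborted once $p$ reaches $\Theta(k)$,'' because every spanning tree crosses every cut at least once, so the number of edge-disjoint spanning trees is at most the \emph{global} min-cut value, which can be $1$ even when many pairs have connectivity $k$. This is precisely why \cite{BHKP07} do not pack spanning trees of $G$ once and for all; they work connectivity-level by level inside the super-nodes of the evolving partial tree and pack \emph{Steiner} trees on the relevant terminal sets, and making that recursion charge out to $\tO(mk)$ overall is a substantial part of their paper. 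Without this, your averaging certificate (``some tree crosses the cut $O(1)$ times'') is simply unavailable for pairs whose connectivity exceeds the global min cut.

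The second gap is that the step ``checking which bipartitions of $\bigcup_i F_i$ are induced by at most $k$ total crossing edges'' is not an algorithm: the bipartitions of $V$ are not enumerable, and the actual technical content of \cite{BHKP07} is exactly how, for a fixed tree in the packing, one computes the values of all cuts crossing that tree in one or two tree edges within the stated time budget (this requires their dedicated machinery, not just ``marking, rooting, LCA queries, link-cut structures'' asserted to run in $\tO(m)$ per tree). Consistency of the recovered cuts is also not automatic from ``a canonical tree traversal''; it needs the non-crossing/laminar structure of the chosen cuts, in the spirit of Section~\ref{sec:latest} and Lemma~\ref{lem:partial_tree_comb}, argued explicitly. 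Finally, for the $\tO(nk^2)$ bound the standard and much simpler route is to first apply Nagamochi--Ibaraki sparsification \cite{NI92} to obtain a subgraph with $O(nk)$ edges preserving all cuts of value at most $k$, and then run the $\tO(mk)$ algorithm on it; your per-vertex variant is vague and carries no justification that it recovers all small cuts. As it stands, the proposal is a plausible outline of the cited result but not a proof of it.
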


\subsubsection{A basic property of the Gomory-Hu tree}

\begin{lemma}\label{Lemma:GH_edges}
Given a graph $G=(V,E)$ and a tree $T$ on the same set of nodes, if for every edge $uv\in T$ the cut $(S_u,S_v)$ resulting from removing $uv$ in $T$ is a minimum cut in $G$, then $T$ is a cut-equivalent tree.
\end{lemma}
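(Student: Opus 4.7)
The plan is to verify the two conditions that define a cut-equivalent tree: for every pair $s,t \in V$, (i) the minimum $(s,t)$-cut value in $T$ (viewed as bipartitions of $V$) equals $\lambda_{s,t}(G)$, and (ii) the corresponding bipartition of $V$ is a minimum $(s,t)$-cut in $G$. Fix $s,t \in V$ and consider the unique path $s = v_0, v_1, \dots, v_k = t$ in $T$, and let $w_i$ denote the weight of edge $v_{i-1}v_i$. Removing this edge from $T$ induces a bipartition $(S_{v_{i-1}}^{(i)}, S_{v_i}^{(i)})$ that separates $s$ from $t$, since $s$ lies on one side of the path and $t$ on the other. The minimum $(s,t)$-cut in $T$ is precisely the one obtained by removing the lightest path edge, with value $\min_i w_i$.

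First I would establish the easy direction, $\lambda_{s,t}(G) \le \min_i w_i$. For each $i$, the hypothesis says that the bipartition $(S_{v_{i-1}}^{(i)}, S_{v_i}^{(i)})$ is a minimum cut in $G$ for the pair $(v_{i-1},v_i)$, and by construction its value equals $w_i$; moreover, it separates $s$ from $t$, so it is a valid $(s,t)$-cut in $G$. Taking the lightest edge along the path yields $\lambda_{s,t}(G) \le \min_i w_i$, and the witnessing $G$-cut is the bipartition of $V$ defined by removing that edge from $T$.

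Next, I would prove the reverse inequality $\lambda_{s,t}(G) \ge \min_i w_i$ using the standard submodularity-based ``ultrametric'' property of minimum cuts in undirected graphs: for any three nodes $x,y,z$, $\lambda_{x,z} \ge \min(\lambda_{x,y}, \lambda_{y,z})$. Iterating this inequality along the tree path gives
\[
\lambda_{s,t}(G) \ge \min_{i \in [k]} \lambda_{v_{i-1}, v_i}(G) = \min_i w_i,
\]
where the last equality uses that each edge cut $(S_{v_{i-1}}^{(i)}, S_{v_i}^{(i)})$ is a minimum $(v_{i-1},v_i)$-cut in $G$ of value $w_i$, so $\lambda_{v_{i-1},v_i}(G) = w_i$.

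Combining both inequalities shows $\lambda_{s,t}(G) = \min_i w_i$, which is exactly the value of the minimum $(s,t)$-cut in $T$, and the cut achieving this minimum in $T$ is already verified to be a minimum $(s,t)$-cut in $G$. Hence $T$ is a cut-equivalent tree. The only nontrivial step is the ultrametric inequality, but that is a textbook consequence of the submodularity of the cut function and does not rely on any of the tree structure; it is the sole external fact this proof needs.
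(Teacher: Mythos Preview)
Your proof is correct and complete. It takes a genuinely different route from the paper's. The paper's one-line argument is a simulation: run the Gomory-Hu algorithm while always choosing the pair $(u,v)$ to be an edge of $T$ lying inside the current super-node and using the bipartition $(S_u,S_v)$ as the cut; since Gomory-Hu is correct for any sequence of choices and this particular sequence reproduces $T$ exactly, $T$ must be cut-equivalent. That argument is short but leans on the full correctness of the Gomory-Hu construction (non-crossing of successive cuts in the auxiliary graphs, etc.).

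Your argument instead verifies the cut-equivalence directly via the ultrametric inequality $\lambda_{x,z}\ge\min(\lambda_{x,y},\lambda_{y,z})$, iterated along the $s$--$t$ path in $T$. This is more elementary and self-contained: the only external fact is the ultrametric inequality, which is a one-line consequence of the observation that any $(x,z)$-cut separates $y$ from one of $x,z$. The paper's approach buys brevity by invoking a black box the reader already trusts; yours buys transparency by exposing exactly which property of minimum cuts is doing the work.
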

\begin{proof}
The proof follows by simulating a Gomory-Hu tree execution with node pairs and minimum cuts taken according to the edges in $T$.
\end{proof}

\subsection{Nagamochi-Ibaraki Sparsification}

We use the sparsification method by Nagamochi and Ibaraki~\cite{NI92}, who showed that for any graph $G$ it is possible to find a subgraph $H$ with at most $k(n-1)$ edges, such that $H$ contains all edges crossing cuts of value $k$ or less.
It follows that if a cut has value at most $k-1$ in $G$ then it has the same value in $H$, and if a cut has value at least $k$ in $G$ then it also has value at least $k$ in $H$.
The authors~\cite{NI92} gave an algorithm that performs this sparsification in $O(m)$ time on unweighted
graphs, independent of $k$. 
They also gave a sparsification algorithm for weighted graphs, with an $O(m + n \log n)$ running time ~\cite{NI92_b}.
In weighted graphs, the sparsification is defined by equating an edge of weight $w$ with a set of $w$ unweighted (parallel) edges with the same endpoints.

\subsection{Latest Cuts}
\label{sec:latest}
Introduced by Gabow~\cite{Gabow91}, a \emph{latest} minimum $(s,t)$-cut (or a \emph{minimal} minimum $(s,t)$-cut, in some literature) is a minimum $(s,t)$-cut $(S_{s},S_{t}=V\setminus S_{s})$ such that no strict subset of $S_{t}$ is a minimum $(s,t)$-cut. It is known that latest minimum cuts are unique, and can be found in the same running time of any algorithm that outputs the maximum network flow between the pair, by finding all nodes that can reach $t$ in the residual graph.
In particular, all upper bound stated in Section~\ref{sec:comb} above for \MF also hold for finding the latest minimum $(s,t)$-cut.

We will use the following properties of cuts.
\begin{fact}[Submodularity of cuts]\label{Fact:Submodularity}
For every two subsets of nodes $A,B\subseteq V$, it holds that $\delta(A)+\delta(B)\geq \delta(A\cup B)+\delta(A\cap B)$.
\end{fact}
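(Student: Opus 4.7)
The plan is to prove the inequality by a local, edge-by-edge argument. I will show that each edge $\{u,v\}$ of $G$ contributes at least as much to the left-hand side $\delta(A)+\delta(B)$ as it contributes to the right-hand side $\delta(A\cup B)+\delta(A\cap B)$; summing over all edges (with capacities, in the weighted case) then yields the Fact.

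First I would partition $V$ into four blocks according to membership in $A$ and $B$, namely $A\cap B$, $A\setminus B$, $B\setminus A$, and $V\setminus(A\cup B)$. Each edge falls into one of the ten unordered type-pairs determined by the blocks of its endpoints, and for every such pair I would check directly whether the edge crosses each of the four cuts $A$, $B$, $A\cup B$, and $A\cap B$. The check is immediate from the definition of $\delta$: an edge crosses $A$ iff exactly one endpoint is in $A$, and analogously for the other three sets.

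The key observation is that in nine of the ten type-pairs the per-edge contribution to the LHS equals the contribution to the RHS: for example, an edge from $A\cap B$ to $V\setminus(A\cup B)$ crosses all four sets and contributes $2$ to each side; an edge from $A\cap B$ to $A\setminus B$ crosses exactly $B$ and $A\cap B$, contributing $1$ to each side; an edge wholly inside a single block crosses nothing. The single exception is an edge between $A\setminus B$ and $B\setminus A$: it crosses both $A$ and $B$ (contributing $2$ to the LHS) yet lies entirely inside $A\cup B$ and entirely outside $A\cap B$ (contributing $0$ to the RHS).

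Summing the per-edge contributions (weighted by $c(e)$ in the capacitated setting) therefore gives the identity
$$ \delta(A) + \delta(B) - \delta(A\cup B) - \delta(A\cap B) \;=\; 2\,\delta(A\setminus B,\, B\setminus A) \;\geq\; 0,$$
which implies the Fact and in addition pinpoints exactly when equality holds (precisely when no edge runs between $A\setminus B$ and $B\setminus A$). The only ``obstacle'' is the tedium of the enumeration, but no clever mathematical idea is required.
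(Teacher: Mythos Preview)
Your argument is correct: the edge-by-edge case analysis over the four blocks $A\cap B$, $A\setminus B$, $B\setminus A$, $V\setminus(A\cup B)$ is the standard proof of submodularity of the cut function, and the identity
\[
\delta(A)+\delta(B)-\delta(A\cup B)-\delta(A\cap B)=2\,\delta(A\setminus B,\,B\setminus A)\ge 0
\]
that you derive is exactly right (and even gives the equality condition as a bonus).

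There is nothing to compare against, however: the paper states this as a \emph{Fact} and gives no proof at all, treating submodularity of cuts as folklore. So your write-up is not an alternative to the paper's argument but rather a supplied proof where the paper chose to omit one.
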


\begin{fact}[Posimodularity of cuts]\label{Fact:Posimodularity}
For every two subsets of nodes $A,B\subseteq V$, it holds that $\delta(A)+\delta(B)\geq \delta(A\setminus B)+\delta(B\setminus A)$.
\end{fact}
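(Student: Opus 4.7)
The plan is to derive posimodularity directly from submodularity (Fact~\ref{Fact:Submodularity}, already stated in the excerpt) using the symmetry of the cut function in undirected graphs, namely $\delta(X)=\delta(V\setminus X)$ for all $X\subseteq V$. The first step is to record this symmetry: since the graph is undirected, the edges crossing the partition $(X,V\setminus X)$ are the same as those crossing $(V\setminus X,X)$, so the two cut values coincide (this is immediate from the definition of $\delta$ given in the preliminaries, and extends to the weighted case with the same argument).

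Next, I would apply submodularity to the pair $A$ and $V\setminus B$ rather than to $A,B$. This gives
\[
\delta(A)+\delta(V\setminus B)\;\geq\;\delta\bigl(A\cup(V\setminus B)\bigr)+\delta\bigl(A\cap(V\setminus B)\bigr).
\]
Now I would simplify each of the four terms. By the symmetry above, $\delta(V\setminus B)=\delta(B)$, so the left-hand side becomes $\delta(A)+\delta(B)$, which is exactly what the statement needs. For the right-hand side, observe that $A\cap(V\setminus B)=A\setminus B$, giving the term $\delta(A\setminus B)$. For the remaining term, $A\cup(V\setminus B)=V\setminus(B\setminus A)$, and by symmetry once more $\delta(V\setminus(B\setminus A))=\delta(B\setminus A)$. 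Chaining these identities produces $\delta(A)+\delta(B)\geq \delta(A\setminus B)+\delta(B\setminus A)$, as required.

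Since the argument is purely algebraic and uses only the already-available submodularity and undirected symmetry, there is essentially no obstacle; the only thing to double-check is that each set identity ($A\cap(V\setminus B)=A\setminus B$ and $A\cup(V\setminus B)=V\setminus(B\setminus A)$) is straightforward to verify by element chasing, and that the submodularity fact is stated for arbitrary subsets (so that the substitution $B\mapsto V\setminus B$ is legal), which it is. As a sanity check one can also give the direct proof by categorizing edges according to their endpoints in the four regions $A\cap B$, $A\setminus B$, $B\setminus A$, and $V\setminus(A\cup B)$, and observing that $\delta(A)+\delta(B)-\delta(A\setminus B)-\delta(B\setminus A)$ equals twice the weight of edges between $A\cap B$ and $V\setminus(A\cup B)$, which is nonnegative; this alternative derivation works identically for both unweighted and weighted graphs and matches the two weighted/unweighted definitions of $\delta$ given earlier in the Preliminaries.
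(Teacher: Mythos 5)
Your proof is correct. The paper states Fact~\ref{Fact:Posimodularity} without any proof, treating it as a standard fact, so there is no argument of the paper to compare against; your derivation---applying Fact~\ref{Fact:Submodularity} to the pair $A$ and $V\setminus B$ and using the symmetry $\delta(X)=\delta(V\setminus X)$ of undirected cuts, together with the identities $A\cap(V\setminus B)=A\setminus B$ and $A\cup(V\setminus B)=V\setminus(B\setminus A)$---is the standard route from submodularity to posimodularity and works verbatim for both the unweighted and the weighted definition of $\delta$ given in the Preliminaries. Your sanity check is also accurate: classifying edges by the four regions $A\cap B$, $A\setminus B$, $B\setminus A$, $V\setminus(A\cup B)$ shows that $\delta(A)+\delta(B)-\delta(A\setminus B)-\delta(B\setminus A)$ equals twice the total weight of edges between $A\cap B$ and $V\setminus(A\cup B)$, which is nonnegative.
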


\begin{lemma}
\label{lem:latest_tree}
Let $G$ be any graph and $p$ be any node. There is a cut-equivalent tree that contains all the latest minimum cuts with respect to $p$.
\end{lemma}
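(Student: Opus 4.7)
The plan is in two steps: establish that the family of latest minimum cuts from $p$ is laminar, then build the tree inductively using Lemma~\ref{lem:partial_tree_comb}.

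For laminarity, write $\mathcal{F} = \{S_v : v \in V \setminus \{p\}\}$ where $S_v$ is the $v$-side of the latest minimum $(p,v)$-cut. I would show any two $S_u, S_v$ are nested or disjoint by case analysis on membership. If $u \in S_v$, then $S_u \cap S_v$ is a valid $(p,u)$-cut (contains $u$, not $p$) and $S_u \cup S_v$ is a valid $(p,v)$-cut, so submodularity (Fact~\ref{Fact:Submodularity}) gives
\[
\delta(S_u)+\delta(S_v) \geq \delta(S_u\cap S_v)+\delta(S_u\cup S_v) \geq \lambda_{p,u}+\lambda_{p,v},
\]
forcing equality and hence $S_u\cap S_v$ is a minimum $(p,u)$-cut; since $S_u$ is the latest (no strict subset is a minimum $(p,u)$-cut), this forces $S_u\cap S_v=S_u$, so $S_u\subseteq S_v$. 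The case $v\in S_u$ is symmetric. If $u\notin S_v$ and $v\notin S_u$, the analogous argument using posimodularity (Fact~\ref{Fact:Posimodularity}) shows $S_u\setminus S_v$ is a minimum $(p,u)$-cut, and the latest property then forces $S_u\cap S_v=\emptyset$.

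For the tree construction, I would induct on $|V|$; the base case $|V|=1$ is trivial. For the inductive step, pick any $v\neq p$ with $S_v$ inclusion-maximal in $\mathcal{F}$; by laminarity every other $S_u\in\mathcal{F}$ is either $\subseteq S_v$ (when $u\in S_v$) or disjoint from $S_v$ (when $u\notin S_v$). Consider the partition tree $T'$ on super-nodes $\{S_v, V\setminus S_v\}$ joined by an edge of weight $\lambda_{p,v}$; this is GH-equivalent, arising from a single Gomory-Hu step on the pair $(p,v)$ with cut $S_v$. Its auxiliary graphs are $G_S$ (with $V\setminus S_v$ contracted to a node $q$) and $G_{\bar S}$ (with $S_v$ contracted to a node $q'$). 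For each $u\in S_v$, the latest minimum $(q,u)$-cut in $G_S$ equals $S_u$: any $(q,u)$-cut in $G_S$ lifts to a $(p,u)$-cut in $G$ of the same value, $S_u\subseteq S_v$ lifts into $G_S$ witnessing that the minimum value is $\lambda_{p,u}$, and any strictly smaller minimum cut in $G_S$ would contradict the latest property of $S_u$ in $G$. Analogously in $G_{\bar S}$ with pivot $p$, the latest minimum $(p,w)$-cut is $S_w$ for each $w\in V\setminus S_v\setminus\{p\}$, and the latest minimum $(p,q')$-cut is $\{q'\}$ of value $\lambda_{p,v}$. By the induction hypothesis, there exist cut-equivalent trees $T_S, T_{\bar S}$ of $G_S, G_{\bar S}$ containing all these latest cuts. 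Since the latest minimum $(q,v)$-cut in $G_S$ is $S_v$, whose $q$-side is just $\{q\}$, the edge of $T_S$ representing this cut isolates $q$, so $q$ is a leaf of $T_S$; symmetrically $q'$ is a leaf of $T_{\bar S}$. Apply Lemma~\ref{lem:partial_tree_comb} to splice $T_S$ and $T_{\bar S}$ through $T'$ into a cut-equivalent tree $T$ of $G$: the spliced $T$ inherits every non-$q$ edge of $T_S$, every non-$q'$ edge of $T_{\bar S}$, plus one cross edge of weight $\lambda_{p,v}$ representing $S_v$, so every set of $\mathcal{F}$ is a tree-cut of $T$.

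The hard part will be verifying the correspondence between latest cuts in $G_S, G_{\bar S}$ and those in $G$; this hinges on laminarity together with the maximality of $S_v$, which together ensure that every $S_u$ sits cleanly inside either $S_v$ or $V\setminus S_v$ and therefore survives the contractions without distortion. A secondary wrinkle is ties $S_u = S_w$ for distinct $u\neq w$: these contribute a single set to $\mathcal{F}$ and a single edge to $T$, so the induction absorbs them uniformly; the remaining edges of $T$ correspond to minimum cuts for pairs not involving $p$, consistent with $T$ having $n-1$ edges while $\mathcal{F}$ may have fewer distinct sets.
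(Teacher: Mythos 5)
Your uncrossing argument is correct, and it is exactly the paper's proof: the paper disposes of the lemma with the single observation that the latest minimum cuts with respect to $p$ form a laminar family by submodularity (your third case via Fact~\ref{Fact:Posimodularity} mirrors Lemma~\ref{lem:latest_minus}), treating ``laminar $\Rightarrow$ realizable in one Gomory--Hu tree'' as standard. The gap is in the second half, where you make that implicit step explicit: your induction on $|V|$ does not descend. The auxiliary graphs have $|V(G_S)|=|S_v|+1$ and $|V(G_{\bar S})|=|V|-|S_v|+1$, so the inductive hypothesis is unavailable whenever $|S_v|=|V|-1$ or $|S_v|=1$, and these cases genuinely occur and cannot always be avoided by a smarter choice of maximal set. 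If $V\setminus\{p\}\in\mathcal{F}$ (already for a two-node graph, or a path $p$--$a$--$v$ with weights $1,2$, where the latest minimum $(p,v)$-cut is $\{a,v\}$), it is the unique inclusion-maximal set, and $G_S$ is $G$ with the singleton $\{p\}$ contracted, i.e.\ $G$ itself with $p$ renamed $q$; the recursive instance $(G_S,q)$ is isomorphic to $(G,p)$, so the recursion is circular and never reaches the base case. Symmetrically, if all maximal sets are singletons (a star centered at $p$), then $G_{\bar S}$ is $G$ with a singleton contracted, again an instance isomorphic to the original. Since the sub-instance can be literally the same instance, no change of induction measure rescues the argument as written; the construction itself has to change.

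Everything else in your write-up is sound: the correspondence between latest $(p,\cdot)$-cuts in $G$ and latest cuts in $G_S,G_{\bar S}$ after contracting across a maximal set, the fact that $q$ and $q'$ are leaves, the splice via Lemma~\ref{lem:partial_tree_comb}, and the treatment of ties. Two standard repairs close the gap. First, split off \emph{all} inclusion-maximal sets $M_1,\dots,M_k$ of $\mathcal{F}$ simultaneously rather than one; then every set of $\mathcal{F}$ lives inside some $M_i$, so the remaining super-node containing $p$ needs only \emph{some} cut-equivalent tree (no inductive hypothesis), and each $G_{M_i}$ has $|M_i|+1\le |V|-1$ nodes unless $k=1$ and $M_1=V\setminus\{p\}$. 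Second, handle that remaining case $V\setminus\{p\}\in\mathcal{F}$ separately, e.g.\ by abandoning the two-sided recursion there and arguing via Lemma~\ref{Lemma:GH_edges} (equivalently, by simulating a Gomory--Hu execution that uses the cuts of $\mathcal{F}$ in top-down order), where one never needs the $p$-side to shrink as a standalone instance of the lemma.
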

\begin{proof}
This follows because all latest minimum cuts with respect to $p$ form a laminar family, by Fact~\ref{Fact:Submodularity}.
\end{proof}

\begin{lemma}
\label{lem:latest_union}
If $A,B$ are minimum $(p,a)$-cut and minimum $(p,b)$-cut, respectively, and $b \in A$ then $A \cup B$ is a minimum cut for $a$.
\end{lemma}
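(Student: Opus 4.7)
The plan is a direct application of submodularity (Fact~\ref{Fact:Submodularity}) to the pair $A,B$. First I would fix notation: following the convention in the preceding lemmas, write $A$ for the side of the minimum $(p,a)$-cut containing $a$ (so $a\in A$, $p\notin A$) and $B$ for the side of the minimum $(p,b)$-cut containing $b$ (so $b\in B$, $p\notin B$). The hypothesis $b\in A$ then says that $b$ lies in both $A$ and $B$.

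Next I would verify what $A\cup B$ and $A\cap B$ separate. Since $p\notin A$ and $p\notin B$, we have $p\notin A\cup B$, while $a\in A\subseteq A\cup B$; hence $A\cup B$ is a (possibly suboptimal) $(p,a)$-cut, so $\delta(A\cup B)\ge \lambda_{p,a}=\delta(A)$. Symmetrically, $b\in A\cap B$ and $p\notin A\cap B$, so $A\cap B$ is a $(p,b)$-cut and $\delta(A\cap B)\ge \lambda_{p,b}=\delta(B)$.

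Now I would apply submodularity: $\delta(A)+\delta(B)\ge \delta(A\cup B)+\delta(A\cap B)$. Combined with $\delta(A\cap B)\ge \delta(B)$, this yields $\delta(A\cup B)\le \delta(A)=\lambda_{p,a}$. Together with the lower bound $\delta(A\cup B)\ge \lambda_{p,a}$ from the previous paragraph, we conclude $\delta(A\cup B)=\lambda_{p,a}$, so $A\cup B$ is indeed a minimum $(p,a)$-cut.

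There is essentially no obstacle here; the only point to be careful about is the convention regarding which side of each cut is named $A$ and $B$ (so that ``$b\in A$'' matches the proof), and making sure that the intersection $A\cap B$ does separate $b$ from $p$ so submodularity produces a meaningful bound. No additional tools beyond Fact~\ref{Fact:Submodularity} are needed.
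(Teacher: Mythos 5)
Your proof is correct and follows essentially the same route as the paper: apply submodularity (Fact~\ref{Fact:Submodularity}) to $A,B$, use $b\in A\cap B$ to get $\delta(A\cap B)\ge\delta(B)$, and conclude $\delta(A\cup B)\le\delta(A)$ while $A\cup B$ is a valid $(p,a)$-cut. Your write-up just spells out the steps the paper leaves implicit.
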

\begin{proof}
Considering Fact~\ref{Fact:Submodularity}, it only remains to show that $\delta(B)-\delta(A\cap B)\leq 0$. But this is immediate, as $b$ is in both sets, and $B$ is a minimum cut for $b$.
\end{proof}

\begin{lemma}
\label{lem:latest_minus}
If $A,B$ are minimum $(p,a)$-cut and minimum $(p,b)$-cut, respectively, and $a \notin B, b \notin A$ then $A \setminus B$ is a minimum cut for $a$.
\end{lemma}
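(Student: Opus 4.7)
The plan is to apply posimodularity of the cut function (Fact~\ref{Fact:Posimodularity}) to the pair $A,B$ and then sandwich the resulting inequality between trivial lower bounds.

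First I would verify that $A\setminus B$ and $B\setminus A$ are legal cuts for $a$ and $b$ respectively. Using the convention from the excerpt, a minimum $(p,x)$-cut $X$ has $x\in X$ and $p\notin X$. So $p\notin A$ and $p\notin B$, which gives $p\notin A\setminus B$ and $p\notin B\setminus A$. Moreover, since $a\in A$ and $a\notin B$, we have $a\in A\setminus B$, and since $b\in B$ and $b\notin A$, we have $b\in B\setminus A$. Therefore $A\setminus B$ is a valid $(p,a)$-cut and $B\setminus A$ is a valid $(p,b)$-cut, so
\[
\delta(A\setminus B)\geq \lambda_{p,a}, \qquad \delta(B\setminus A)\geq \lambda_{p,b}.
\]

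Next I would invoke posimodularity: $\delta(A)+\delta(B)\geq \delta(A\setminus B)+\delta(B\setminus A)$. Since $A$ and $B$ are themselves minimum cuts we have $\delta(A)=\lambda_{p,a}$ and $\delta(B)=\lambda_{p,b}$, so the left-hand side equals $\lambda_{p,a}+\lambda_{p,b}$, which by the previous display is at most the right-hand side. Equality must therefore hold throughout, and in particular $\delta(A\setminus B)=\lambda_{p,a}$, which means $A\setminus B$ is a minimum $(p,a)$-cut.

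There is no real obstacle here, aside from being careful about the side conventions (which nodes lie on which side of each cut); the symmetry between the roles of $a$ and $b$ together with the two hypotheses $a\notin B$ and $b\notin A$ is exactly what makes posimodularity the right tool (in contrast to Lemma~\ref{lem:latest_union}, which used submodularity).
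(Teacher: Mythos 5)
Your proof is correct and follows essentially the same route as the paper: apply posimodularity to $A,B$ and use that $B\setminus A$ is a valid $(p,b)$-cut (so $\delta(B\setminus A)\geq\delta(B)$) to force $\delta(A\setminus B)\leq\delta(A)$, hence optimality of $A\setminus B$. Your sandwich/equality phrasing is just a slightly more explicit write-up of the paper's one-line argument.
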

\begin{proof}
Considering Fact~\ref{Fact:Posimodularity}, it only remains to show that $\delta(B)-\delta(B\setminus A)\leq 0$. But this is immediate, as $b$ is in both sets, and $B$ is a minimum cut for $b$.
\end{proof}

\subsection{Expander Decomposition}
We mostly follow notations and definition from~\cite{SW19}.
Let 
\ifprocs
$\vol_G(C)$ $=\sum_{v\in C}\cdeg_G(v)$
\else
$\vol_G(C)=\sum_{v\in C}\cdeg_G(v)$
\fi
be the \emph{volume} of $C\subseteq V$, where subscripts indicate what graph we are using, and are omitted if it is clear from the context.
The \emph{conductance} of a cut $S$ in $G$ is $\Phi_G(S)= \frac{\delta(S)}{\min(\vol_G(S),\vol_G(V\setminus S))}$.
The expansion of a graph $G$ is $\Phi_{G} = \min_{S\subset V}\Phi_G(S)$. 
If $G$ is a singleton, we define $\Phi_{G}=1$. 
Let $G[S]$ be the subgraph induced by $S\subset V$, and we denote $G\{S\}$ as the induced subgraph $G[S]$ but with added self-loops $e=(v,v)$ for each edge $e'=(v,u)$ where $v\in S,u\notin S$ (where each self-loop contributes $1$ to the degree of a node), so that any node in $S$ has the
same degree as its degree in $G$.
Observe that for any $S\subset V$, $\Phi_{G[S]}\ge\Phi_{G\{S\}}$, because the self-loops increase the volumes but not the values of cuts. We say a graph $G$ is a $\phi$ \emph{expander} if $\Phi_{G}\ge\phi$, and we call a partition $H_1,\ldots,H_h$ of $V$ 
a $\phi$ expander decomposition if $\min_{i}\Phi_{G[V_{i}]}\ge\phi$.

\begin{theorem}[Theorem $1.2$ in~\cite{SW19}]
\label{thm:exp-dec}
Given a graph $G=(V,E)$ of $m$ edges and a parameter $\phi$, there is a randomized algorithm that with high probability finds a partitioning of $V$ into $V_{1},\dots,V_{k}$ such that $\forall i:\Phi_{G[V_{i}]}\ge\phi$
and $\sum_{i}\delta(V_{i})=O(\phi m\log^{3}m)$. In fact, the algorithm
has a stronger guarantee that $\forall i:\Phi_{G\{V_{i}\}}\ge\phi$. The running
time of the algorithm is $O(m\log^{4}m/\phi)$.
\end{theorem}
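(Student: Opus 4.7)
The plan is to give a recursive ``cut-or-certify'' scheme. On the current component $C$, invoke a subroutine that either (i) returns a cut $(A, C\setminus A)$ of conductance $O(\phi \log^2 m)$ in $G\{C\}$, on which the algorithm recurses on both sides, or (ii) certifies that $\Phi_{G\{C\}} \geq \phi$, in which case $C$ becomes a final piece. At the root, run this on all of $V$. For the subroutine, I would use the cut-matching game of Khandekar--Rao--Vazirani: it reduces approximate sparsest-cut to $O(\log^2 m)$ rounds of single-commodity flow, where each round routes a random bipartite demand and uses the resulting min-cut to update either an embedded expander (certifying expansion) or a sparse cut. Each flow routing is done with a \emph{local} push-relabel variant truncated at flow value $O(\phi \cdot \vol(C))$, so one subroutine call runs in $\tilde O(\vol(C)/\phi)$ time.

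For the cut-edge bound, observe that any cut returned has $\delta(A) \leq O(\phi\log^2 m)\cdot\min(\vol(A), \vol(C\setminus A))$. Charging each cut to its smaller side, each edge's endpoint can lie on a smaller side at most $O(\log m)$ times along its path down the recursion tree, since its side's volume halves at each such event. Summing over all cuts,
\[
\sum_i \delta(V_i) \;\leq\; O(\phi \log^2 m)\cdot \sum_{\text{cuts}} \min(\vol, \vol_{\text{other}}) \;=\; O(\phi m \log^3 m).
\]
The running time is analyzed the same way: the total volume processed at each level is $O(m)$, the subroutine costs $\tilde O(1/\phi)$ per unit volume, and there are $O(\log m)$ effective levels, giving $O(m \log^4 m / \phi)$.

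The main obstacle is upgrading $\Phi_{G[V_i]}\ge\phi$ to the stronger $\Phi_{G\{V_i\}}\ge\phi$: after carving out $V_i$, boundary nodes have many self-loops in $G\{V_i\}$ which act as ``trapped'' volume and can create sparse cuts isolating clusters of such boundary nodes, even when $G[V_i]$ is a good expander. My plan is to follow each cut by a \emph{trimming} post-processing: iteratively peel off nodes of $V_i$ whose self-loop mass exceeds, say, half of their degree, re-attaching them to the opposite side of the cut. The delicate part, which is where I would spend the most effort, is bounding the total volume removed by trimming by $O(\delta(V_i)/\phi)$ via a local max-flow from the boundary back into $V_i$ (a source-sink computation where sources sit on boundary nodes with capacity equal to the number of lost edges), and showing that this flow either routes (certifying that the remaining $V_i$ is a $\phi$-expander with self-loops) or identifies the nodes to trim. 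Implementing this with the same local push-relabel primitive used in the cut-matching game keeps the per-invocation cost at $\tilde O(\vol/\phi)$, so the trimming does not inflate the $\log$ exponents in either the cut-size or the runtime bound.
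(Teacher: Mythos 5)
First, note that the paper does not prove this statement at all: Theorem~\ref{thm:exp-dec} is imported verbatim as a black box (Theorem~1.2 of~\cite{SW19}), so the only meaningful comparison is with the Saranurak--Wang proof itself. Your sketch is recognizably an attempt at that proof (cut-matching game plus local-flow trimming), but it has a genuine gap in the place where~\cite{SW19} does its real work. Your recursion is ``find a cut of conductance $O(\phi\log^2 m)$ and recurse on \emph{both} sides, or certify expansion,'' and your running-time analysis asserts ``$O(\log m)$ effective levels.'' That depth bound is unjustified: the cut-matching game can return a very unbalanced cut, in which case the large side retains almost all of the volume and you re-invoke the $\tilde O(\vol(C)/\phi)$ subroutine on essentially the same instance; nothing in your scheme prevents $\mathrm{poly}(n)$ recursion depth, which destroys the $O(m\log^4 m/\phi)$ bound (the edge-count charging to the smaller side survives, but the time bound does not). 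The key lemma you are missing is exactly what~\cite{SW19} adds to KRV: when the game stops with an unbalanced cut, the large side is certified to be a \emph{near}-expander (expansion holds only for cuts measured against original volumes, with flow allowed to escape through the boundary), and the Trimming step (their Unit-Flow local push-relabel) converts this near-expander into a genuine $\Omega(\phi)$-expander by shaving off a piece of volume $O(\delta/\phi)$; the trimmed large side then becomes a \emph{final} cluster and is never recursed on, which is what caps the recursion depth and the total work.

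Relatedly, your use of trimming is aimed at the wrong target. You invoke it to upgrade $\Phi_{G[V_i]}\ge\phi$ to $\Phi_{G\{V_i\}}\ge\phi$, but in~\cite{SW19} the stronger guarantee is obtained for free by running the cut-matching game with respect to the original degrees, i.e.\ on $G\{C\}$ throughout (which your step (ii) already implicitly does, making your later ``main obstacle'' paragraph internally inconsistent). The trimming machinery you describe --- a local flow problem with sources of capacity proportional to lost boundary edges, either routing (certifying expansion of the remainder) or identifying nodes to peel, with trimmed volume $O(\delta(V_i)/\phi)$ --- is essentially the right technical object, but it must be deployed in the unbalanced-cut case to avoid recursing on the near-expander side, not as a self-loop post-processing step. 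As written, the proposal neither proves the stated running time nor identifies why near-expansion (rather than expansion) is what the game certifies, so the argument does not go through without importing precisely the pruning/trimming analysis that constitutes the heart of~\cite{SW19}.
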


\section{Motivating Examples and the Isolating-Cuts Procedure}
\label{sec:proc}

This section attempts to explain the thought process that has led to the introduction of two new tools into the context of \GHT algorithms: the \pC  and the \expanders procedures, as well as to give more details about the former while deferring the latter to 
\ifprocs
the full version.
\else
Section~\ref{sec:expanders}.
\fi
Let $G$ be a graph and $T$ be its cut-equivalent tree.
As described in Section~\ref{overview}, previous techniques (Gomory-Hu recursion, partial trees, and randomized pivot selection) can reach the point that $T$, when rooted at a designated pivot $p$, has small sublinear depth.
This suggests that a path-like $T$ is not the hardest to construct since its size must be sublinear.
It is natural to wonder if the other extreme of star-like $T$ is also easy, or perhaps it can lead to a hardness reduction.

The most extreme case is when $T$ is precisely a star with $p$ at the center and all other nodes as leaves. 
Constructing this tree is easy because the minimum $(p,v)$-cut for all $v \in V$ is the singleton $(\{v\},V\setminus \{v\})$ and its value is $\deg(v)$.
(However, verifying that the star is the correct tree is perhaps as hard as the general problem.)

The following two examples show that by changing the star slightly one reaches challenging scenarios that require new tools.

\subsection{Many small subtrees}\label{sec3:motivation}
Consider a graph whose cut-equivalent tree $T$ is a ``star of triples graphs'' as depicted in Figure~\ref{Figs:Triangles}. In this example, $T$ is simply a star on $n$ nodes where each leaf is connected to additional two leaves, and so altogether the number of nodes is $3n+1$. Call the center node $p$, the inner nodes $L_1=\{u_1,\dots,u_n\}$, and the leaves $L_2$.
Assuming that $p$ and $L_1$ are given in advance, how fast can we identify which pair of leaves belongs to which inner node in the cut-equivalent tree (thus constructing $T$)?
Note that it can be done by simply asking a \MF query between $p$ and  each node $u \in L_1$, but can it be done faster than $\Omega(n)$ applications of \MF?\footnote{Readers familiar with \APSP-hardness results may recall that triangle identification is at the core of most reductions, making this example appealing.}

\begin{figure}[ht]
  \begin{center}
    \includegraphics[width=4.0in]{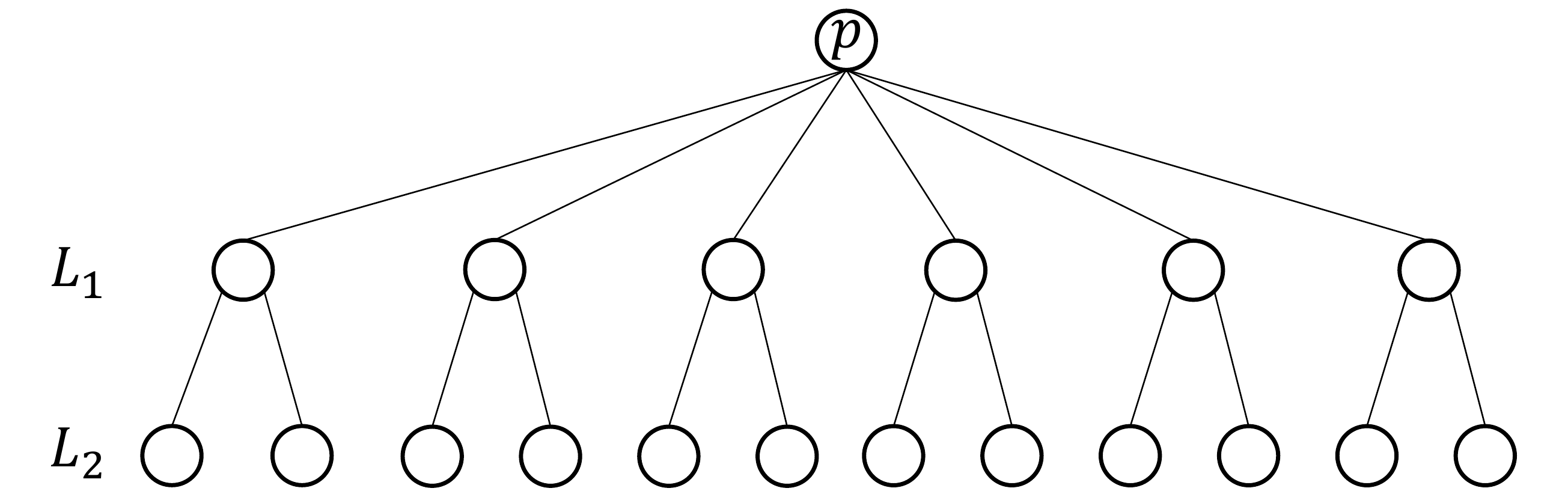}
    \end{center}
    \caption{The cut-equivalent tree $T$.}
    \label{Figs:Triangles}
\end{figure}

It turns out that the answer is yes!  
A simple but surprising algorithm computes $T$ using only $\tO(1)$ queries to a $\MF$ algorithm (on a weighted graph).

\paragraph{The Algorithm}
\begin{enumerate}
\item Add edges of very large capacity $U:=|E|^2$ from $p$ to every node in $L_1$. 
Denote the new graph by $G_m$ and its cut-equivalent tree by $T_m$. 
\item
Repeat $10\log^2 n$ times:
\begin{enumerate}
\item
In $G_m$ split $p$ to $p_1$ and $p_2$, connecting the (previously connected to $p$) newly added edges 
to $p_1$ or $p_2$ with probability $1/2$, and the rest of the edges arbitrarily. 
Then add an edge of weight $U^2$ between $p_1$ and $p_2$. 
Denote the new graph $G_h$ and its cut-equivalent by $T_h$.
\item
Find the minimum $(p_1,p_2)$-cut in $G_h$.
\end{enumerate}
\item
Construct the new tree $T'$ as follows. For every node $u\in L_1$ connect it in $T'$ to the leaves in $L_2$ that went to the same side as $u$ in all repetitions.
\item
If any node in $L_1$ has a number of leaves that is different than two, output ``failure'', otherwise report $T'$. 
\end{enumerate}

\paragraph{Why does it work?}
Consider $G_m,T_m,G_h,G_h$ in one iteration. 
\begin{observation}\label{Obs:overview_T_m}
The tree $T_m$ is identical to $T$, except for the weights of edges adjacent to $p$.
\end{observation}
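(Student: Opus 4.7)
The plan is to apply Lemma~\ref{Lemma:GH_edges} to the tree $T'$ obtained from $T$ by replacing each weight $w_T(p, u_i)$ of an edge incident to $p$ by $w_T(p, u_i) + U$, and keeping all other edge weights unchanged. Since the observation only asks for the existence of a cut-equivalent tree of $G_m$ with this structure, it suffices to verify the hypothesis of Lemma~\ref{Lemma:GH_edges}: for each edge of $T'$, the associated bipartition of $V$ must be a minimum cut in $G_m$ between its endpoints, with value equal to the weight assigned in $T'$.

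First I would handle the edges $(u_i, \ell)$ of $T$ with $\ell \in L_2$. The associated bipartition is $(\{\ell\}, V \setminus \{\ell\})$, and since no added edge is incident to $\ell$, the value of this cut in $G_m$ equals its value in $G$, which is $w_T(u_i, \ell)$. Because $G_m$ is obtained from $G$ only by adding edges, minimum cut values can only weakly increase, so this bipartition remains a minimum $(u_i, \ell)$-cut in $G_m$ of the desired value.

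Next I would handle the edges $(p, u_i)$ of $T$. The associated bipartition is $(S_i, V \setminus S_i)$, where $S_i$ is the subtree of $u_i$ in $T$ (namely $\{u_i\}$ together with its two leaves). Let $\beta_i := w_T(p, u_i)$, which equals the minimum $(p, u_i)$-cut value in $G$. Exactly one added edge of capacity $U$ crosses this bipartition (the one between $p$ and $u_i$), so the cut's value in $G_m$ is $\beta_i + U$. To see this is minimum, I would analyze any $(p, u_i)$-cut $(S, V \setminus S)$ in $G_m$: the number of new edges crossing it equals $|L_1 \cap S|$, which is at least $1$ because $u_i \in S$. If $|L_1 \cap S| = 1$, then $S \cap L_1 = \{u_i\}$ and the value is $\delta_G(S) + U \geq \beta_i + U$ since $\beta_i$ is the minimum $(p, u_i)$-cut value in $G$. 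If $|L_1 \cap S| \geq 2$, then the value is at least $2U$, which exceeds $\beta_i + U$ because $U = |E|^2$ strictly dominates any cut value in $G$.

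The main obstacle is exactly this last case analysis; everything else is bookkeeping, and it hinges entirely on the gap between $U$ and the largest possible cut value in $G$. Once the hypothesis of Lemma~\ref{Lemma:GH_edges} is verified for every edge of $T'$, the lemma certifies that $T'$ is a cut-equivalent tree of $G_m$. By construction $T'$ matches $T$ on the tree topology and on the weights of every edge not incident to $p$; identifying $T_m$ with $T'$ completes the proof of the observation.
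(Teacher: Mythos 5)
Your proof is correct and follows essentially the same route as the paper: verify, edge by edge of $T$, that the corresponding bipartition remains a minimum cut in $G_m$ (unchanged for edges not touching $p$, increased by exactly $U$ for edges at $p$ while every other $(p,u_i)$-cut increases by at least $U$), and conclude via the Gomory--Hu edge criterion (Lemma~\ref{Lemma:GH_edges}). The only difference is cosmetic: your split into $|L_1\cap S|=1$ versus $|L_1\cap S|\ge 2$ is unnecessary, since $\delta_{G_m}(S)=\delta_G(S)+U\cdot|L_1\cap S|\ge \beta_i+U$ already follows from $|L_1\cap S|\ge 1$ and $\delta_G(S)\ge\beta_i$, which is exactly the paper's one-line argument.
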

For every edge $\{u,v\}$ in $T$, consider two cases. If $u$ or $v$ is $p$, then the new edge increases all $(u,v)$-cuts by at least $U$, and since the old cut increases by exactly $U$, it is a minimum cut in $T_m$ as well.
Otherwise, if neither of $u$ nor $v$ is $p$, then the minimum $(u,v)$-cut is still the same because the new edges did not affect this cut, and can only increase the weights of other cuts.
Thus, the minimum $(u,v)$-cut in each case is the same in $T$ and $T_m$.
\begin{observation}\label{Obs:overview_T_h}
There is an edge between $p_1$ and $p_2$ in $T_h$. If it is contracted $T_h$ becomes identical to  $T_m$.
\end{observation}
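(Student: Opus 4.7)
The plan is to prove both assertions via the large scale gap between the planted edge weight $U^2$ and every other pairwise connectivity in $G_h$, together with the elementary fact that in any cut-equivalent tree the weight of an edge $(x,y)$ equals $\lambda_{x,y}$ and the minimum $(x,y)$-flow is realized by the lightest edge on the $x$-to-$y$ path.

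For the existence of a direct $(p_1,p_2)$ edge, I would first note that $\lambda^{G_h}_{p_1,p_2}\ge U^2$ because any $(p_1,p_2)$-cut must sever the planted edge of weight $U^2$. Second, splitting $p$ into $p_1,p_2$ does not affect the weighted degree of any other node, so for every $x\in V(G_h)\setminus\{p_1,p_2\}$ we have $\cdeg_{G_h}(x)=\cdeg_{G_m}(x)\le \deg_G(x)+U = O(U)$. Consequently $\lambda^{G_h}_{x,y}<U^2$ whenever $\{x,y\}\ne\{p_1,p_2\}$. Now the minimum edge weight along the $p_1$-to-$p_2$ path in $T_h$ equals $\lambda^{G_h}_{p_1,p_2}\ge U^2$, while the weight of any edge $(x,y)$ of $T_h$ equals $\lambda^{G_h}_{x,y}$; hence every edge on this path must join $p_1$ and $p_2$, and since $T_h$ is a tree the path has length exactly one.

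For the contraction statement, let $T'$ be the tree obtained from $T_h$ by contracting the $(p_1,p_2)$ edge into a single node that I identify with $p$, so $V(T')=V(G_m)$. To identify $T'$ with $T_m$ I would invoke Lemma~\ref{Lemma:GH_edges}: it suffices to check that removing any edge $(x,y)$ of $T'$ yields a minimum $(x,y)$-cut of $G_m$. Each such edge is inherited from an edge of $T_h$ distinct from $(p_1,p_2)$; deleting it in $T_h$ leaves $p_1$ and $p_2$ on the same side of the resulting bipartition (they are no longer separated by the removed edge, only by the contracted one), so the bipartition does not sever the planted edge and descends to a bipartition of $V(G_m)$ of the same value. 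Moreover, every $(x,y)$-cut of $G_m$ lifts to a $(x,y)$-cut of $G_h$ of the same value by placing $p_1$ and $p_2$ together on the side containing $p$, so $\lambda^{G_m}_{x,y}=\lambda^{G_h}_{x,y}$, confirming that the descended bipartition is indeed a minimum cut of $G_m$. Lemma~\ref{Lemma:GH_edges} then certifies $T'$ as a cut-equivalent tree of $G_m$.

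The hard part is really conceptual rather than computational: one must carefully maintain the correspondence between cuts of $G_h$ that do not sever $(p_1,p_2)$ and cuts of $G_m$, and confirm that no minimum cut between a non-$(p_1,p_2)$ pair ever needs to cross the planted edge. The magnitude gap $U^2\gg O(U)$ makes this automatic. One further subtlety concerns the word ``identical'': since cut-equivalent trees are in general not unique, the cleanest reading is that $T'$ is a valid realization of $T_m$, which is all the informal exposition requires; a literal equality can be enforced by choosing the Gomory-Hu simulation producing $T_m$ to be consistent with the one producing $T_h$.
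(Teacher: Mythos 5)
Your proof is correct and rests on the same idea as the paper's one-line argument: the planted $U^2$ edge dwarfs everything else, so $p_1$ and $p_2$ lie on the same side of every minimum cut for any pair other than $\{p_1,p_2\}$, forcing them to be adjacent in $T_h$ and making the contraction harmless. You simply spell this out in more detail (adjacency via the edge-weight-equals-connectivity property along the tree path, and the contraction step via Lemma~\ref{Lemma:GH_edges} with the lift/descent of cuts), and your remark about non-uniqueness of cut-equivalent trees is a fair reading of the informal word ``identical.''
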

This is true because $p_1$ and $p_2$ are on the same side of every minimum $(a,b)$-cut for any pair $\{a,b\} \neq \{p_1,p_2\}$ because the weight of the edge between them is larger than the sum of all others.

The following is the key claim in this section.
\begin{claim}
The minimum $(p_1,p_2)$-cut in $G_h$ sends each pair of siblings in $L_2$ to the same side as their parent in $L_1$.
\end{claim}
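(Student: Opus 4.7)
The plan is to pin down the cut-equivalent tree $T_h$ of $G_h$ using the two preceding observations, and then read off the minimum $(p_1,p_2)$-cut directly from $T_h$.

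First I would apply Observation~3.2: $T_h$ contains an edge $\{p_1,p_2\}$, and contracting it yields $T_m$. Combined with Observation~3.1, $T_m$ has exactly the same tree structure as $T$ except for the weights of edges incident to $p$. Consequently, $T_h$ can be described concretely: it is obtained from the star of triples by splitting the central node $p$ into two nodes $p_1,p_2$ joined by a tree edge, with each $u\in L_1$ (originally a neighbor of $p$ in $T$) now hanging off either $p_1$ or $p_2$, and with each pair of leaves in $L_2$ still attached to its parent $u\in L_1$ as before.

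Next I would use cut-equivalence of $T_h$ with $G_h$: the minimum $(p_1,p_2)$-cut in $G_h$ has the same value and bipartition as the cut in $T_h$ obtained by removing the unique $p_1p_2$ tree edge. Removing that edge disconnects $T_h$ into two subtrees; one contains $p_1$ together with the subset $A\subseteq L_1$ of nodes attached to $p_1$ in $T_h$, and the other contains $p_2$ together with $L_1\setminus A$. Since each leaf $\ell\in L_2$ is attached in $T_h$ to its parent $u\in L_1$ (the split of $p$ only affects the attachments of the $L_1$-nodes, not of the leaves), both leaves of any given $u$ end up in the same component as $u$. This is exactly the statement of the claim.

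The main obstacle is the first step: convincing oneself that Observations~3.1 and~3.2 really force the leaf-to-parent attachments in $T_h$ to be unchanged from those in $T$. This is the essential content, because once $T_h$'s shape is known, the rest is just deleting a single tree edge. The argument for preserving the leaf attachments is that contraction of the $p_1p_2$ edge in $T_h$ must reproduce $T_m$ node-for-node and edge-for-edge, while $T_m$ agrees with $T$ outside of the weights on edges incident to $p$; so no incidence involving an $L_2$-leaf is altered between $T$, $T_m$, and $T_h$. No detailed weight calculations are needed, and in particular the claim does not require identifying which $p_i$ each $u$ attaches to — only that its two children travel with it.
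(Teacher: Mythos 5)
Your proof is correct and follows essentially the same route as the paper: both use Observations~3.1 and~3.2 to conclude that in $T_h$ each pair of $L_2$-leaves is still attached to its parent in $L_1$ while $p_1,p_2$ are adjacent, and then read the minimum $(p_1,p_2)$-cut off $T_h$ by deleting the $\{p_1,p_2\}$ edge, so that each parent and its two children lie in one connected subtree and hence on the same side. The only difference is presentational — you reconstruct the full shape of $T_h$ before cutting, whereas the paper argues locally per sibling pair — but the substance is identical.
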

\begin{proof}
Let $u_1,u_2\in L_2$ be a pair of siblings whose father is $u \in L_1$. By Observations~\ref{Obs:overview_T_m} and~\ref{Obs:overview_T_h}, $u_1$ and $u_2$ are $u$'s neighbors in $T_h$. 
Consider $T_h$ after the removal of the edge $\{p_1,p_2\}$: the two resulting subtrees are the two sides of the minimum $(p_1,p_2)$-cut in $G_h$.
Since $u,u_1,u_2$ are connected in $T_h$, it follows that they must be on the same side of the cut.
\end{proof}

Hence, every node $u\in L_1$ can know its children with high probability by seeing which two nodes in $L_2$ are always sent with it to the same side of the minimum $(p_1,p_2)$-cuts throughout all iterations. 

\subsection{The Isolating-Cuts Procedure}

The above algorithm can be generalized to obtain the \pC procedure that is a key ingredient of the new algorithm.
Notably, the added weights are not necessary (can be replaced with contractions) nor is the randomness (can be replaced with deterministic separating choices).
 Let $MF(N,M,F)$ be an upper bound on \MF in graphs with $N$ nodes, $M$ edges, and where the flow size is bounded by $F$.
We will utilize the following statement that essentially follows from the very recent work of Li and Panigrahy \cite[Theorem II.2]{LP20} for \GMC. 
We provide another proof in 
\ifprocs
the full version
\else
Section~\ref{app:proc} 
\fi
both for completeness and because it could be of interest as it exploits the structure of the \GHT (as we did above) instead of using the submodularity of cuts directly. In particular, we use ideas similar to Lemma~\ref{Lemma:GH_edges}.

\begin{lemma}[The \pC Procedure]
\label{thm:proc_main}
Given an undirected graph $G=(V,E,c)$ on $n$ nodes and $m$ total edges, a \emph{pivot} node $p\in V$, and a set of \emph{connected} vertices $C \subseteq V$, let $(C_v,V\setminus C_v)$ where $v \in C_v, p \in V\setminus C_v$ be the latest minimum $(p,v)$-cut for each $v \in C$.
There is a deterministic $O(MF(n,m,c(E)) \cdot \log{n})$-time algorithm that returns $|C|$ disjoint sets $\{ C'_v \}_{v \in C}$ such that
for all $v\in C$: if $C_{v} \cap C = \{v\}$ then $C'_v = C_v$.
\end{lemma}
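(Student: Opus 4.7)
The plan is to adapt the Li--Panigrahi ``isolating cuts'' strategy with binary labels on $C$, tailored to the pivot setting by always keeping $p$ grouped with the ``$0$''-side. First, I would fix distinct labels $\ell\colon C\to\{0,1\}^k\setminus\{0^k\}$ with $k=\lceil\log(|C|+1)\rceil$, reserving $0^k$ as $p$'s implicit label. For each bit $i\in[k]$ I contract $\{p\}\cup\{u\in C:\ell(u)_i=0\}$ into a source $s_i$ and $\{u\in C:\ell(u)_i=1\}$ into a sink $t_i$, run one max-flow call, and record the \emph{latest} minimum $(s_i,t_i)$-cut $(X_i,Y_i)$ lifted to $V$ (with sink side $Y_i$ minimal). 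For each $v\in C$ I set $Z_i^v:=X_i$ if $\ell(v)_i=0$ and $Z_i^v:=Y_i$ if $\ell(v)_i=1$, define $M_v:=\bigcap_i Z_i^v$, and output $C'_v$ as the connected component of $v$ in $G[M_v]$. Total cost: $O(\log n)$ max-flow calls, matching the stated bound.

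Disjointness is immediate: any two distinct $v,v'\in C$ disagree on some bit $i$, making $Z_i^v$ and $Z_i^{v'}$ disjoint and hence $M_v\cap M_{v'}=\emptyset$. The constraint $\ell(v)\neq 0^k$ also ensures $p\notin M_v$. The first main step of correctness is to show $C_v\subseteq M_v$ whenever $C_v\cap C=\{v\}$. For a bit $i$ with $\ell(v)_i=1$, apply submodularity to $Y_i$ and $C_v$: the sets $Y_i\cup C_v$ (a valid source--sink-separating cut for computation $i$, using that $v$ is the only terminal inside $C_v$) and $Y_i\cap C_v$ (a valid $(p,v)$-cut sink, since it contains $v$ but not $p$) yield
\[ \mu_i+\lambda_{p,v} \ \geq\ \delta(Y_i\cap C_v)+\delta(Y_i\cup C_v) \ \geq\ \lambda_{p,v}+\mu_i, \]
forcing equality and making $Y_i\cap C_v$ a minimum $(p,v)$-cut. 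The latestness (minimality) of $C_v$ then gives $C_v\subseteq Y_i=Z_i^v$. Bits with $\ell(v)_i=0$ are symmetric via posimodularity applied to $Y_i$ and $C_v$, yielding $C_v\cap Y_i=\emptyset$ and hence $C_v\subseteq X_i=Z_i^v$. Intersecting over bits, and noting that $C_v$ is connected (standard for a latest minimum cut in a connected graph), $C_v$ lies in the connected component of $v$ in $G[M_v]$, i.e.\ $C_v\subseteq C'_v$.

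The main obstacle is the reverse inclusion $C'_v\subseteq C_v$. Here I would exploit the latestness of each $Y_i$ (equivalently, the maximality of each $X_i$): for any non-terminal $u\in Y_i$, minimality of $Y_i$ forces $Y_i\setminus\{u\}$ not to be a minimum cut in computation $i$, so $u$ has strictly more neighbors inside $Y_i$ than outside; a symmetric statement holds for non-terminals in $X_i$. Since every $u\in M_v$ satisfies this on the $Z_i^v$-side of each of the $k$ cuts, any edge leaving $C'_v$ must in fact leave $M_v$. Combined with a second round of submodularity applied to $C_v$ and $C'_v$ (mirroring the previous step but with roles reversed) this yields $\delta(C'_v)\leq\lambda_{p,v}$, so $C'_v$ is itself a minimum $(p,v)$-cut; the minimality of $C_v$ then collapses $C_v\subseteq C'_v$ to equality. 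This is precisely in the spirit of Lemma~\ref{Lemma:GH_edges}: each recorded cut $(X_i,Y_i)$ must be simultaneously honored as a minimum cut, and the only consistent packaging around $v$ inside $M_v$ is $C_v$ itself. Converting the per-bit ``attraction to $v$'s profile'' into this single global minimum-cut identity is the technical heart of the argument and the step I expect to demand the most care.
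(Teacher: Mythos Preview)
Your forward inclusion $C_v\subseteq M_v$ (hence $C_v\subseteq C'_v$) is clean and correct. The gap is entirely in the reverse inclusion, and it is a real one: taking the connected component of $v$ in $G[M_v]$ does \emph{not} recover $C_v$ in general. Two of your sub-claims fail. First, the ``symmetric statement for non-terminals in $X_i$'' is false: $Y_i$ is minimal but $X_i$ is only maximal, so for a non-terminal $u\in X_i$ you get $\delta(u,X_i)\ge\delta(u,Y_i)$ with no strict inequality. Second, even if you had $\delta(C'_v)=\lambda_{p,v}$, the latestness of $C_v$ gives only $C_v\subseteq C'_v$, never the reverse; a minimal minimum cut is contained in every minimum cut, not equal to it.

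Here is a concrete counterexample. Take $V=\{p,v,w_1,w_2,a\}$, $C=\{v,w_1,w_2\}$ with labels $\ell(v)=11$, $\ell(w_1)=01$, $\ell(w_2)=10$, and edges $p\text{--}v$ (weight $1$), $v\text{--}a$ ($1$), $a\text{--}w_1$ ($0.6$), $a\text{--}w_2$ ($0.6$), $p\text{--}w_1$ ($10$), $p\text{--}w_2$ ($10$). Then $C_v=\{v\}$ with $\lambda_{p,v}=2$, and $v$ is isolated. Bit~1 contracts $\{p,w_1\}$ vs.\ $\{v,w_2\}$ and yields $Y_1=\{v,w_2,a\}$; bit~2 gives $Y_2=\{v,w_1,a\}$; hence $M_v=Y_1\cap Y_2=\{v,a\}$, which is connected, so your $C'_v=\{v,a\}\neq C_v$. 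Indeed $\delta(\{v,a\})=2.2>\lambda_{p,v}$, so $C'_v$ is not even a minimum cut.

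The standard repair (and what Li--Panigrahi actually do) is to \emph{not} take a connected component but to run one more max-flow: for each $v$, contract $V\setminus M_v$ to a node $q_v$ and compute the latest minimum $(q_v,v)$-cut. Because $C_v\subseteq M_v$, this returns exactly $C_v$; and because the $M_v$'s are pairwise disjoint, all these instances together have $O(n)$ nodes and $O(m)$ edges and can be batched into a single max-flow call. With that extra step your scheme is correct and matches the bound. The paper's own proof takes a different route: it halves $C$ recursively, using heavy auxiliary edges (later replaced by contractions) and arguing via the structure of the Gomory-Hu tree that $C_v$ stays intact through every split; at the leaves of the recursion it computes a latest minimum cut directly. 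Both approaches ultimately rely on one genuine minimum-cut computation per isolated terminal (in a small contracted instance); neither lets you skip that step.
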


\subsection{A Single Large Subtree}
\label{sec:clcr}

To solve the previous example, we have exploited the fact that the identity of the nodes in $L_1$ is known to us in advance. 
However, by picking each node to be ``connected'' (and added to the set $C$) with probability $1/2$ there is a $1/8$ chance that a parent $u\in L_1$ is in $C$ while its two children $u_1,u_2 \in L_2$ are not.
Then, applying Lemma~\ref{thm:proc_main} $O(\log n)$ times is sufficient for constructing $T$.
More generally, if all nodes in $T$ have a subtree that is small, e.g. with $n^{\eps}$ descendants, then $n^{\eps}$ repetitions of the \pC procedure with random choices for $C$ ought to suffice. 

This leads to the second hard case where there is a single (unknown) node with a subtree of size $\Omega(n)$.
Consider the example in Figure~\ref{Figs:clcr} where $T$ is composed of a left star centered at $c_\ell$ and a right star centered at $c_r$, and suppose that $c_r$ is the (known) pivot $p$ while the challenge is to identify $c_\ell$.
The minimum $(c_\ell,c_r)$-cut is large and balanced $(C_\ell,C_r)$ while nearly all other minimum $(u,v)$-cuts in the graph are the trivial singleton cuts.
The weights on $T$ can be set up so that $c_\ell$ is almost like a needle in a haystack: asking almost any $(v,c_r)$ \MF query for any $v \neq c_\ell$ may not reveal any information about $c_\ell$. Suppose that $\lambda_{c_\ell,c_r}$ is $w$, about half of the edges between $c_r$ and its children are $>w$ and half are $<w$, while nearly all nodes on the left have an edge of weight $<w$ to $c_\ell$.
Thus, even if a randomly chosen pair $u,v$ will have one node on each side, the minimum $(u,v)$-cut will be either $(\{u\},V\setminus \{u\})$ or $(\{v\},V\setminus \{v\})$ with high probability, which does not help discover $c_\ell$ nor any of its cut members $C_\ell$.

\begin{figure}[ht]
  \begin{center}
    \includegraphics[width=2.5in]{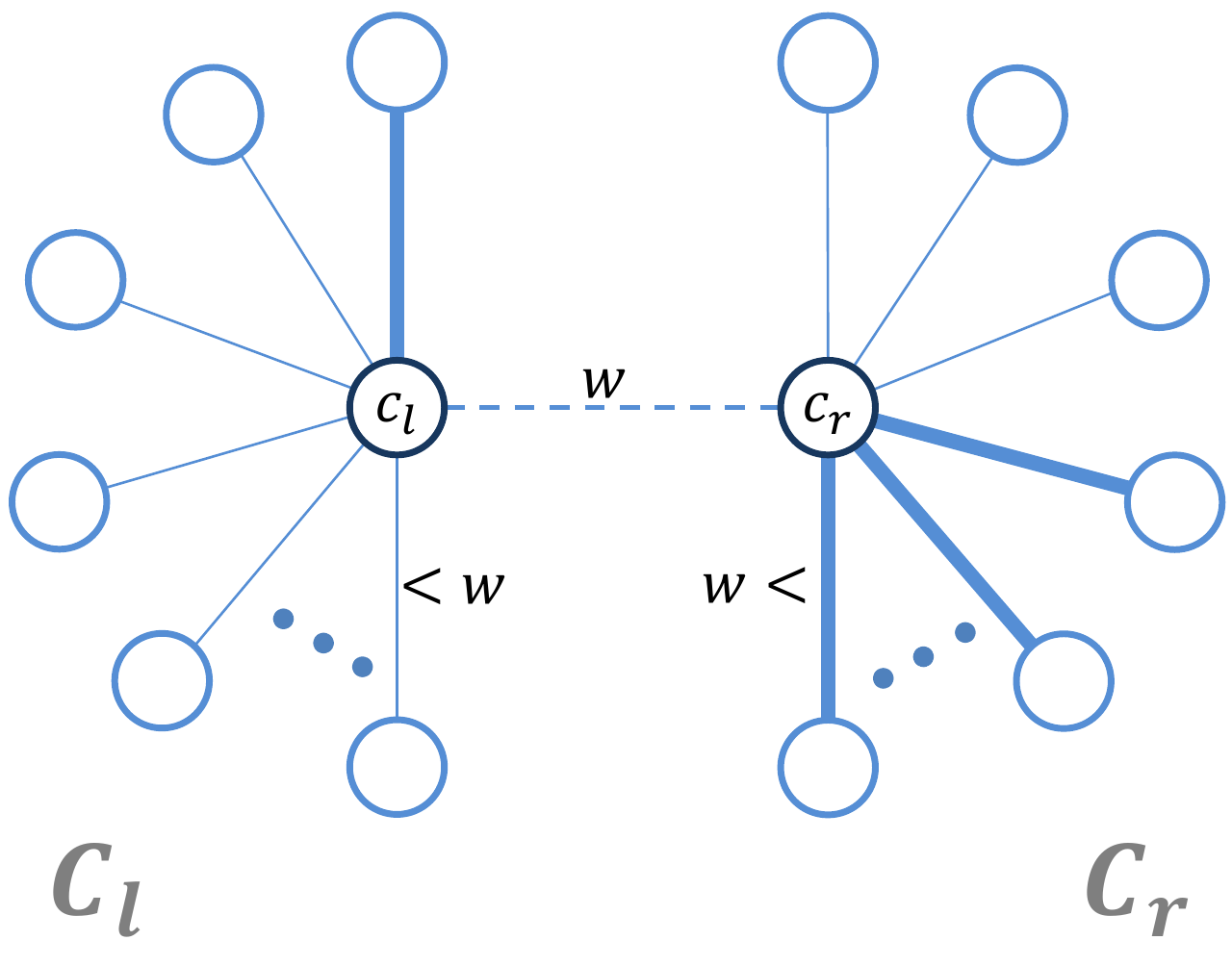}
    \end{center}
    \caption{ The cut-equivalent tree $T$ of a hard case where $c_r$ is given but $c_\ell$, the only node with a non-trivial minimum cut to $c_r$, must be identified.
    The weight of the $(C_\ell,C_r)$ cut is $\lambda_{c_l,c_r} = w$, illustrated by the dashed edge at the center, and the edges of weight $>w$ are thick, while edges of weight $<w$ are thin. The minimum cut between any pair of leaves $u,v$ is trivial unless $u$ is attached to one of the bold edges on the left and $v$ is attached to one of the bold edges on the right.
    }
    \label{Figs:clcr}
\end{figure}

To apply the \pC procedure one must find a way to \emph{isolate} $c_\ell$, i.e. to choose a set of connected nodes $C$ such that $c_\ell$ is in $C$ but none of its $\Omega(n)$ cut-members are in $C$.
This is an impossible task \emph{unless} the structure of the graph $G$ is exploited.
The \expanders procedure manages to do just that, as overviewed in Section~\ref{overview} and fully described in 
\ifprocs
the full version,
\else
Section~\ref{sec:expanders}, 
\fi
by exploiting a decomposition of $G$ to either reach a good $C$ or to directly query the pair $c_\ell,c_r$.

Notably, this example remains the bottleneck after the new algorithm and might lead the way to the first conditional lower bound for \GHT and \APMF. 

\ifprocs\else 

\section{The Main Algorithm}
\label{Section:Cut_Alg}

This section proves the main results of the paper by describing and analyzing the new algorithm and how it can be adjusted to give different bounds.
The two most interesting ingredients of the algorithm are presented and analyzed in other sections.
All other results stated in the introduction directly follow from Theorem~\ref{mainthm} by setting $M=N^2$.
The rest of this section is dedicated to its proof.

\medskip

Given a simple graph $G=(V,E)$ on $N$ nodes and $M$ edges, the algorithm below constructs a cut-equivalent tree $T_G$ for $G$.

\paragraph{Parameter Selection and Preliminaries.}
Let $\gamma$ be a constant that controls the success probability of the algorithm.
There are three other important parameters that are set based on the \MF algorithms that the algorithm is allowed to use.
\begin{enumerate}
\item To get bound~\ref{Bound:Existing:Algebraic:Theorem}, the $\tO(\min(M^{3/2}N^{1/6} , \max(MN^{3/4},M^{3/2})))$-time algorithm from~\cite{AKT20} is sufficient when $M< N^{3/2}$. Therefore, the new algorithm is only used when $M \geq N^{3/2}$ and the following selections are made: Set $r:=\sqrt{M/N}$, $k:=\sqrt{N}$, and for $k\leq w\leq N$ define $\phi_w := \sqrt{N/M}$.

\item To get bound~\ref{Bound:Existing:Combinatorial:Theorem} the combinatorial $O(MN^{2/3})$ Max-Flow algorithm~\cite{GR98} can be used as a subroutine in the algorithm from~\cite{AKT20} to get a $\tO(M^{3/2}N^{1/3})$-time combinatorial algorithm for the case $M< N^{4/3}$. Therefore, the new algorithm is only used when $M \geq N^{4/3}$ and the following selections are made. 
Set $r:= \sqrt{M}/N^{1/6}$, $k:=\sqrt{M}\log^3 N/N^{1/6}$, and for $k\leq w\leq N$ define $\phi_w := \sqrt{M}/(N^{1/6}\cdot w )$.
\end{enumerate}

Note that our choices of parameters satisfy two assumptions that are required in order to use the 
\expanders procedure of Section~\ref{sec:expanders} (for technical reasons):
\begin{itemize}
\item It holds that $\phi_w < 1/\log^3 N$ whenever $w \geq k$. This is important because the expander decomposition algorithm requires the parameter $\phi$ to be $O(1/\log{N})$, and furthermore it must be $O(1/\log^3 N)$ for the outside edges bound to be meaningful. 
\item It holds that $r \geq 1/\phi_w$ whenever the new algorithm is used (i.e. except in the sparse regimes). 
\end{itemize}
In Section~\ref{sec:expanders} we will prove both the correctness and running time bounds of the \expanders procedure for this specific choice of the parameters.\footnote{We apologize for the inconvenience of the forward referencing between Sections~\ref{Section:Cut_Alg} and~\ref{sec:expanders}. Changing the order of sections would not resolve the issue because each of them makes certain choices based on the other.}

\paragraph{Preprocessing}

\begin{itemize}

\item First, compute a $k$-Partial Tree of $G$ in time $O(Nk^2)$ (see Lemma~\ref{Lemma:Partial}).
The next steps show how to handle each of the super-nodes where for any pair $u,v$ in the super-node we have $\lambda_{u,v} \geq k$.

\item For each $w=2^i$ where $ i \in \{ \lfloor \log k \rfloor ,\ldots, \lceil \log{N} \rceil \}$ we do the following:

\begin{itemize}
\item Compute a Nagamochi-Ibaraki sparsifier $G_w$ of $G$ with $O(Nw)$ edges such that all cuts of weight up to $2w-1$ are preserved and all cuts of weight at least $2w$
still have weight $\geq 2w$. The running time is $O(M)$.

\item Compute an expander-decomposition of $G_w$ into expanders $H^{(w)}_1,\ldots,H^{(w)}_h$ with parameter $\phi_w$, as in Theorem~\ref{thm:exp-dec}. 
%
By the parameter selection above, it holds that $\phi_w< 1/\log N$ whenever $w\geq k$ and so the theorem can indeed be used. Each expander has expansion $\geq \phi_w$ and the total number of edges outside the expanders $\sum_{i=1}^h \delta(H^w_i)$ is $\tilde{O}(Nw \phi_w)$. The running time is $\tilde{O}(Nw/\phi_w)$. 
\end{itemize} 

\end{itemize}

\paragraph{The Recursive Algorithm}
Suppose $T$ is \GHEPT of $G$ that we wish to complete into a full cut-equivalent tree, and let $V_i$ be one of $T$'s super-nodes with its corresponding auxiliary graph $G_i$, where $|V_i|=n_i', |V(G_i)|=n_i$ and $|E(G_i)|=m_i$.
(Initially, $V_i=V$.) 
The following steps show how to ``handle'' $V_i$: we will return a tree $T_i$ that is a cut-equivalent tree for $G_i$ using calls to the \pC and \expanders procedures, and by making recursive calls on super-nodes $V_{i,1},\ldots,V_{i,k}$ that are disjoint and that have size $|V_{i,j}| \leq |V_i|/2$.

\medskip

\begin{enumerate}

\item\label{Recursive_Algorithm:Resolving} If $|V_i| < r$, resolve $V_i$ by executing the Gomory-Hu algorithm. Thus the time is $n_i-1$ applications of \MF, which is at most $O(r\cdot MF(n_i,m_i,N))$ since $N$ is an upper bound on any Max-Flow in $G$. 

\item\label{Recursive_Algorithm:Sampling} Pick a random subset $S \subseteq V_i$ by picking a uniformly random node from $V_i$ repeatedly $n'_i/r  \cdot (\gamma + 2)\log{N}+1 = \tilde{O}(n_i/r)$ times. 
Use Lemma~\ref{lem:partialGH} to compute a partial tree $T_S$ for $S$ using the Gomory-Hu algorithm.
This partitions $V_i$ into $t$ super-nodes $V_{i,1},\ldots,V_{i,t}$, connected in a partial tree $T_S$, such that each part contains exactly one node from $S$. We recurse on each super-node $V_{i,j}$ that has size $\leq n_i/2$, by defining its auxiliary graph $G_{i,j}$ using the partial tree, and obtain its cut-equivalent tree $T_{i,j}$.
Note that there could only be one super-node $V'=V_{i,j}$ whose size is $>n_i/2$ and so we have to handle it more carefully using the steps below.
Let $G'$ be its auxiliary graph (that can be obtained from the partial tree); our goal is to compute a cut-equivalent tree $T'$ of $G'$.
Once we have $T'$ and all the $T_{i,j}$'s, we simply ``fill them into'' the partial tree $T_S$ (see Lemma~\ref{lem:partial_tree_comb}) to get the final $T_i$.

From now on we only focus on the super-node $V'$, with its auxiliary graph $G'$, where $n':=|V'| \leq n'_i \leq N, n:=|V(G')| \leq n_i \leq N, m:= |E(G')| \leq m_i \leq M$. 
Let $p \in S$ be the node from $S$ that is in $V'$; this will be our \emph{pivot}.
The way $p$ was selected guarantees that $T'$ has two important properties with high probability (see Claims~\ref{cl:depth} and~\ref{cl:problematic} below): the ``depth'' is $O(r)$ and there are only $O(r)$ ``problematic'' vertices $v$ whose latest minimum $(p,v)$-cut is not small enough (in terms of $|C_v \cap V'|$).
The running time of this step is $\tilde{O}(n_i/r \cdot MF(n_i,m_i,m_i))$.

\item\label{Recursive_Algorithm:Maintaining} For each node $v \in V'$ we maintain and update an estimate $c'(v)$ for its minimum $(p,v)$-cut \emph{value} in $G'$, as well as a testifying cut $(S_v,V(G')\setminus S_v)$ whose value is equal to $c'(v)$. We initialize $c'(v)$ to be $\deg_{G}(v)$ for all nodes $v$ since this is an upper bound on the true minimum cut value to any other vertex and in any auxiliary graph obtained from a \GHEPT of $G$. This initial value is testified by the cut $(\{v\},V(G')\setminus \{v\})$. As we progress and find new cuts that separate $v$ from $p$, these estimates could decrease until they become equal to the correct value $\lambda_{p,v}$. Due to the way we update them, these estimates never decrease below $\lambda_{p,v}$.

\item\label{Recursive_Algorithm:Expander_Querying} Repeat the \expanders procedure defined in Section~\ref{sec:expanders}, $\log n+2$ times. The goal of each repetition is to compute new estimates $c'(v)$ for all nodes. After these $O(\log{n})$ repetitions, all estimates will be correct with high probability (proved in Claim~\ref{cl:repetitions}) and the testifying cuts are the \emph{latest} minimum $(p,v)$-cut for all $v \in V'$. 

\item\label{Recursive_Algorithm:Finishing_up} At this point, with high probability: for all nodes $v\in V'$ the estimates are correct $c'(v)=\lambda_{v,p}$ and we have the latest minimum $(p,v)$-cut $(C^p_v,V(G')\setminus C^p_v)$ with $v\in C^p_v$. (Note that, while this is a lot of information, we are not done with computing $T'$ because these cuts may not determine the optimal cut for some pairs $u,v \neq p$.) 
Since they are latest, we know that these cuts are non-crossing, and by Claim~\ref{cl:problematic} we know that only $r/2$ of them can have $|C_v^p \cap V'|>n'/2$.

Order all nodes $v \in V'$ by the size of $C^p_v \cap V'$, from largest to smallest, and go over this list:
First, place all nodes $v \in V'$ such that $|C^p_v\cap V'| > n'/2$ in a set $P$ and remove them from the list.
Note that by Claim~\ref{cl:problematic}, $|P|\leq r/2$.
Next, pick the remaining node $v$ with largest $|C^p_v\cap V'|$, define $V'_1 := C^p_v$ to be its side in the cut, and remove all nodes $u \in V' \cap C^p_v$ from the list. 
Repeat this until all nodes are removed from the list to get subsets $V'_1,V'_2,\ldots,V'_t \subseteq V(G')$.
Since these subsets correspond to latest cuts with respect to a single pivot $p$, they must be disjoint (Lemma~\ref{lem:latest_tree}).
Let $P'=V(G') \setminus \cup_{i=1}^t V'_i$ be the set of nodes (and contracted nodes) outside these sets, and note that $P = P' \cap V'$.

Construct $T''$ that is a \GHEPT of $G'$ as follows: 
The super-nodes are $V'_1,V'_2,\ldots,V'_t$ and $P'$.
The tree is a star with $P'$ at the center and $V'_1,V'_2,\ldots,V'_t$ are the leaves.
The weight of the edge between $P'$ and $V'_i$ is set to $c'(v_i)=\lambda_{v_i,p}$ the weight of the cut $(V_i',V(G)\setminus V_i')$.

Then, we complete $T''$ into the final tree $T'$ by making recursive calls for each one of its super-nodes, defining its auxiliary graph based on $T''$.
We combine the trees of the super-nodes as in Lemma~\ref{lem:partial_tree_comb} to get $T'$, and we are done.
The running time of this step is $O(n^2)$ plus the time for the recursive calls.
\end{enumerate}

\subsection{Running Time Analysis}

First, we show that for every partition tree, the total sum of nodes in all auxiliary graphs is bounded, which we will use later.
\begin{lemma}
\label{lem:Nodes_bound}
For every partition tree $T_{part}$ of a graph $G=(V,E)$, the total sum of nodes in all auxiliary graphs of the super-nodes of $T_{part}$ is $O(|V|)$.
\end{lemma}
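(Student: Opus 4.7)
The plan is to identify $|V(G_i)|$ explicitly as $|V_i|$ plus the number of contracted nodes, which equals the degree of super-node $i$ in $T_{part}$, and then sum over $i$ using that a tree on $l$ nodes has total degree $2(l-1)$.

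First I would fix notation: let $T_{part}$ have super-nodes $1,2,\ldots,l$ associated with subsets $V_1,\ldots,V_l$ forming a disjoint partition of $V$, and let $d_i := \deg_{T_{part}}(i)$ be the degree of super-node $i$ in the tree. By the definition of the auxiliary graph, $G_i$ is obtained from $G$ by merging all vertices in each connected component of $T_{part}\setminus\{i\}$ into a single contracted node. Since $T_{part}$ is a tree, removing a single super-node $i$ leaves exactly $d_i$ connected components, so the contracted node set of $G_i$ has size exactly $d_i$, and $|V(G_i)| = |V_i| + d_i$.

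Then the total is
\[
\sum_{i=1}^{l} |V(G_i)| = \sum_{i=1}^{l} |V_i| + \sum_{i=1}^{l} d_i .
\]
The first sum equals $|V|$ because $V_1,\ldots,V_l$ partition $V$. The second sum is twice the number of edges in $T_{part}$, i.e.\ $2(l-1)$, by the handshake lemma. Since the super-nodes partition $V$ into nonempty parts we have $l \le |V|$, so the total is at most $|V| + 2(|V|-1) = O(|V|)$, which is what we wanted.

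I expect no real obstacle here: the only point worth verifying carefully is that the number of connected components of $T_{part}\setminus\{i\}$ is exactly $d_i$, which is a standard fact about trees (each edge incident to $i$ leads to a distinct component, and every remaining vertex lies in exactly one such component because $T_{part}$ is acyclic and connected). Everything else is a one-line counting argument.
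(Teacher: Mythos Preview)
Your proof is correct and follows essentially the same approach as the paper: both split $|V(G_i)|$ into the $|V_i|$ original nodes plus $\deg_{T_{part}}(i)$ contracted nodes, then sum to get $|V|$ plus the total degree of the tree, which is $O(|V|)$. Your version is slightly more explicit (invoking the handshake lemma and the bound $l\le |V|$), but the argument is the same.
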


\begin{proof}
Observe that the set of nodes in all auxiliary graphs is comprised of nodes appearing as themselves (not as contracted nodes) which totals to exactly $\card{V}$, and contracted nodes whose number for every auxiliary graph $G_i$ is exactly $\deg_{T_{part}}(V_i)$, and thus totals to the sum of degrees in $T_{part}\leq O(\card{V})$, concluding Lemma~\ref{lem:Nodes_bound}.

\end{proof}

We continue to bound the running time of the recursive algorithm.
To analyze the recursion we will use the fact that the total number of edges across all auxiliary graphs at a single depth is $O(M)$ (see Lemma $3.7$ in ~\cite{AKT20}), and the total number of nodes there is $\tO(N)$ (by Lemma~\ref{lem:Nodes_bound}).
We will only describe the items from the recursive algorithm that are significant in size compared to our bounds.

\begin{enumerate}[i]
\item
In the preprocessing step we spend $\tO(Nk^2)$ time for the $k$-partial tree, and pick $k=\sqrt{N}$ for bound~\ref{Bound:Existing:Algebraic:Theorem} thus spending $\tO(N^{2})$ time in this case, and $k=\sqrt{M}\log^3 N/N^{1/6}$ for bound~\ref{Bound:Existing:Combinatorial:Theorem}, thus spending time $\tO(MN^{2/3})$ in this case.

\item
Step~\ref{Recursive_Algorithm:Resolving} takes $r \cdot \MF(n_i,m_i,N)$, which we analyze in parts, according to the bound we seek.
\begin{enumerate}
\item 
For bound~\ref{Bound:Existing:Algebraic:Theorem}, $r=\sqrt{M/N}$, and so this step incurs a total time of \newline
$\tO\left(\sum_i \sqrt{M/N} \cdot n_iN\right)\leq
\tO(N^{1+1/2}\cdot \sqrt{M})$, using the Karger-Levine algorithm~\cite{KL15}.

\item
For bound~\ref{Bound:Existing:Combinatorial:Theorem}, $r=\sqrt{M}/N^{1/6}$, and so this step incurs a total time of \newline
$\tO\left(\sum_i \sqrt{M}/N^{1/6}\cdot n_i N\right)\leq
\tO(N^{1+5/6}\cdot \sqrt{M})$, similarly using the Karger-Levine algorithm~\cite{KL15} (that is combinatorial).

\end{enumerate}

\item
Step~\ref{Recursive_Algorithm:Sampling} takes $n_i/r \cdot \MF(n_i,m_i,m_i)\cdot (\gamma+2)\log N$, which we analyze in parts, according to the bound we seek.
\begin{enumerate}
\item 
For bound~\ref{Bound:Existing:Algebraic:Theorem}, $r=\sqrt{M/N}$, and so this step incurs a total time of \newline
$\tO\left(\sum_i n_i\cdot \sqrt{N/M}\cdot (m_i+n_i^{3/2})\right)\leq
\tO(N^{3/2}\cdot \sqrt{M}+N^3/\sqrt{M})\leq N^{3/2}\cdot \sqrt{M}$ (since our regime is $M\geq N^{3/2}$), using the recent \MF algorithm~\cite{linearflow21}.

\item
For bound~\ref{Bound:Existing:Combinatorial:Theorem}, $r=\sqrt{M}/N^{1/6}$, and so this step incurs a total time of \newline
$\tO\left(\sum_i n_i\cdot N^{1/6}/\sqrt{M}\cdot m_i n_i^{2/3}\right)\leq
\tO(N^{1+5/6}\cdot \sqrt{M})$, using the combinatorial Goldberg-Rao algorithm~\cite{GR98}.

\end{enumerate}

\item
For Step~\ref{Recursive_Algorithm:Expander_Querying}, we analyze the contribution of the \expanders Procedure in parts, using our bounds from Lemma~\ref{Lemma:Bounds}.
\begin{enumerate}
\item
For bound~\ref{Bound:Existing:Algebraic:Theorem}, we use the bound from~\ref{Bound:Existing:Algebraic:Densest:Lemma} to claim that the total running time over all auxiliary graphs is:
$$
\sum_i \tO( \sqrt{N}\cdot \sqrt{M} \cdot n ) + 
\sum_i \tO( N^{3/2}/\sqrt{M} \cdot m ) +
\sum_i \tO( N^{3/2}/\sqrt{M} \cdot n^{3/2} )
$$ 
$$
\leq
\tO( \sqrt{N}\cdot \sqrt{M} \cdot N + 
N^{3/2}/\sqrt{M} \cdot M +
N^3/\sqrt{M} )
\leq 
\tO( N^{3/2}\sqrt{M} ),
$$
where the last inequality is again due to the fact we only deal with $M\geq N^{3/2}$, as required.
\item
Similarly, for bound~\ref{Bound:Existing:Combinatorial:Theorem}, we use the bound from~\ref{Bound:Existing:Combinatorial:Densest:Lemma} to bound the total running time over all auxiliary graphs:
$$
\tO\left( \sum_i N^{7/6}/\sqrt{M} \cdot m_i n_i^{2/3} + 
\sum_i N^{5/6}\cdot\sqrt{M}\cdot n_i  \right)
$$ 
$$
\leq
\tO(N^{7/6}/\sqrt{M}\cdot MN^{2/3}+ 
N^{5/6}\cdot\sqrt{M}\cdot N) \leq 
\tO(N^{1+5/6}\sqrt{M}),
$$
as required.
\end{enumerate}

\end{enumerate}

\subsection{Correctness Analysis}

Suppose that $T$ is a \GHEPT for $G$ and that super-node $i$ contains nodes $V_i \subseteq V(G)$ and its auxiliary graph is $G_i$.
The goal of this subsection is to prove that the recursive algorithm computes a correct cut-equivalent tree for $G_i$.

The proof is by induction on the number of nodes $|V_i|$ in the super-node. When $|V_i|=1$ the singleton is indeed the correct tree. Suppose that the algorithm works correctly for any $V_j$ with $|V_j|< |V_i| $, and we will prove that the output for $G_i$ is correct.

The cut-equivalent tree that we return will depend on the randomness in choosing the subset $S$; in particular, it determines the pivot $p$.
The plan for the proof is as follows. 
First, we analyze the properties of the pivot $p$. 
Then, we use this to analyze the outcome of the \expanders procedure.
Finally, we show that the latter is sufficient to produce a cut-equivalent tree for $G_i$.

\paragraph{Shallow cut-membership tree with respect to $p$.}
After picking a subset $S$ at random and computing its partial tree $T_S$, the super-node $V'$ and its auxiliary graph $G'$ are determined, where $V' \cap S = \{ p \}$. It is a super-node in a \GHEPT that is a refinement of $T$.

The algorithm returns a very specific cut-equivalent tree $T'$ of $G'$: the tree such that for all $v \in V'$ the (latest) minimum $(p,v)$-cut in $T'$ is $(C_v^p,V(G')\setminus C_v^p)$ the latest minimum $(p,v)$-cut in $G'$.
By Lemma~\ref{lem:latest_tree} such a cut-equivalent tree indeed exists.
The analysis will rely on the fact that $T'$ must have small ``depth''; intuitively, this is because all long paths would have been hit by the random sample $S$ and (partly) removed from $V'$.
The notion of depth that we use is not the standard one and it requires the following notion of \emph{cut-membership tree}; a useful tool for analysis introduced in \cite[Section $3$]{AKT20_b}.  

The \emph{cut-membership tree} of $T'$ with respect to the pivot $p$, denoted $\TG_p$, is a coarsening of $T'$ such that nodes $v\in B$ with the same latest minimum $(p,v)$-cut in $T'$ (and therefore in $G'$) are merged into one bag $B$.
More formally, define the function $\ell:V(T')\setminus\set{p}\rightarrow E({T}')$ such that $\ell(u)$ is the latest lightest edge in the path between $u$ and $p$ in ${T}'$, and $\ell(p)=\emptyset$; where latest means closest to $u$.
Let $\TG_{p}$ be the graph constructed from ${T}'$
by merging nodes whose image under $\ell$ is the same.
Observe that nodes that are merged together,
namely, $\ell^{-1}(e)$ for $e\in E({T}')$, are connected in ${T}'$,
and therefore the resulting $\TG_{p}$ is a tree.
See Figure~\ref{Figs:Tzeta} for illustration. 
Note that $V(\TG_p)$ the nodes of $\TG_{p}$ are subsets of $V(G')$ and we refer to them as \emph{bags}.
For example, $p$ is not merged with any other node, and thus forms its own bag. It is helpful to treat $\{ p \}$ as the root of $\TG_{p}$.
We define the \emph{weight} of a bag $B\in V(\TG_{p})$ to be the number of nodes (but not contracted nodes) in the bag, i.e. $w(B) = | \{ v\in V' \cap B\}|$. 
For a set of bags $X \subseteq V(\TG_{p})$ denote by $w(X)=\sum_{B \in X} w(B)$ the total weight of bags in $X$.
The \emph{weighted-depth} of a bag $B \in V(\TG_{p})$ is the weight of the set of bags on the path between $\{p\}$ and $B$ in $\TG_{p}$, including $B$ but excluding $\{p\}$.
Finally, the weighted-depth of the tree $\TG_p$ is the maximum weighted-depth of any bag $B \in V(\TG_{p})$.

%
\begin{figure*}[!ht]
       \includegraphics[width=0.7\textwidth,center]{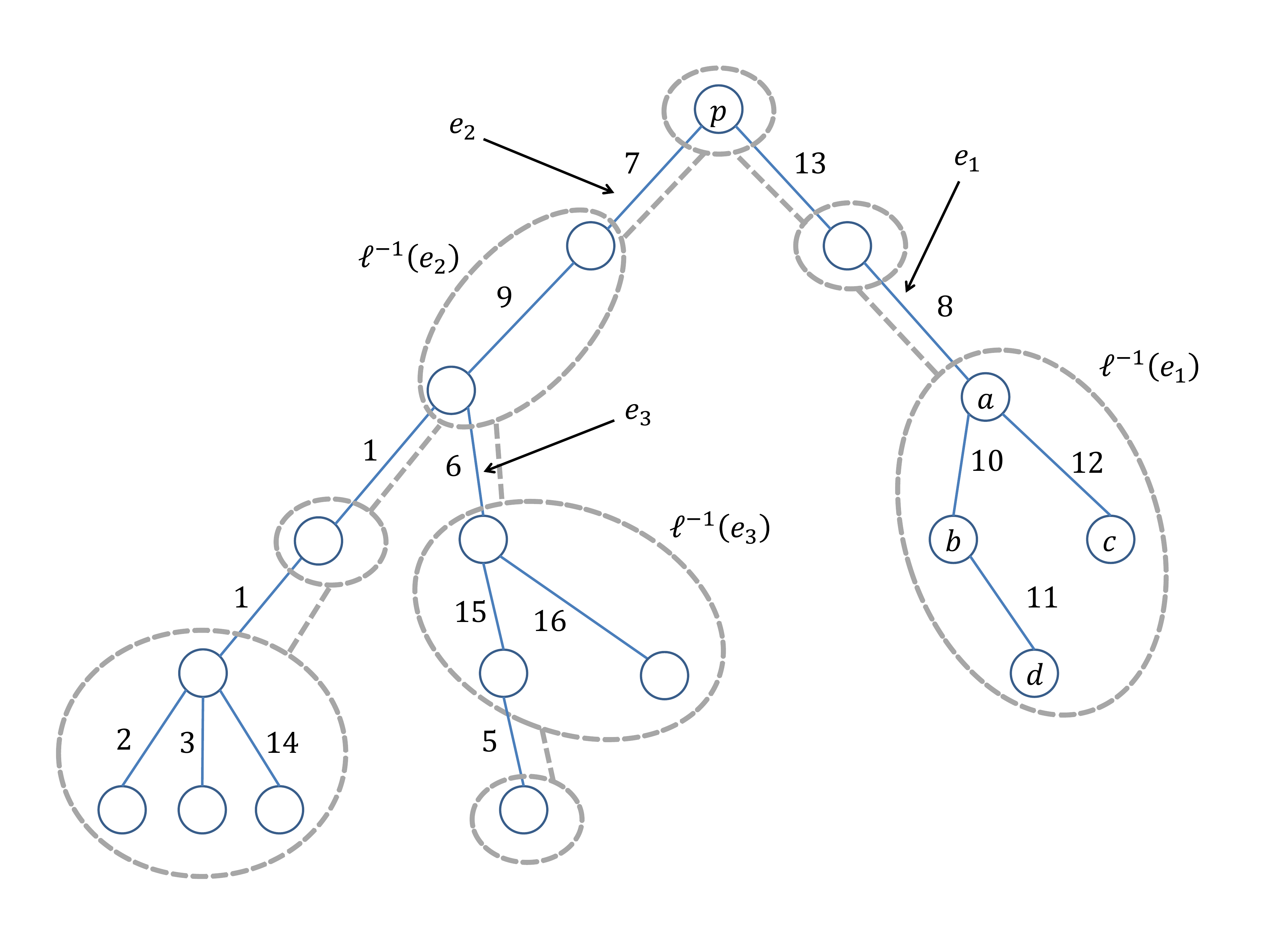}
   \caption[-]{
   An illustration showing ${T}'$ with solid blue lines, while the corresponding graph $\TG_{p}$ with dashed gray lines. For example, $e_1=\ell(a)=\ell(b)=\ell(c)=\ell(d)$ and therefore $\{a,b,c,d\} \in V(\TG_p)$ is a bag of weight $4$. 
   }
   \label{Figs:Tzeta}
\end{figure*}

\begin{claim}
\label{cl:depth}
The weighted-depth of $\TG_p$ is at most $r/2$ with probability at least $1-1/N^{\gamma}$.
\end{claim}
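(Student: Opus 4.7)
The plan is to reduce the claim about $\TG_p$ (the cut-membership tree of $G'$ with respect to $p$) to an analogous statement about the cut-membership tree $\TG^*_p$ of $G_i$ with respect to $p$, because contractions of whole subtrees of $T_S$ hanging off $V'$ do not change latest minimum $(p,v)$-cuts for $v\in V'$. In particular, the bag of any $v\in V'$ in $\TG_p^{G'}$ corresponds to its bag in $\TG^*_p$, and the root-to-$v$ path in the former is obtained from the one in the latter by possibly coalescing bags that contain no $V'$-node. Consequently the weighted-depth of $\TG_p$ (which counts only $V'$-nodes on a path) is at most the $V'$-weighted-depth of $\TG^*_p$, so it suffices to bound the latter.

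Next I would establish the characterization on which the whole argument rests: for every $v\in V_i$, $v\in V'$ if and only if no sample $u\in S\setminus\{p\}$ lies in $v$'s bag or in an ancestor bag of $v$ in $\TG^*_p$. The ``only if'' direction is the key: if some $u\in S\setminus\{p\}$ sits in a weak ancestor bag of $v$'s bag, then $v\in C_u$ (where $C_u$ is the latest minimum $(p,u)$-cut in $G_i$), so the Gomory-Hu step that witnesses the pair $(p,u)$ when building $T_S$ would separate $v$ from $p$, placing $v$ in a different super-node than $p$. The ``if'' direction follows from the dual observation that if no sampled $u$ has its latest cut containing $v$, then every cut produced during the partial-tree construction keeps $v$ on $p$'s side, which forces $v\in V'$.

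With this characterization in hand, I would run the probabilistic union bound as follows. Suppose the weighted-depth of $\TG_p$ exceeds $r/2$; then by the reduction of the first paragraph there is a bag $B^\star$ in $\TG^*_p$ such that the set $U(B^\star)\subseteq V_i$ of $V_i$-nodes lying in bags on the path from $\{p\}$ to $B^\star$ satisfies $|U(B^\star)\cap V'|\ge r/2+1$, and in particular $|U(B^\star)|\ge r/2$. By the characterization, every $V'$-node on the path forces no sample other than $p$ to sit in any bag of that subpath, so altogether the event forces the entire set $U(B^\star)$ of $\ge r/2$ nodes of $V_i$ to be avoided by the $|S|-1=(n'_i/r)(\gamma+2)\log N$ i.i.d.\ uniform draws from $V_i$. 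For any fixed $B^\star$ this probability is at most
\[
\left(1-\tfrac{r/2}{n'_i}\right)^{(n'_i/r)(\gamma+2)\log N}\;\le\; \exp\!\left(-\tfrac{(\gamma+2)\log N}{2}\right)= N^{-(\gamma+2)/2},
\]
and there are at most $|V(G_i)|\le N$ candidate bags $B^\star$, so a union bound (after absorbing the extra factor of $N$ into the constant hidden inside $\gamma+2$ in the sample size) gives the claimed failure probability $1/N^\gamma$.

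The main obstacle is the circularity between $V'$ and $S$: the quantity we want to bound (weighted-depth) is itself a random variable driven by which nodes end up in $V'$, which is in turn driven by $S$. The characterization in the second paragraph is exactly what breaks this circularity: it replaces the coupled event ``some $V'$-node sits on a long path of $\TG_p$'' by the cleaner, decoupled event ``some path in the \emph{deterministic} tree $\TG^*_p$ with at least $r/2$ $V_i$-nodes is completely missed by the sample'', to which a clean Chernoff-type bound applies.
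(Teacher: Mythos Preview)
Your overall strategy matches the paper's, but there is a genuine gap in how you handle the randomness of the pivot.

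You assert that $\TG^*_p$ (the cut-membership tree of $G_i$ with respect to $p$) is ``deterministic'' and then apply a product-probability bound to the event that a fixed path in this tree is missed by the remaining $|S|-1$ draws. But $p$ is itself a function of the entire sample $S$: it is defined as the element of $S$ whose super-node in the partial tree $T_S$ is largest. Hence the tree $\TG^*_p$, the set $U(B^\star)$ you want the sample to avoid, and even which draw counts as ``the pivot'' are all determined by $S$; the event is not decoupled at all. Concretely, your union bound runs only over bags $B^\star$, not over candidate pivots, and your probability calculation treats the $|S|-1$ non-pivot draws as i.i.d.\ uniform and independent of $p$ --- which they are not once you condition on which draw becomes $p$. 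The paper fixes exactly this: it considers \emph{every} $q\in S$ (equivalently every sample position) together with every $x\in V_i$, and bounds the probability that $x$ lands in $q$'s super-node while $|M_x^q|>r/2$. Conditioning on one draw equaling $q$ leaves the other $|S|-1$ draws i.i.d.\ uniform, so the product bound is legitimate; the union bound then runs over the pair $(q,x)$, picking up an extra factor of $|S|\le N$ beyond your count. Your argument becomes correct once you add this outer union bound over candidate pivots.

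Two smaller points. First, your opening reduction --- that contracting the components of $T_S\setminus V'$ does not change latest minimum $(p,v)$-cuts for $v\in V'$ --- is not as immediate as you suggest: the contracted components $C_j$ are minimum $(p,s_j)$-cuts produced by Gomory--Hu, but not necessarily \emph{latest} ones, so they may cross $C_v^p$. The paper proves the single direction you actually need (its Observation~\ref{obs1}) via a short case analysis using Lemmas~\ref{lem:latest_union} and~\ref{lem:latest_minus}; you should not skip this. Second, the ``if'' direction of your characterization of $V'$ is never used in your probability bound (only ``only if'' is), and it is not obviously true for the same reason: the Gomory--Hu construction of $T_S$ may use non-latest cuts, so a node $v$ could be separated from $p$ even when no sampled $u$ has $v\in C_u^p$.
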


\begin{proof}
Consider any node $q\in V_i\cap S$ that was chosen to $S$.
For each $u \in V_i$, let $(C^q_u,V(G_i)\setminus C^q_u)$ be the minimum $(q,u)$-cut in ${G_i}$ that is latest with respect to $q$, i.e. where $q\in C^q_u$ and minimizing the size of $u$'s side.
For each $x \in V_i$, let $M_x^q = \{ u \in V_i, u \neq q \mid x \in C^q_u \}$ be the set of nodes $u$ such that $x$ is a cut-member of $u$ (with respect to $q$).
The probability that $x$ is in the same super-node as $q$ in the partial tree $T_S$ is upper bounded by the probability that $M_x^q \cap S = \emptyset$, i.e. $(1-|M_x^q|/n'_i)^{|S|-1}$.
To see this, assume for contradiction that there is a node $u \in M_x^q \cap S$, and notice that the minimum $(u,q)$-cut in $T_S$ must contain the latest minimum $(u,q)$-cut $C_u^q$ and therefore $x \in C_u^q$ must be in $u$'s super-node or another super-node on the same side in $T_S$, but not $q$'s.
For a pair $q,x$ such that $|M_x^q|>r/2$ we can upper bound this probability by 
$$
(1-r/2n'_i)^{|S|-1} = (1-r/2n'_i)^{n'_i/r\cdot (\gamma+2) \log N}  \leq 1/N^{\gamma+2}.
$$
By a union bound, the probability that there exists a pair $q \in S, x \in V_i$ in the same super-node of $T_S$ such that $|M_x^q|>r/2$ is at most $1/N^{\gamma}$.

Our special pivot node $p$ is simply the node $q \in S$ with the largest super-node in $T_S$ namely $V'$. 
Therefore, with probability at least $1-1/N^{\gamma}$, all nodes $x \in V'$ satisfy $|M_x^p| \leq r/2$, and due to the following observation, it follows that the weighted-depth of $\TG_p$ is at most $r/2$.

\begin{observation}
\label{obs1}
If $x\in V'$ is not in the latest minimum $(p,u)$-cut in $G_i$ for $u \in V'$, i.e. $u \notin M^p_x$, then $x$ is also not in the latest minimum $(p,u)$-cut in $G'$.
\end{observation}

It follows that for any node $x$ in bag $B_x \in V(\TG_p)$, all nodes on the path to $p$ in $\TG_p$ must be in $M^p_x$ and therefore the weighted depth is at most $r/2$.

To prove Observation~\ref{obs1}, assume for contradiction that there is a node $u \in V'$ such that $x \notin C^p_u$ (the latest cut in $G_i$) but due to the contractions, $u$'s side in all minimum $(p,u)$-cuts in $G'$ contains $x$. (The existence of even a single minimum cut not containing $x$ implies that the latest cut also doesn't contain $x$ and the observation holds.)
Let us analyze how this could have happened, and reach a contradiction in all cases.
Recall that $G'$ is obtained by starting from $G_i$ and contracting all connected components $C_1,\ldots,C_k$ in $T_S \setminus V'$, and that each component $C_j$ is a minimum $(p,s_j)$-cut for some $s_j \in S$.

\begin{itemize}
\item If for all $j \in [k]$ the component $C_j$ is either contained or disjoint from $C^p_u$, then $C^p_u$ is not affected by the contractions and it is still a minimum $(p,u)$-cut in $G'$. Since $x \notin C^p_u$ we are done.
\item Otherwise, there are $t \geq 1$ components $C_1,\ldots,C_{j_t}$ corresponding to some nodes $s_1,\ldots,s_{t} \in S$, such that for all $j \in [t]$ the component $C_j$ is neither contained nor disjoint from $C^p_u$.
Note that $(C_j,V(G_i)\setminus C_j)$ is a minimum $(p,s_j)$-cut in $G_i$, but unlike $C^p_u$ it is not necessarily a latest cut, and therefore they may ``cross'' each other.
There are two subcases:
\begin{itemize}
\item If for some $j \in [t]$ $s_j \notin C^p_u$ then we get a contradiction to the fact that $C^p_u$ is latest, because $(S_u = C^p_u \setminus C_j, V(G_i) \setminus S_u)$ is a smaller latest cut and by Lemma~\ref{lem:latest_minus} it has the same value.
\item If for all $j\in [t]$ we have $s_j \in C^p_u$ then we do not get a contradiction to the fact that $C^p_u$ is latest, but we can still point at a minimum $(p,u)$-cut in $G'$ that does not contain $x$. By Lemma~\ref{lem:latest_union}, the cut $(S_u,V(G_i)\setminus S_u)$ where $S_u=C^p_u \cup \bigcup_{j=1}^t C_j$ is also a minimum $(p,u)$-cut in $G_i$.
Since $x \in V'$ it cannot be in any $C_j$ and therefore $x \notin S_u$. 
Finally, this cut is not affected by the contractions and therefore it is also a minimum cut in $G'$.
\end{itemize}
\end{itemize}

\end{proof}

\paragraph{Few problematic vertices with respect to $p$.}
The second important property that the pivot $p$ has with high probability, that follows from the shallowness of $\TG_p$, is that for all but a few \emph{problematic} nodes $v \in V'$ the latest minimum $(p,v)$-cut in $G'$ has $\leq n'/2$ nodes in $V'$. 
In particular, a node is in the set of problematic nodes $P = \{ u \in V' \mid |C_u^p \cap V'|>n'/2 \}$ if its side is larger than $n'/2$ in any minimum cut to $p$.
The issue with problematic vertices is that even when we find their latest minimum $(p,v)$-cut we cannot ``recurse on it'' if we want the recursion depth to only be logarithmic.

\begin{claim}
\label{cl:problematic}

The number of problematic vertices is bounded by $r/2$ with probability at least $1-1/N^{\gamma}$.

\end{claim}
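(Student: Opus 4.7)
The plan is to derive Claim~\ref{cl:problematic} as a structural corollary of Claim~\ref{cl:depth}, by re-expressing the set $P$ of problematic vertices entirely in terms of the cut-membership tree $\TG_p$. First I would observe that for every $v \in V' \setminus \{p\}$, the latest minimum $(p,v)$-cut $C_v^p$ in $G'$ coincides with the union of all bags lying in the subtree of $\TG_p$ rooted at the bag $B_v$ that contains $v$ (with $\{p\}$ as root). This is the defining property of the cut-membership tree: by construction of $\ell$, the edge weights along any root-to-leaf path of $\TG_p$ are non-increasing and the latest lightest edge above $B_v$ is exactly the one cut by $C_v^p$. Consequently $|C_v^p \cap V'| = w(\mathrm{sub}(B_v))$, and $v$ is problematic iff its bag $B_v$ is \emph{heavy}, meaning $w(\mathrm{sub}(B_v)) > n'/2$.

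Next I would show that the collection of heavy bags forms a single ancestor chain in $\TG_p$ descending from $\{p\}$. The argument is a short disjointness one: if two heavy bags $B$ and $B'$ were incomparable under the ancestor--descendant order, then the subtrees rooted at them would be vertex-disjoint, so their weights would sum to strictly more than $n'$, contradicting $\sum_{B \in V(\TG_p)} w(B) = n'$. Denote this chain $\{p\} = B_0, B_1, \ldots, B_k$. Every problematic $v$ satisfies $B_v \in \{B_1, \ldots, B_k\}$ (note that $\{p\}$ is itself heavy but contributes only the vertex $p$, which is not in $P$), so
\[
|P| \;=\; \sum_{i=1}^{k} w(B_i).
\]

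Finally, the right-hand side is, by definition, the weighted-depth of $B_k$, and thus at most the weighted-depth of the whole tree $\TG_p$. By Claim~\ref{cl:depth}, that weighted-depth is at most $r/2$ with probability at least $1 - 1/N^\gamma$, which yields $|P| \leq r/2$ with the required probability. The only step that deserves some care is the identification $|C_v^p \cap V'| = w(\mathrm{sub}(B_v))$: I expect this to be the main obstacle in writing it out, since it requires invoking the structural properties of latest cuts and the non-increasing-weights property of the construction behind $\TG_p$. Once that identification is established, the two short combinatorial observations above immediately convert the depth bound of Claim~\ref{cl:depth} into the cardinality bound asserted by Claim~\ref{cl:problematic}.
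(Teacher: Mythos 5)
Your proposal is correct and follows essentially the same route as the paper: the paper likewise observes that problematic vertices' bags cannot diverge in $\TG_p$ (two incomparable subtrees of weight $>n'/2$ would exceed total weight $n'$), so they lie on a single root-descending path whose weight is bounded by $r/2$ via Claim~\ref{cl:depth}. Your version merely spells out the identification $|C_v^p \cap V'| = w$ of the subtree of $B_v$, which the paper leaves implicit.
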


\begin{proof}

All nodes in $P$ must be along a single path in $\TG_p$. There cannot be a divergence since the subtree of each of them has weight $>n'/2$ (and the total weight is $n'$).
By Claim~\ref{cl:depth}, all paths in $\TG_p$ have weight at most $r/2$ and therefore there can be $\leq r/2$ nodes in $P$.
\end{proof}
\paragraph{All are done after $O(\log{n})$ iterations}
We are now ready to start analyzing the calls to the \expanders procedure.
Recall that all $v \in V'$ have estimates $c'(v)$ for the minimum $(p,v)$-cut value, and actual cuts testifying for these values.
We say that a node is \emph{done} if $c'(v) = \lambda_{p,v}$, and is \emph{undone} if $c'(v)>\lambda_{p,v}$.
Since initially $c'(v)=\deg_G(v)$, only nodes that are leafs in $T'$ are done.
The (contracted) nodes in $V(G')\setminus V'$ are treated as if they are done.

The crux of the analysis of the \expanders method is the proof of Claim~\ref{cl:main_done}, stating that all undone nodes $v \in V'$ that have the following property at the beginning of an iteration become done after it.
The property is that there exists a minimum $(p,v)$-Cut $(C_v,V(G')\setminus C_v)$ in $G'$ such that the number of undone nodes $u \in C_v$ is at most $r$.
(Equivalently, a node satisfies the property if its latest minimum $(p,v)$-cut in $G'$ contains up to $r$ undone nodes.)
Moreover, by Claim~\ref{cl:main_done}, the testifying cuts for the done nodes are the latest minimum $(p,v)$-cut in $G'$.

The following claim builds on Claim~\ref{cl:main_done} to prove that after $O(\log{n})$ iterations, \emph{all} nodes in $V'$ become done.
To do this we utilize our structural understanding of the cut-membership tree $\TG_p$ that we formalized in Claim~\ref{cl:depth}, namely that the weighted depth is at most $r/2$.
In particular, we use the important corollary that all bags $B \in V(\TG_p)$ have weight $w(B) \leq r/2$ with high probability.

The intuition is as follows. Initially, all leaves are done. Then, after the first iteration, all nodes right above the leaves also become done, and moreover, any node whose subtree has $\leq r/2$ nodes must become done. Then, we can ``remove'' all done nodes from consideration and repeat the same reasoning.
How many times will it take until all are done? 
This is a simple question of analyzing a certain process on trees.
There are two extremes: either $\TG_p$ is path-like, but since the weighted-depth is $\leq r/2$ the whole path will be done in one iteration, or it is like a binary tree, but then we make a lot of progress in parallel and $O(\log n)$ iterations suffice to remove one level at a time.

\begin{claim}
\label{cl:repetitions}
After $\log n +2$ repetitions of the \expanders procedure, all nodes are done with probability at least $1-1/N^{\gamma-2}$. Moreover, the testifying cut for all $v \in V'$ is the latest minimum $(p,v)$-cut.
\end{claim}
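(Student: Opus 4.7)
The plan is to first condition on the high-probability events I need and then analyze a combinatorial process on the cut-membership tree. I would condition on the weighted-depth bound $\le r/2$ on $\TG_p$ from Claim~\ref{cl:depth} and on the success of each of the $\log n+2$ calls to the \expanders procedure (each failing with probability at most $1/N^{\gamma+1}$, which is what the procedure is designed to guarantee); a union bound then yields overall success probability at least $1-1/N^{\gamma-2}$.

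Next I would recast the algorithm's progress combinatorially on $\TG_p$. For each bag $B \in V(\TG_p)$, let $s(B)$ denote the total $V'$-weight in the subtree of $B$, and let $U_i(B)$ denote the $V'$-weight of still-undone bags in that subtree at the end of iteration $i$. Since all nodes in a common bag share the same latest minimum $(p,\cdot)$-cut, and that cut is precisely the union of bags in $B$'s subtree, Claim~\ref{cl:main_done} gives the clean update rule: an undone bag $B$ becomes done at iteration $i+1$ whenever $U_i(B) \le r$. The key combinatorial claim is then that the first iteration $f(B)$ in which $B$ becomes done satisfies $f(B) \le \log_2 s(B) + 2$, which applied to every bag of $\TG_p$ yields the desired $\log n + 2$ bound.

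I would prove this combinatorial claim by strong induction on $s(B)$. The base case $s(B) \le r$ is immediate since $U_0(B) \le r$. For the inductive step I would use a heavy-light decomposition: children $C$ of $B$ with $s(C) \le s(B)/2$ are handled by direct induction (each is done by iteration $\log_2 s(B)+1$, after which $U(B) \le w(B) \le r/2$ one step later), while the at-most-one heavy child $C^\ast$ with $s(C^\ast) > s(B)/2$ is handled by descending the heavy path from $B$, whose total length is bounded by the weighted-depth cap of $r/2$. Along this heavy path, the light children branching off satisfy the direct-induction case, so a standard heavy-light accounting closes the induction.

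Finally, for the ``moreover'' statement, every update to $c'(v)$ originates from either a direct \MF query or a call to the \pC procedure (Lemma~\ref{thm:proc_main}); in both cases the returned cut is the \emph{latest} minimum cut, so once $c'(v)=\lambda_{p,v}$ the stored testifying cut equals $C^p_v$. The main obstacle I anticipate is the tree induction in the third paragraph: the naive bound $f(B) \le \mathrm{depth}(B)+1$ only yields an $r/2$-type bound, and extracting the logarithmic-in-$n$ bound requires carefully combining the $\le r$ update threshold from Claim~\ref{cl:main_done} with the weighted-depth cap via heavy-path accounting, particularly to rule out pathological cases where long heavy paths interact with many light side-branches.
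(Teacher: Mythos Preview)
Your heavy-light induction on $\TG_p$ is correct and constitutes a genuinely different argument from the paper's. The paper argues globally: it tracks the total undone weight $w(T^{(j)})$ and shows that each successful iteration either halves it or finishes within two more steps, via a case analysis on the set $Q$ of bags whose (undone-)subtree weight exceeds $r$ and on the leaves of the subtree induced by $Q$. Your argument is local: you fix a bag $B$, follow its heavy path $B=B_0,\ldots,B_k$, apply strong induction to every light subtree hanging off this path (each has $s\le s(B)/2$, hence is done by iteration $\log_2 s(B)+1$), and then use the weighted-depth cap to bound the residual undone weight on the heavy path itself by $\sum_i w(B_i)\le r/2\le r$, forcing $B$ done one step later. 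Both arguments use exactly the same two ingredients (the update rule from Claim~\ref{cl:main_done} and the $r/2$ depth cap from Claim~\ref{cl:depth}); yours is arguably more modular and yields a per-bag bound $f(B)\le \log_2 s(B)+2$, while the paper's gives the same final count via a halving potential.

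Two minor issues to clean up. First, the per-iteration failure probability you quote ($1/N^{\gamma+1}$) does not match Claim~\ref{cl:main_done}, which gives $1-1/n^{\gamma}$ \emph{per node}; you need the paper's union bound over the $n'\le N$ nodes in $V'$ to get $1/N^{\gamma-1}$ per iteration before union-bounding over the $\log n+2$ iterations and Claim~\ref{cl:depth}. Second, your ``moreover'' argument overstates what the \pC procedure guarantees: Lemma~\ref{thm:proc_main} only promises the latest minimum cut when $v$ is isolated, not for arbitrary $v\in C$. The correct justification is the one the paper uses, namely the ``moreover'' clause of Claim~\ref{cl:main_done} itself: in the iteration where $v$'s hypothesis is first met, the case analysis there shows $v$ is either queried directly via \LMwC or is isolated in the \pC call, and in both cases the stored cut is latest.
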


\begin{proof}
An iteration of the \expanders procedure is considered \emph{successful} if all nodes $v$ with at most $r$ undone cut-members become done. 
By Claim~\ref{cl:main_done} and a union bound over all $n'\leq N$ nodes in $V'$, an iteration is successful with probability $\geq 1-1/N^{\gamma-1}$.
By another union bound, we get that the probability that all $\log n +2$ iterations are successful is at least $\geq 1- 2\log{n}/N^{\gamma-1}$.
The rest of this proof assumes that all iterations are successful, and moreover that the weighted-depth of $\TG_p$ is $\leq r/2$. 
By Claim~\ref{cl:depth}, the latter happens with probability $1-1/N^{\gamma}$; therefore, the conclusion of the proof holds with probability at least $1-1/N^{\gamma-2}$.

A bag $B\in V(\TG_{p})$ is said to be \emph{done} if and only if \emph{all nodes} in the bag and in its entire subtree in $\TG_p$ are done. 
Denote by $D^{(j)}$ the set of done bags in $V(\TG_{p})$ after the $j^{th}$ iteration of the \expanders procedure.
And let $T^{(j+1)} = V(\TG_{p}) \setminus D^{(j)}$ be the set of bags in $\TG_{p}$ that are not in $D^{(j)}$. 
Intuitively,  $T^{(j)}$ is the subtree of $\TG_{p}$ that we still have to handle (the set of undone bags) after iteration $j-1$, that is, when iteration $j$ begins.

Our goal is to show that for all $j\geq 1$, either $w(D^{(j)} \setminus D^{(j-1)}) \geq  w(T^{(j)})/2$, meaning that we make a lot of progress in each iteration, or all nodes are done in iteration $j+2$. It follows that $ \log{n}+2$ iterations suffice to make all nodes done, i.e. $D^{(\log{n}+2)}= V(\TG_{p})$. 

Let $j\geq 1$. To simplify notation, let $D=D^{(j)} \setminus D^{(j-1)}$ be the set of newly done bags in this iteration; our goal is to lower bound $w(D)$. Let $z := \frac{w(T^{(j)})}{r/2}$, and separate the analysis into three cases as follows:
\begin{itemize}
\item If $z \leq 2$ then all nodes are done after two additional iterations because in every iteration, either all bags become done, or the remaining weight is reduced by at least $w(D) \geq  r/2$. 
To see this, suppose that there is a bag that is not done after iteration $j$. Identify one such bag $B_v$ without other undone bags in its subtree (e.g. the lowest undone bag), and let $v$ be one of the undone nodes in this bag. By Claim~\ref{cl:main_done}, $v$ has at least $r$ undone cut-members at the beginning of iteration $j$. The nodes in $C_v$, the cut-members of $v$, can either be in $v$'s bag (that contains only $w(B_v)\leq r/2$ nodes), or in the rest of its subtree $C_v \setminus B_v$ (and are therefore done after iteration $j$). Therefore, there are at least $ r - r/2$ undone nodes in $C_v \setminus B_v$. Let $X$ be the set of bags containing these nodes. Clearly, $w(X) \geq r/2$. To conclude this case, note that $w(D) \geq w(X)$ because all bags in $X$ are undone before but done after iteration $j$ since they are in the subtree below $B_v$.
  
\item If $z > 2$ we have two more cases.
 Denote by $Q$ the set of bags in $T^{(j)}$ whose subtree in $T^{(j)} \cap \TG_{p}$ has total weight $> r$. 
 Note that all nodes not in bags in $Q$ have at most $r$ undone cut-members, since all nodes outside $T^{(j)}$ are done.
 Therefore, by Claim~\ref{cl:main_done}, all bags not in $Q$ are done after the $j^{th}$ iteration, and therefore $T^{(j)} \setminus Q \subseteq  D$.

\begin{itemize}

\item If $w(Q)< w(T^{(j)})/2$, then $w(D) \geq w(T^{(j)} \setminus Q ) = w(T^{(j)}) - w(Q) \geq w(T^{(j)})/2$ and we are done.

\item The slightly trickier case is when $w(Q) \geq w(T^{(j)})/2$.
Let $L_Q$ be the set of leaves in the subtree of $\TG_{p}$ induced by $Q$, i.e. the set of bags in $Q$ whose children in $\TG_{p}$ are not in $Q$.

First, we claim that $|L_Q|\geq z/2$, because:
(1) $w(Q) \leq |L_Q| \cdot r/2$ since the weight-depth of $\TG_{p}$ is at most $r/2$, 
(2) by the assumption of this case $ w(T^{(j)})/2\leq w(Q)$, and
(3) $ w(T^{(j)}) = z \cdot r/2 $ by definition of $z$.
Combining all three we get:
$$
|L_Q| \geq w(Q) / (r/2) \geq  w(T^{(j)})/r \geq z/2.
$$

Now, each bag $B$ in $L_Q \subseteq Q$ has a subtree $T_B \subseteq T^{(j)} \cap \TG_{p}$ of weight $w(T_B) > r$. 
Since $w(B) \leq r/2$ the weight below $B$ is at least $w(T_B \setminus \{B\}) > r-r/2=r/2$.
Let $X$ be the set of all bags in $T^{(j)} \cap \TG_{p}$ below the bags in $L_Q$, i.e. $X = \bigcup_{B \in L_Q}  T_B \setminus \{B\} $.
Since the subtrees of bags in $L_Q$ are disjoint, we get that $w(X) > |L_Q| \cdot r/2 \geq  zr/4 = w(T^{(j)})/2$.
Finally, observe that $w(D) \geq w(X)$ because all bags in $X$ are in $T^{(j)} \setminus Q \subseteq D$.
 
\end{itemize}

\end{itemize}

The fact that all testifying cuts are \emph{latest} follows because by Claim~\ref{cl:main_done}, whenever a node $v \in V'$ becomes done in the \expanders procedure we also have its latest minimum $(p,v)$-cut.

\end{proof}


\section{The Expander-Guided Querying Procedure}
\label{sec:expanders}

Suppose we have an unweighted graph $G$ on $N$ nodes that has already undergone preprocessing, and for all $k \leq w\leq n$ it has sparsifiers $G_w$ and expander decompositions $H^{w}_1,\ldots,H^w_h$ with some parameter $\phi_w$.
Moreover, suppose we have a \GHEPT $T$ for $G$ and let $V'$ be one of its super-nodes with its auxiliary graph $G'$ and a designated pivot node $p \in V'$.
Denote $|V'|=n'$, $|V(G')| = n$, and $|E(G')|=m$.
Also suppose that $T$ is a refinement of a $k$-Partial Tree of $G$ and therefore $\lambda_{u,v} > k$ for all $u,v \in V'$.
All nodes $v \in V'$ have estimates $c'(v)$ for the minimum $(p,v)$-cut value, and actual cuts testifying for these values.
We say that a node is \emph{done} if $c'(v) = \lambda_{p,v}$.
Moreover, suppose we have a \emph{room} parameter $0\leq r \leq n$  satisfying $r \geq 1/\phi_w$.

The goal of this section is to give an algorithm we call \emph{\expanders Procedure} such that any node $v \in V'$ will become done, if its latest minimum $(p,v)$-cut in $G'$ contains up to $r$ undone nodes. Moreover, the testifying cut we find for $v$ will be the latest minimum $(p,v)$-cut in $G'$.

\paragraph{The \expanders Procedure}

For each $w=2^j$ where $\lfloor \log{k} \rfloor \leq  j \leq \lceil \log{N} \rceil$ we do the following. These steps will target undone vertices $v$ that have $w \leq \lambda_{p,v} < 2w$. Note that any such node must also have $\deg_{G}(v)\geq c'(v) > \lambda_{p,v} \geq w$, and therefore only nodes with $\deg_{G} > w$ will be relevant. 
Let us define a node $u$ to be \emph{$w$-relevant} if $\deg_{G}(u)>w$.

\begin{enumerate}
\item First, we compute a Nagamochi-Ibaraki sparsifier $G'_w$ of our auxiliary graph with $O(n w)$ edges and such that all cuts of value $< 2w$ are preserved, and all larger cuts still have value $\geq 2w$. 
We perform our \MF queries on the sparsifier, making the running time of each query $MF(n,\min(nw,m))$ rather than $MF(n,m)$, and, since we anyway disregard cuts of value $\geq 2w$, it is equivalent to performing the queries on the original auxiliary graph.

\item As a preparatory step, consider the expanders $H^w_1,\ldots,H^w_h$ and for each expander $H$ compute: the set of $w$-relevant nodes $\hat{H}=\{ u \in V' \cap H \mid \deg_{G}(u)>w \}$ and their number $\alpha(H) = | \hat{H} | $.
We stress that in the description below, the size $|H|$ and number of outward edges $\delta(H)$ of an expander are defined with respect to $V(G)$ and $E(G_w)$ not with respect to $V'$ and $E(G')$.

\item\label{step:guess_xy} For each $x,x' \in \{ 2^j \}_{j=0}^{\lceil \log{N} \rceil}$ and  $y,y' \in \{0\} \cup \{ 2^j \}_{j=0}^{\lceil \log{N} \rceil}$, do the steps below. It is helpful to imagine that we are searching for an undone node $v$ that exists in some expander $H$ and to think of $x',y'$ as two guesses for the sizes of the two sides of $v$'s optimal cut when projected onto $H$.\footnote{More specifically, $x,x'$ will represent the side containing $v$ and $y,y'$ will represent the other side; this is why we let $y,y'$ be zero but $x,x'\geq 1$.}
The guesses $x,y$ are similar, but are only concerned with the number of $w$-relevant nodes on each side.

\begin{enumerate}
\item\label{step:skip} Skip any iteration where $x>x'$ or $y>y'$ or $\min\{x,y\} > 2/ \phi_w$. 

\item\label{step:large_x} If $x' \geq w/8$ or $y'\geq w/8$ then we are in a \emph{large H} case. Take each of the $O(N/w)$ expanders $H$ in the decomposition of $G_w$ with $|H|\geq x' + y'$ and do either of the following two operations for each:
\begin{enumerate}

\item \label{small_x} If $x \leq 2/ \phi_w$, do the sub-procedure:
\begin{enumerate}
\item \label{solve_x} Repeat the following $4\gamma x \ln{n}$ times: Construct a set $C \subseteq V'$ by adding each node $v\in \hat{H}$ to $C$ with probability $1/2x$, call the procedure in Theorem~\ref{thm:proc_main} on $G'_w,p,C$ to get a $(p,v)$-cut for all $v\in C$ and update the estimates $c'(v)$ and $S_v$ if the new cut has smaller value (and if their value is not $2w$). 
\end{enumerate}

\item\label{solve_y} Otherwise, it must be that $y \leq 2/ \phi_w$. Let $P$ be the set of $5 r$ nodes $v \in \hat{H}$ with highest $c'(v)$ breaking ties arbitrarily. Perform a \LMwC query to get the latest minimum $(p,v)$-cut in $G_w'$ for each $v \in P$, and (unless the returned cut has value $\geq 2w$) update the estimates $c'(v)$ and $S_v$.\footnote{Note that in this case, $v$ must become done unless $\lambda_{p,v}>2w$.}

 \end{enumerate}
  
\item\label{step:small_x} If $x'< w/8$ and $y' < w/8$ then we are in a \emph{small $H$} case. Consider all expanders $H$ (in the decomposition of $G_w$) with $\alpha(H) \leq 2(x+y)$ and $\delta(H) \geq (x+y) \cdot w/2$, i.e. expanders with relatively many edges leaving them.
Perform a \LMwC query to get the latest minimum $(p,v)$-cut in $G'_w$ for each $v \in \hat{H}$, and (unless the returned cut has value $\geq 2w$) update the estimates $c'(v)$ and $S_v$.

\end{enumerate}

\end{enumerate}

\paragraph{Running Time Analysis}
We will prove the following Lemma.
\begin{lemma}\label{Lemma:Bounds}
The time it takes to run the \expanders Procedure is:
\begin{enumerate}
\item\label{Bound:Existing:Algebraic:Densest:Lemma}
$\tO(\sqrt{N}\cdot \sqrt{M} \cdot n + 
N^{3/2}/\sqrt{M} \cdot m +
N^{3/2}/\sqrt{M} \cdot n^{3/2})$ 
 where $r=\sqrt{M/N}$ and $\phi_w = \sqrt{N/M}$,

\item\label{Bound:Existing:Combinatorial:Densest:Lemma}
$\tilde{O}\left(  N^{7/6}/\sqrt{M} \cdot m\sqrt{n} + 
N^{5/6}\cdot\sqrt{M}\cdot n  \right)$ using only combinatorial methods and where $r= \sqrt{M}/N^{1/6}$ and $\phi_w = \sqrt{M}/(N^{1/6}\cdot w)$.

\end{enumerate}
\end{lemma}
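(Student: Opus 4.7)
\textbf{Proof plan for Lemma~\ref{Lemma:Bounds}.}

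The plan is to analyze separately the three query-producing steps of the procedure --- (\ref{solve_x}), (\ref{solve_y}), and (\ref{step:small_x}) --- and for each one bound (i) the number of \MF queries made per scale $w$, and (ii) the cost of each query on the sparsifier $G'_w$. I will then sum both quantities over the $O(\log N)$ scales $w=2^j\in[k,N]$, choosing the parameters $r$, $k$, $\phi_w$ as dictated by bound~\ref{Bound:Existing:Algebraic:Densest:Lemma} or bound~\ref{Bound:Existing:Combinatorial:Densest:Lemma}, and verify that every term is absorbed by the claimed expression. The Nagamochi-Ibaraki sparsifier $G'_w$ has at most $m'_w=O(\min\{nw,m\})$ edges and preserves cuts of value below $2w$; since any cut we return has value at most $2w$ (larger outputs are discarded), each single \MF query on $G'_w$ costs $T(n,m'_w,2w)=\tilde O(m'_w+n^{1.5})$ via~\cite{linearflow21} in the algebraic regime, and $\tilde O(\min(n^{2/3},\sqrt{m'_w})\,m'_w)$ using Goldberg-Rao in the combinatorial regime.

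For the \emph{large-$H$} cases, only the $O(N/w)$ expanders of size $\geq w/8$ in the decomposition of $G_w$ are processed. In step~(\ref{solve_x}), each one triggers $4\gamma x\ln n$ invocations of \pC, each of which is $O(\log n)$ \MF queries; summing over the $O(\log N)$ values of $x\leq 2/\phi_w$ geometrically yields $\tilde O(1/\phi_w)$ queries per expander and $\tilde O\bigl((N/w)\cdot 1/\phi_w\bigr)$ per scale. In step~(\ref{solve_y}) each large expander contributes $5r$ queries, for $\tilde O((N/w)\cdot r)$ per scale. Multiplying by the per-query cost and summing over $w$ (using $\sum_w 1/w=\tilde O(1/k)$ and $\sum_w \min(nw,m)/w=\tilde O(n)$, both standard geometric-sum calculations that split at $w=m/n$) gives the first term $\tilde O(\sqrt{N}\sqrt{M}\,n)$ in bound~\ref{Bound:Existing:Algebraic:Densest:Lemma} and the second term $\tilde O(N^{5/6}\sqrt{M}\,n)$ in bound~\ref{Bound:Existing:Combinatorial:Densest:Lemma}; the leftover $n^{1.5}$-type summand is absorbed by $\sqrt{NM}\,n$ because $\sqrt{n}\leq\sqrt{N}$, and the large-$w$ piece $N^{13/6}n/\sqrt{M}$ of the combinatorial sum is absorbed by $N^{5/6}\sqrt{M}\,n$ using the regime assumption $M\geq N^{4/3}$.

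For the \emph{small-$H$} case~(\ref{step:small_x}), the expander-decomposition guarantee $\sum_H \delta(H)=\tilde O(nw\phi_w)$ (applied to the sparsifier $G_w$) is the core input. For each tuple $(x,y)$ an expander can be processed only if $\delta(H)\geq (x+y)w/2$, so at most $\tilde O\bigl(n\phi_w/(x+y)\bigr)$ expanders qualify, and each of them costs at most $2(x+y)$ \MF queries; summing over the $O(\log^2 N)$ geometric choices of $(x,y)$ gives $\tilde O(n\phi_w)$ queries per scale. The per-scale cost is then $\tilde O\bigl(n\phi_w\cdot(m'_w+n^{1.5})\bigr)$, and summing over $w$ using $\sum_w \min(nw,m)=\tilde O(m)$ produces $\tilde O\bigl(n\sqrt{N/M}\cdot m + n^{2.5}\sqrt{N/M}\bigr)$ in the algebraic case, which is bounded by $\tilde O\bigl(N^{3/2}m/\sqrt{M}+N^{3/2}n^{1.5}/\sqrt{M}\bigr)$ since $n\leq N$; for the combinatorial case the same substitutions together with $\sum_w w\phi_w=\tilde O(\log N)\cdot\sqrt{M}/N^{1/6}$ give a contribution absorbed by $\tilde O(N^{5/6}\sqrt{M}\,n)$.

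The part I expect to be the main obstacle is the bookkeeping in the small-$H$ case: one needs to combine the per-scale edge budget $\tilde O(nw\phi_w)$ with the fact that $\alpha(H)$ counts only $w$-relevant vertices while some of them may actually be contracted super-nodes of $G'$, and then sum $\min(nw,m)/w$ across $w$ without overcounting. The rest is routine once the parameter substitutions for $r$, $k$, $\phi_w$ are performed, and the $x>x'$, $y>y'$, $\min(x,y)>2/\phi_w$ skips are used to keep the number of tuples polylogarithmic.
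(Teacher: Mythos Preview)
Your three-case decomposition and the query counts per case follow the paper's proof. Two corrections are needed.

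First, the expander decomposition is computed once globally on the sparsifier $G_w$ of the \emph{original} graph $G$ (with $N$ nodes and $O(Nw)$ edges), not on the auxiliary graph $G'$. Hence the outside-edge budget is $\sum_H\delta(H)=\tilde O(Nw\phi_w)$ and the number of small-$H$ queries per scale is $\tilde O(N\phi_w)$, not $\tilde O(n\phi_w)$. Your final bound survives only because you subsequently weaken via $n\le N$, but the intermediate claim $\sum_H\delta(H)=\tilde O(nw\phi_w)$ is unjustified as written. (Relatedly, your flagged ``main obstacle'' about contracted super-nodes inside $\hat H$ is a non-issue: by definition $\hat H\subseteq V'\cap H$, so it never contains contracted nodes.)

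Second, and this is a genuine gap for item~\ref{Bound:Existing:Combinatorial:Densest:Lemma}: you cannot use Goldberg--Rao uniformly. In steps~(\ref{solve_y}) and~(\ref{step:small_x}) the queries are direct \LMwC calls whose flow value is at most $2w$, so the right combinatorial bound is Karger--Levine's $\tilde O(m'_w+n\cdot 2w)=\tilde O(nw)$ per query. With Goldberg--Rao's $\tilde O(n^{2/3}m'_w)$ instead, the step-(\ref{solve_y}) contribution becomes $\tilde O\bigl(N^{5/6}\sqrt{M}\cdot n^{2/3}\cdot\sum_w m'_w/w\bigr)=\tilde O(N^{5/6}\sqrt{M}\cdot n^{5/3})$, which is not dominated by either term of the lemma (take $n=N$, $m=M$). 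The paper therefore uses Goldberg--Rao only for step~(\ref{solve_x}), where the \pC internal flows are not $2w$-bounded, and Karger--Levine for the other two; your stray term $N^{13/6}n/\sqrt{M}$ suggests you may have partly intended this, but the split between the two \MF subroutines must be made explicit for the combinatorial bound to go through.
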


\begin{proof}
Let us start with analyzing Step~\ref{step:guess_xy} since it is the most interesting and expensive.
There are $O(\log^4{N})$ guesses for $x,y,x',y'$ and for each one we perform the following sub-cases.
Let $MF(n,m,f)$ be the upper bound for Max-Flow with unit capacity edges on graphs with $n$ nodes, $m$ edges, and flow size that is bounded by $f$. Furthermore, define $\beta:=\min\{m,nw\}$, and $\beta':=\min\{M,Nw\}$ (these are the numbers of edges in the graphs we are working with, due to the sparsification).

\begin{itemize}
\item In Case~\ref{solve_x} there are $\tilde{O}(1/ \phi_w)$ calls to the \pC procedure for each of the $O(N/w)$ large expanders, giving a total of $\tilde{O}(\frac{N}{w \phi_w})$ calls. Each call costs $\tilde{O}(MF(n,\beta,F_{a}))$ time for some $F_a\leq m$, giving a total time of 
$\tilde{O}\left(N/(w\phi_w)\cdot MF(n,\beta,F_{a})\right)$.

\item In Case~\ref{solve_y} there are $O(r)$ calls to
$MF(n,\beta,F_{b})$ for some $F_{b}\leq 2w$, for each of the
$O(N/w)$ large expanders, giving a total time of $\tilde{O}\left( Nr/w \cdot MF(n,\beta,F_{b})\right)$.

\item In Case~\ref{step:small_x}, we handle small $H$'s and we cannot upper bound their number based on how many nodes they contain. Instead, we get a bound based on the number of \emph{edges} leaving these expanders. Indeed, we consider only expanders $H$ with $\delta(H) \geq (x+y) \cdot w/2$ and perform only $O(x+y)$ calls to $MF(n,\beta,F_{c})$ for each one, where $F_{2w}\leq 2w$.
Since $\sum_{i=1}^h \delta(H^w_i)=\tilde{O}(\beta'\cdot \phi_w)$ for $\beta'=\min\{M,Nw\}$, we know that there are only $\tilde{O}(\frac{\beta'\cdot\phi_w}{(x+y)w})$ such $H$'s. 
Thus, the total number of queries is $\tilde{O}(\frac{\beta'\phi_w}{(x+y)w} \cdot (x+y)) = \tilde{O}(\beta' \phi_w/w)$.
The time for each query is $MF(n,\beta,F_{c})$ time for some $F_c\leq 2w$, giving a total time of 
$\tilde{O}\left( \beta'\phi_w/w\cdot MF(n,\beta,F_c)\right)$.

\end{itemize}

The first two steps in the procedure of computing the sparsifier $G_w'$ and precomputing the $w$-relevant subset of each expander only take $O(m)$ and $O(N)$ time, respectively.
Finally, there is an additional $O(\log{N})$ factor since everything is repeated for all the $w$'s. 

The total running time of the algorithm is a function of the \MF procedures that we utilize, and the parameters $r$, $\beta=\min\{m,nw\}$, $\beta'=\min\{M,Nw\}$, and $\phi_w$. 
The \MF algorithm that we use depends on the bound on the flow.
The running time of the \expanders Procedure is:

$$
\tilde{O}\left( N/w \left( 1/{\phi_{w}}\cdot MF(n,\beta,F_{a}) + r \cdot MF(n,\beta,F_{b}) \right) + 
\beta' \cdot \phi_w/w \cdot MF(n,\beta,F_{c})\right).
$$
\begin{enumerate}
\item\label{Bound:Existing:Algebraic:Densest}
To get the bound we set the following. 
\begin{enumerate}
\item
$MF(n,\beta,F_{a})=\tO(\beta+n_i^{3/2})$ (by the very recent algorithm~\cite{linearflow21}), which is at most $\tO(n_i w)$ since $w\geq k = \sqrt{N}\geq \sqrt{n_i}$,
\item
$MF(n,\beta,F_{b})=\tO(n_i w)$ (by the Karger-Levine algorithm~\cite{KL15}),
\item
$MF(n,\beta,F_{c})=\tO(\beta+n_i^{3/2})$.
\end{enumerate}
Furthermore, $\beta'\leq Nw$ and we use the given $r,\phi$. The bound is thus at most:
$$
\tO(N/w \cdot ( \sqrt{M/N} \cdot \tO(nw) + \sqrt{M/N} \cdot \tO(nw) + N\cdot \sqrt{N/M} \cdot (m+n^{3/2}))
$$
$$
=
\tO(\sqrt{N}\cdot \sqrt{M} \cdot n + 
N^{3/2}/\sqrt{M} \cdot m +
N^{3/2}/\sqrt{M} \cdot n^{3/2})
$$
concluding item~\ref{Bound:Existing:Algebraic:Densest:Lemma} in Lemma~\ref{Lemma:Bounds}.
\item\label{Bound:Existing:Combinatorial:Densest}
To get the bound for \emph{combinatorial} algorithms, we can set $MF(n,\beta,F_{a})=\tO(m\cdot n^{2/3})$
(by the Goldberg-Rao algorithm~\cite{GR98}), $MF(n,\beta,F_{b})=MF(n,\beta,F_{c})=\tO(n\cdot w)$ (by the Karger-Levine algorithm~\cite{KL15})
and the given $r,\phi$.
The bound simplifies to:
$$
\tilde{O}\left(N/w \cdot ( mn^{2/3}\cdot N^{1/6} \cdot w/\sqrt{M} + \sqrt{M}/N^{1/6}\cdot nw ) + 
N\sqrt{M}/(N^{1/6}\cdot w) \cdot nw \right) 
$$
$$
= \tilde{O}\left( N^{7/6}/\sqrt{M} \cdot mn^{2/3} + 
N^{5/6}\cdot\sqrt{M}\cdot n +
N^{5/6}\cdot\sqrt{M}\cdot n \right)  
$$
$$
\leq
\tilde{O}\left(  N^{7/6}/\sqrt{M} \cdot mn^{2/3} + 
N^{5/6}\cdot\sqrt{M}\cdot n  \right),
$$
concluding item~\ref{Bound:Existing:Combinatorial:Densest:Lemma} in Lemma~\ref{Lemma:Bounds}.
\end{enumerate}
\end{proof}

\paragraph{Correctness Analysis}
Recall that a node $v\in V'$ is \emph{done} if $c'(v)=\lambda_{p,v}$ and is \emph{undone} if $c'(v)>\lambda_{p,v}$.
The (contracted) nodes in $V(G')\setminus V'$ are treated as if they are done.

The following is the central claim, proving that the expander-based method works well for nodes that only have few undone cut-members. This will be sufficient for showing that all nodes become done after a few iterations in Claim~\ref{cl:repetitions}.

\begin{claim}
\label{cl:main_done}
If $v \in V'$ is an undone vertex such that there exists a minimum $(p,v)$-cut $(C_v,V'\setminus C_v), v\in C_v$ in $G'$ with at most $r$ undone cut-members $u \in C_v$ at the beginning of the \expanders procedure where $r \geq 1/\phi_w$, then $v$ is done at its end with probability at least $1-1/n^{\gamma}$.
Moreover, the testifying cut is the latest minimum $(p,v)$-cut in $G'$.
\end{claim}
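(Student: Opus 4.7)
The plan is to focus on the single outer iteration of the \expanders procedure for the value $w = 2^j$ with $w \leq \lambda_{p,v} < 2w$, and to show that $v$ becomes done in this iteration with the claimed probability. Since $v$ is undone, $\deg_G(v) \geq c'(v) > \lambda_{p,v} \geq w$, so $v$ is $w$-relevant. Let $H = H^w_v$ denote the expander of the $G_w$-decomposition containing $v$, let $L = \bar{C}_v \cap H$ and $R = H \setminus L$ (where $\bar{C}_v$ is $C_v$ lifted back to $V(G)$), and let $\hat{L}, \hat{R} \subseteq V'$ be the corresponding subsets of $w$-relevant nodes. The first step is the structural bound $\min(|\hat L|,|\hat R|) = O(1/\phi_w)$. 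The $(L,R)$-cut in $G_w\{H\}$ is a sub-cut of $(\bar{C}_v, V(G)\setminus \bar{C}_v)$ in $G_w$, whose value equals $\lambda_{p,v} < 2w$ because the Nagamochi--Ibaraki sparsifier preserves all cuts of value below $2w$. Combining this with $\Phi_{G_w\{H\}} \geq \phi_w$ gives $\min(\vol(L),\vol(R)) < 2w/\phi_w$, and since each $w$-relevant node contributes volume $\Omega(w)$ in $G_w\{H\}$ (the self-loops restore its $G_w$-degree, which remains $\Omega(w)$ for relevant nodes given the sparsifier parameter $2w$), the bound on $\min(|\hat L|,|\hat R|)$ follows.

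Next, I case-split on $H$ and exhibit in each case a specific tuple $(x,y,x',y')$, namely the smallest powers of $2$ above $|\hat L|,|\hat R|,|L|,|R|$, for which Step~\ref{step:guess_xy} does not skip and catches $v$. If $|H| < w/8$, then $x',y' < w/8$ and Step~\ref{step:small_x} triggers: $\alpha(H) \leq x+y$ by definition, and since each node of $\hat H$ has $\deg_G > w$ while $|H| < w/8$, at least half of its edges leave $H$, giving $\delta_{G_w}(H) \geq (x+y)\cdot w/2$; thus $H$ is processed and a \LMwC query is issued from $p$ for each node of $\hat H$. Because $G'_w$ preserves cuts of value below $2w$, the returned latest minimum $(p,v)$-cut in $G'_w$ coincides with the latest minimum $(p,v)$-cut in $G'$, so $v$ becomes done with the correct testifying cut. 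If $|H| \geq w/8$, then $|H| \geq x'+y'$ and Step~\ref{step:large_x} processes $H$; assuming WLOG that $\hat L$ is the smaller side (the other case is symmetric), $x = O(1/\phi_w)$ puts us in Case~\ref{small_x}, otherwise $y = O(1/\phi_w)$ puts us in Case~\ref{solve_y}.

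It remains to verify the two sub-branches of the large case. For Case~\ref{small_x} (large righty $H$), each random set $C \subseteq \hat H$ chosen in Case~\ref{solve_x} with inclusion probability $1/(2x)$ isolates $v$ in the sense of Lemma~\ref{thm:proc_main} exactly when $C \cap C_v = \{v\}$; since $C_v \cap \hat H = \hat L$ and $|\hat L| \leq x$, this probability is $\frac{1}{2x}\bigl(1-\frac{1}{2x}\bigr)^{|\hat L|-1} = \Omega(1/x)$, so over $4\gamma x \ln n$ independent trials the failure probability is at most $n^{-\gamma}$ and on success Lemma~\ref{thm:proc_main} returns the latest cut $C_v$ itself. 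For Case~\ref{solve_y} (large lefty $H$), I argue that $v \in \mathrm{Top}_{5r}(\hat H)$ deterministically: any done node $u \in \hat L$ satisfies $c'(u) = \lambda_{p,u} \leq \lambda_{p,v} < c'(v)$, the first inequality because $C_v$ is a valid $(p,u)$-cut of value $\lambda_{p,v}$ and the second because $v$ is undone. Hence the nodes of $\hat H$ with $c'(\cdot) \geq c'(v)$ lie in $\hat R \cup \{u \in \hat L : u \text{ undone}\}$, of size at most $O(1/\phi_w) + r \leq 3r$ using $r \geq 1/\phi_w$ and the hypothesis on undone cut-members, placing $v$ among the top $5r$; the \LMwC query then updates $c'(v)$ to $\lambda_{p,v}$ with the correct latest cut. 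The main technical obstacle I expect is the degree accounting in the structural claim, reconciling relevance (defined via $\deg_G$) with volumes taken in $G_w\{H\}$: one must argue carefully that the self-loops of $G_w\{H\}$ together with the Nagamochi--Ibaraki guarantee credit each $w$-relevant node with $\Omega(w)$ volume; everything else reduces to the case analysis plus the two estimates above.
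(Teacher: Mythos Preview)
Your proposal is correct and follows the paper's proof essentially step for step: the conductance bound $\min(|\hat L|,|\hat R|) < 2/\phi_w$, the three-way case split, the outgoing-edge count for small $H$, the isolating-set probability for the large-righty case, and the top-$5r$ argument for the large-lefty case. One fix: choose $x,y,x',y'$ as the \emph{largest} powers of $2$ \emph{at most} $|\hat L|,|\hat R|,|L|,|R|$ (so $x\le|\hat L|<2x$, etc.) and case-split on whether $\max(x',y')\ge w/8$ rather than on $|H|$ directly---under your ``smallest power of $2$ above'' convention the large-branch check $|H|\ge x'+y'$ can fail (since then $x'+y'\ge |L|+|R|=|H|$), and in the small branch you only get $|\hat H|>(x+y)/2$, hence $\delta(H)>(x+y)w/4$, which misses the algorithm's threshold $(x+y)w/2$; moreover $|H|\ge w/8$ does not force $x'\ge w/8$ or $y'\ge w/8$ (take $|L|=|R|=w/16$), so that regime actually lands in the algorithm's small branch.
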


\begin{proof}
Let $w$ be the power of $2$ such that $w\leq \lambda_{p,v} <2w$, and let $H^w_j$ be the expander that contains $v$ in the expander decomposition of $G_w$ with parameter $\phi_w$. For simplicity of notation, let us denote this expander by $H$.

Let $(C_v,V'\setminus C_v), v\in C_v$ be the minimum $(p,v)$-cut in $G'$ with at most $r$ undone cut members, as in the statement.
Let $\bar{C_v} \subseteq V$ be the set of nodes that are either in $C_v$ or in contracted nodes inside $C_v$. 
More formally, $\bar{C_v} = (C_v \cap V' ) \cup \{ v \in X \mid X \in C_v \cap (V(G') \setminus V')  \}$.
That is, $\bar{C_v}$ is the set that we get if we start from $C_v$ and ``uncontract'' all contracted nodes in $V(G')\setminus V'$.
Since $V'$ is a super-node in a \GHEPT and since $(C_v, V' \setminus C_v)$ is a minimum $(p,v)$-cut in its auxiliary graph, we know that $(\bar{C_v},V \setminus \bar{C_v})$ is also a minimum $(p,v)$-cut in $G$ of value $\lambda_{p,v}$. 

Now let us look at the projection of this cut onto $H$; define $L=\bar{C_v} \cap H$ and $R=(V \setminus \bar{C_v}) \cap H$ as the two sides of this cut in $H$, and note that $v \in L$ and that $(L,R)$ defines a cut in $H$ since $R=H \setminus L$.

Recall the definition of $\hat{H}$ as the set of $w$-relevant nodes in $H$ and similarly define:
$$
\hat{L} = \{ u\in L \mid \deg_{G}(u)>w \}, \hat{R} = \{ u\in R \mid \deg_{G}(u)>w \}.
$$ 
Observe that since $v$ is undone, $\deg_{G}(v)>\lambda_{p,v}$, and therefore $v \in \hat{L}$, i.e. $v$ is $w$-relevant.
Moreover, let $x,x',y,y'$ be the powers of $2$ that approximate the number of nodes and of $w$-relevant nodes in $L, R$: 
$$
x\leq |\hat{L}|<2x, y\leq |\hat{R}|<2y, x'\leq |L|<2x', y'\leq |R|<2y'. 
$$
In the corner cases where $R=\emptyset$ or $\hat{R}=\emptyset$ or both, we define $y:=0$ or $y':=0$ or both. 

To prove the claim, we will argue that the iteration of Step~\ref{step:guess_xy} that corresponds to $w,x,x',y,y'$ will successfully compute an optimal cut for $v$.
Notice that all \LMwC queries and calls to the \pC procedure inside this step will be performed on $G_w'$ rather than $V'$, but since $\lambda_{p,v}<2w$ this will not affect us: the latest minimum $(p,v)$-cut in $G_w'$ is the same as the latest minimum $(p,v)$-cut in $V'$.

First, we claim that the iteration will not be skipped in Step~\ref{step:skip} because $x\leq x'$, $y\leq y'$, and $\min\{x,y\} \leq 2/\phi_w$.
The first two follow by definition, because $\hat{L} \subseteq L$ and $\hat{R} \subseteq R$. The third inequality follows from the following reasoning. 
Since $H$ must have expansion $\Phi_{G_w\{ H \}} \geq \phi_w$, any of its cuts, including $(L,R)$, must have conductance:
$$
\Phi_{G_w\{ H \} }(L) = \frac{\delta(L,R)}{\min\{\vol_{G_w\{ H \}}(L), \vol_{G_w\{ H \}}(R)\}} \geq \phi_w.
$$
A key observation is that $\delta(L,R)<2w$ 
because any edge between $L$ and $R$ is also an edge between $C_v$ and $V\setminus C_v$, and the total weight of the latter is $<2w$.
Therefore, we get that 
$$
\min\{\vol_{G_w}(L), \vol_{G_w}(R)\} \leq \delta(L,R)/\phi_{w} < 2w / \phi_w.
$$
Note that we have changed the subscripts since $\vol_{G_w\{ H \}}(S)=\vol_{G_w}(S)$ for all $H,S$.
Next, observe that $\vol_{G_w}(L) > x \cdot w$ 
since $L$ contains at least $x$ nodes with $\deg_{G}> w$, and by the properties of our sparsifier for all nodes $\deg_{G_w}(u)>w$ if and only if $\deg_{G}(u)> w$.
For a similar reason $\vol_{G_w}(R) > y \cdot w$.
It follows that $\min\{x, y\} < \min\{\vol_{G_w}(L)/w, \vol_{G_w}(R)/w\} < 2/\phi_w$.

Once we have skipped Step~\ref{step:skip}, we will do one of the following steps, based on the values of $w,x,x',y$:
\begin{itemize}
\item If $x' \geq w/8$ or $y' \geq w/8$ then since $|H| \geq x'+y'$ by our definition of $x',y'$, $H$ will be considered in Step~\ref{step:large_x} where one of the Steps~\ref{solve_x} or~\ref{solve_y} will be performed on it: 
\begin{itemize}
\item If $x \leq 2 /\phi_w$, then Step~\ref{solve_x} is performed; let us analyze what happens.
There is a randomized process that gets iterated $4\gamma x \ln{n}$ times, that involves picking a subset $C \subseteq V'$ and calling the \pC procedure in Lemma~\ref{thm:proc_main}.
We say that subset $C$ is $v$-successful if $C \cap C_v =\{v \}$, i.e. $v$ is the only node who was chosen among its cut-members.
First, observe that if we call the \pC procedure on $G_w'$ with pivot $p$ and connected subset $C$, where $C$ is $v$-successful, then by Lemma~\ref{thm:proc_main} we will get the latest minimum $(p,v)$-cut in $G'_w$ (and therefore in $G'$) and we are done.
To conclude the proof of this case, it remains to argue that in at least one of the $4\gamma x \ln{n}$ iterations $C$ is $v$-successful with high probability. 
Consider one of the iterations. Since $v \in \hat{H}$ the probability that it gets chosen to $C$ is exactly $1/2x$. 
The only cut-members of $v$ that can get chosen are those in $\hat{L}$, because we only choose nodes in $\hat{H}=\hat{L} \cup \hat{R}$ and the nodes in $\hat{R}$ are not cut-members of $v$.
The expected number of cut-members in $C$ is therefore $|\hat{L}|/2x < 1$, and using Markov's Inequality, the probability that it is at least $2$ is $< 1/2$.
Therefore, the probability that $C$ is $v$-successful, i.e. both $v \in C$ and $|C|<2$, is at least $1/2x \cdot 1/2 = 1/4x$.
And the probability that at least of one the iterations produces a $v$-successful set $C$ is $\geq 1 - (1-1/4x)^{4\gamma x\ln{n}} \geq 1-1/n^{\gamma}$.
Note that for the analysis of this case, we did not use the fact that $v$ only has $r$ undone cut-members, this will be used in the next case.

\item Otherwise it must be that $y \leq 2 /\phi_w$, and Step~\ref{solve_y} is performed, where we simply compute a \LMwC query between $p$ and each node in a certain set $P$. We will show that $v$ must be in $P$.
Recall that $P$ contains the $5 r$ nodes in $\hat{H}$ with highest $c'$.
Let us upper bound the number of nodes $u \in \hat{H}$ that could have $c'(u)\geq c'(v)$. There are three kinds of nodes in $\hat{H}$:
(1) nodes in $\hat{R}$ and their number is $y_1 <2y \leq 4 / \phi_w \leq 4 r$ since we assume that $r \geq 1/\phi_w$ for all $w\leq N$, (2) undone cut-members of $v$ and their number is $y_2\leq r$ (by the assumption in the statement), and (3) done cut-members of $v$ whose number may be large but their $c'$ must be smaller than $v$'s.
To see this, let $u$ be a done cut-member of $v$, and observe that $c'(u)=\lambda_{p,u} \leq \lambda_{p,v}$ (since $v$'s cut contains $u$, it gives an upper bound for its connectivity to $p$) and therefore $c'(v)>\lambda_{p,v} \geq c'(u)$.
Thus, there can only be $y_1+y_2 < 5 r$ nodes in $\hat{H}$ with $c'$ that is higher or equal to that of $v$, meaning that $v$ must be in $P$ and it will become done.

\end{itemize}
 \item Otherwise, if $x'< w/8$ and $y' < w/8$, we argue that $H$ will be among the expanders that we consider, i.e. it satisfies both $\alpha(H) \leq 2(x+y)$ and $\delta(H) \geq (x+y) \cdot w/2$.
Once we have that, we are done because Step~\ref{small_x} computes a \LMwC query between $p$ and each nodes in $\hat{H}$.

The first condition holds by definition of $x,y$. Verifying that $\delta(H) \geq (x+y) \cdot w/2$ is a bit more tricky\footnote{This is the central ingredient that fails when the graphs are \emph{weighted}.}: intuitively, we prove that small expanders must have a relatively large number of edges going out.
Consider the $\geq (x+y)$ nodes in $\hat{H}$.
Each of these nodes has $>w$ adjacent edges in $G_w$ but only $|H|-1 < 2 (x'+y') < w/2$ of them could stay inside $H$.
Therefore, the other endpoint of at least $w/2$ of these edges is outside $H$.
In total, there are $>(x+y)\cdot w/2$ edges contributing to $\delta(H)$ and we are done.
Note that here we crucially rely on the fact that $G$ (and therefore $G_w$) are simple graphs\footnote{If there are weights or parallel edges in $G$, all of the $>w$ edges of a $w$-relevant node $v \in \hat{H}$ could go to the same neighbor $u \in H$ and $\delta(H)$ may be small. Consequently, we cannot get a good upper bound in the small $H$ case.}, and it is important that $(x'+y')$ approximates $|H|$ the size of $H$ in $V(G)$, not in $V(G')$, since $G'$ may contain parallel edges (due to the contractions).

\end{itemize}

\end{proof}

\fi 

\section{Conclusion}

This paper presents the first algorithm with subcubic in $n$ running time for constructing a \GHT of a simple graph and, consequently, for solving the \APMF problem.
It is achieved by a combination of several tools from the literature on this problem, as well as two new ingredients: the \expanders and \pC procedures.
The new ideas are reminiscent of recent algorithms \cite{KThorup19,saranurak2020simple,LP20} for the easier problem of \GMC.
We conclude with some remarks and open questions.

\begin{itemize}

\item The assumption that the graph is unweighted is only used in one specific case of the analysis for observing that: if a high degree node is in a small component then most of its edges must leave the component.
A similar observation is at the heart of the breakthrough deterministic \GMC algorithm of Kawarabayashi and Thorup \cite{KThorup19} but can now be avoided with a moderate loss in efficiency \cite{LP20}.
Thus there is room for optimism that $n^{1-\eps}\cdot T_{\MF}(n,m)$ time for \emph{weighted graphs} is possible with the available tools.

\item The new subcubic algorithm uses randomness in multiple places and succeeds with high probability. 
All of the ingredients can already be derandomized (with some loss) using existing methods, except for one: the randomized pivot selection (see Complication 1 in Section~\ref{overview}).
It is likely that a fully deterministic algorithm making $n^{1-\eps}$ queries is attainable but there is an inherent challenge that has also prevented the $\tilde{O}(n^3)$ time algorithm \cite{BHKP07} from being derandomized yet.
However, matching the new $\tilde{O}(n^{2.5})$ bound deterministically seems to require more new ideas, including a deterministic $\tilde{O}(n^2)$ algorithm for \MF in simple graphs that can be used instead of Karger-Levine \cite{KL15}.

\item Perhaps the most interesting open question is whether $\tilde{O}(m)$ time can be achieved, even in simple graphs and even assuming a linear-time \MF algorithm. 
The simplest case where breaking the $n^{2.5}$ bound is still challenging has been isolated in Section~\ref{sec:clcr}; perhaps it will lead to the first conditional lower bound for computing a \GHT?

\end{itemize}

\ifprocs
\begin{acks}
\else
\paragraph{Acknowledgements}
\fi 
We thank the anonymous reviewers for many helpful comments.
\ifprocs
\end{acks}
\fi

\ifprocs
\bibliographystyle{alphaurl}
\bibliography{robi}
\else
{\small
\bibliographystyle{alphaurlinit}
\bibliography{robi}
}
\fi

\appendix

\ifprocs\else 

\section{An Alternative Proof of the \pC Procedure}
\label{app:proc}

This section gives a complete proof of Lemma~\ref{thm:proc_main} by generalizing the arguments in Section~\ref{sec3:motivation}. 
The Lemma essentially follows from the very recent work of Li and Panigrahy \cite[Theorem II.2]{LP20} for \GMC. 
We provide another proof both for completeness and because it could be of interest as it exploits the structure of the \GHT instead of using the submodularity of cuts directly.
Let $MF(N,M,F)$ be an upper bound on \MF in graphs with $N$ nodes, $M$ edges, and where the flow size is bounded by $F$.

\begin{replemma}{thm:proc_main}[\textit{restated}]
Given an undirected graph $G=(V,E,c)$ on $n$ nodes and $m$ total edges, a \emph{pivot} node $p\in V$, and a set of \emph{connected} vertices $C \subseteq V$, let $(C_v,V\setminus C_v)$ where $v \in C_v, p \in V\setminus C_v$ be the latest minimum $(p,v)$-cut for each $v \in C\setminus\{p\}$.
There is a deterministic $O(MF(n,m,c(E)) \cdot \log{n})$-time algorithm that returns $|C|$ disjoint sets $\{ C'_v \}_{v \in C}$ such that
for all $v\in C$: if $C_{v} \cap C = \{v\}$ then $C'_v = C_v$.
\end{replemma}

We call a node $v\in C$ \lucky if it satisfies this condition.
We begin by describing the algorithm, then continue to its analysis.

\paragraph*{Algorithm.}
First, we prove this theorem with additional edges of very large 
capacities $U$ and $U^2$ for $U=c(E)^4$, and then we show that we actually do not need these edges.
%
\begin{enumerate}
\item
We define a recursive procedure $R(G,C,p)$ which operates as follows.
\begin{enumerate}
\item
If $|C|=1$, denote the node in $C$ by $u$, find $(S_{pu},S_{up}):=\LMC_{G}(p,u)$, and return $S_{up}$.
\item
Otherwise, if $|C|\geq 2$, do the following.
\begin{enumerate}
\item
Similar to Section~\ref{sec3:motivation}, connect $p$ with edges of capacity $U$ to every node in $C$, and denote by $G_m$ the resulting graph.
\item
Denote by $C_1$ an arbitrary set of $\lceil |C|/2 \rceil$ nodes from $C$, and by $C_2$ the remaining $\lfloor |C|/2 \rfloor$ nodes of $C$.
\item
Split $p$ into $p_1$ and $p_2$, connecting the newly added edges between nodes in $C_1$ and $p$ in $G_m$ to $p_1$, and the newly added edges between nodes in $C_2$ and $p$ in $G_m$ to $p_2$. Namely, for each edge $\{p,u\} \in E(G_m)$ where $u \in C_1$ we add the edge $\{p_1,u\}$ to $E(G_h)$ and for each edge $\{p,u\} \in E(G_m)$ where $u \in C_2$ we add the edge $\{p_2,u\}$ to $E(G_h)$.
\item
Add an edge of capacity $U^2$ between between $p_1$ and $p_2$, and denote the new graph $G_h$.
\item
Next, find $(S_{p_1p_2},S_{p_2p_1}=V\setminus S_{p_1p_2}):=\MC_{G_h}(p_1,p_2)$, where $p_1\in S_{p_1p_2}$ and $p_2\in S_{p_2p_1}$ (and thus also $C_1\subset S_{p_1p_2}$ and $C_2\subset S_{p_2p_1}$, since otherwise the weight of the cut would be at least $U^2+U$, while any cut with $C_1\subset S_{p_1p_2}$ and $C_2\subset S_{p_2p_1}$ has weight at most $U^2+c(E)<U^2+U$). 
\item
Continue recursively by calling $R(G_h(S_{p_1p_2}),C_2,p_2)$ and $R(G_h(S_{p_2p_1}),C_1,p_1)$ where $G_h(S_{p_1p_2})$ is $G_h$ after removing all edges of high capacities $U,U^2$, and contracting $S_{p_1p_2}$, and $G_h(S_{p_2p_1})$ is similar but after contracting $S_{p_2p_1}$.
\end{enumerate}
\end{enumerate}
\end{enumerate}

\paragraph*{Correctness.}
We begin by a description of the structure of every cut-equivalent tree $T_m$ of $G_m$, and then of every cut-equivalent $T_h$ of $G_h$ (see Figure~\ref{Figs:Procedure}).
\begin{claim}\label{Claim:T_m}
The following structural claims hold.
For every cut-equivalent tree $T_m$ of $G_m$ and every \lucky node $v\in C$, $T_m$ has an edge $\{p,v\}$ of capacity $U+\MF_{G}(p,v)$ between $p$ and $v$, such that removing $\{p,v\}$ from $T_m$ results in a partition of the nodes into the two subtrees $T_v$ and $T_p$ such that $(S_{pv}=V(T_p),S_{vp}=V\setminus S_{pv}=V(T_v))=\MC(p,v)$, where $v\in S_{pv}$ and $p\in S_{vp}$.
\end{claim}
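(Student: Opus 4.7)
The plan is to exploit how the enormous capacity $U=c(E)^4$ of the newly added edges $\{p,u\}$ for $u\in C$ both pins down the value of $\MF_{G_m}(p,v)$ and tightly constrains the structure of every cut-equivalent tree $T_m$. The crucial observation is that any cut of $G_m$ of weight strictly less than $2U$ can sever at most one of the added $\{p,u\}$ edges, since severing two would already contribute at least $2U$. As a first step, this yields $\MF_{G_m}(p,v)=U+\MF_G(p,v)$: every $(p,v)$-cut in $G_m$ must sever the added edge $\{p,v\}$ (contributing $U$), so severing a second added edge would push its weight above $2U>U+c(E)\geq U+\MF_G(p,v)$; meanwhile the partition $(V\setminus C_v,\,C_v)$ achieves this value exactly, because the \lucky property of $v$ ensures that $\{p,v\}$ is the only added edge crossing it. The same bookkeeping shows that every minimum $(p,v)$-cut $(A,B)$ in $G_m$, with $p\in A$ and $v\in B$, satisfies $B\cap C=\{v\}$ and $\delta_G(A,B)=\MF_G(p,v)$, so it is also a minimum $(p,v)$-cut of $G$.

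Next, I will show $\{p,v\}\in E(T_m)$ by arguing about the neighbor $n_1$ of $p$ along the unique $p$-to-$v$ path in $T_m$. Let $(A_1,B_1)$ be the partition of $V(G_m)$ obtained by removing $\{p,n_1\}$ from $T_m$, with $p\in A_1$ and $n_1\in B_1$. Since $v$ lies in the subtree of $n_1$, we have $v\in B_1$, so the added edge $\{p,v\}$ crosses $(A_1,B_1)$ and contributes $U$ to its $G_m$-weight. Suppose toward contradiction that $n_1\neq v$. If $n_1\notin C$, then $\deg_{G_m}(n_1)=\deg_G(n_1)\leq c(E)<U$, so the singleton cut $\{n_1\}$ witnesses $\MF_{G_m}(p,n_1)<U$, yet $(A_1,B_1)$ already weighs at least $U$, contradicting the cut-equivalence of $T_m$. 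If instead $n_1\in C$, then the added edge $\{p,n_1\}$ also crosses $(A_1,B_1)$, pushing its $G_m$-weight past $2U$; but now the singleton cut $\{n_1\}$ has weight $\deg_G(n_1)+U<2U$, contradicting that $(A_1,B_1)$ realizes $\MF_{G_m}(p,n_1)$. Either way, $n_1=v$, so $\{p,v\}\in E(T_m)$.

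To close, cut-equivalence assigns weight $\MF_{G_m}(p,v)=U+\MF_G(p,v)$ to this edge, and its induced partition $(S_{pv},S_{vp})$ is, by construction, a minimum $(p,v)$-cut of $G_m$; by the first step it is also a minimum $(p,v)$-cut of $G$, which is exactly the statement of the claim. The main obstacle I anticipate is not a single calculation but selecting the right viewpoint: focusing on the tree edge that realizes the min $(p,v)$-cut alone does not suffice, since a priori its endpoint on the $v$-side could be some other node of $C_v$. The decisive contradiction only emerges by examining the first edge $\{p,n_1\}$ along the $p$-to-$v$ path and accounting for both $\{p,n_1\}$ and $\{p,v\}$ crossing the induced partition.
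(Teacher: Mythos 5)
Your proof is correct and follows essentially the same route as the paper's (very terse) argument: the capacity-$U$ edge $\{p,v\}$ must cross every minimum $(p,v)$-cut of $G_m$, cutting a second added edge costs at least $2U$ and so is never optimal, and the lucky cut $C_v$ supplies the matching bound $U+\MF_G(p,v)$. Your extra case analysis on the neighbor of $p$ along the $p$--$v$ path, showing that $\{p,v\}$ is literally an edge of $T_m$, just makes explicit a step the paper leaves implicit.
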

\begin{proof}
This is simply true because the added edge of capacity $U$ must participate in every minimum $(p,v)$-cut in $G_m$, and that removing the edges of capacity $U$ from $G_m$ results in $G$.
\end{proof}
\begin{claim}\label{Claim:T_h}
For every cut-equivalent tree $T_h$ of $G_h$ and every node $v_1\in C_1$, $T_h$ has an edge of capacity $U+\MF_{G}(p,v_1)$ between $p_1$ and $v_1$, with a minimum $(p,v_1)$-cut attached to $v_1$, and symmetrically for every node $v_2\in C_2$, $T_h$ has an edge of capacity $U+\MF_{G}(p,v_2)$ between $p_2$ and $v_2$, with a minimum $(p,v_2)$-cut attached to $v_2$. 
In addition, $T_h$ must have $p_1$ connected to $p_2$ with an edge of capacity at least $U^2$. 
\end{claim}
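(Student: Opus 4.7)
The approach is to reduce Claim~\ref{Claim:T_h} to the already-established Claim~\ref{Claim:T_m} by exploiting the fact that $G_h$ is obtained from $G_m$ by splitting the pivot $p$ into $p_1, p_2$ and joining them by an edge of capacity $U^2$, while every cut not separating $p_1$ from $p_2$ has value well below $U^2$. The plan is to show that any cut-equivalent tree $T_h$ of $G_h$ contains the edge $\{p_1, p_2\}$, that contracting it yields a cut-equivalent tree of $G_m$, and that the structural conclusions of Claim~\ref{Claim:T_m} then transfer back to $T_h$.

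First I would verify that $\MF_{G_h}(p_1, p_2) \geq U^2$, witnessed by the single edge between them, while $\MF_{G_h}(a, b) < U^2$ for every other pair, using the fact that $\cdeg_{G_h}(a) \leq c(E) + U\cdot|C| \ll U^2$ for any $a \notin \{p_1, p_2\}$ (and a symmetric bound after routing $p_2$ to the $p_1$-side when $a \in \{p_1, p_2\}$), since $U = c(E)^4$. Consequently, the unique $p_1$--$p_2$ path in $T_h$ must consist of a single edge: each edge $\{a, b\}$ of $T_h$ has weight $\MF_{G_h}(a, b)$, and every edge on this path must have weight $\geq U^2$, which is only possible when $\{a, b\} = \{p_1, p_2\}$. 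This proves the ``$p_1$ connected to $p_2$ with an edge of capacity at least $U^2$'' half of the claim.

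Next, let $T'_h$ be the tree obtained by contracting $\{p_1, p_2\}$ in $T_h$ into a single node $p$. I would invoke Lemma~\ref{Lemma:GH_edges} to show $T'_h$ is a cut-equivalent tree of $G_m$: for each remaining edge $\{a, b\}$ of $T'_h$, the bipartition it induces of $V(G_m)$ matches the bipartition induced by the corresponding edge of $T_h$ (with $p_1, p_2$ on the same side), and its value in $G_m$ equals its value in $G_h$ because the $U^2$ edge is never cut. The supporting identity $\MF_{G_h}(a, b) = \MF_{G_m}(a, b)$ for the relevant pairs (including pairs involving $p$) follows once more from the dominance of $U^2$: any minimum $(p_1, v)$-cut in $G_h$ for $v \neq p_1, p_2$ keeps $p_2$ on $p_1$'s side and maps bijectively to a $(p, v)$-cut in $G_m$ of the same value, and conversely.

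Finally, applying Claim~\ref{Claim:T_m} to $T'_h$ in $G_m$ supplies, for each \lucky $v_1 \in C_1$, an edge $\{p, v_1\}$ in $T'_h$ of capacity $U + \MF_G(p, v_1)$ whose removal leaves $v_1$'s subtree as a minimum $(p, v_1)$-cut in $G_m$. Uncontracting the $\{p_1, p_2\}$ edge, this edge of $T'_h$ lifts to a unique edge of $T_h$ between $v_1$ and either $p_1$ or $p_2$; it must be $p_1$, because any $(p_1, v_1)$-cut that places $p_2$ on $v_1$'s side costs $\geq U^2$, overwhelmingly more than $U + \MF_G(p, v_1)$. A symmetric argument handles $v_2 \in C_2$ and $p_2$. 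The main obstacle I anticipate is the bookkeeping in the contraction step -- carefully matching bipartitions of $V(G_h)$ to bipartitions of $V(G_m)$ and verifying max-flow equalities across the split -- but this should be conceptually routine given the large gap between $U^2$ and the other capacities.
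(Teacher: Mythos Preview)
Your approach is essentially a more detailed version of the paper's terse proof, which simply says the claim holds ``similarly to Claim~\ref{Claim:T_m}, and as $p_1$ is on the same side as $p_2$ in every minimum cut in $G_h$ that is not a minimum $(p_1,p_2)$-cut.'' Your contraction-and-lift strategy makes this precise and is sound overall. (You are also right to restrict to \lucky nodes $v_1\in C_1$: the statement as written, for \emph{every} $v_1\in C_1$, is not correct in general, since a non-\lucky $v_1$ has $\MF_{G_h}(p_1,v_1)>U+\MF_G(p,v_1)$; only the \lucky case is needed downstream.)

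There is, however, one real gap in your final step. Your stated reason for why the lifted edge is $\{p_1,v_1\}$ rather than $\{p_2,v_1\}$ --- that ``any $(p_1,v_1)$-cut that places $p_2$ on $v_1$'s side costs $\geq U^2$'' --- does not rule out the alternative: if the edge were $\{p_2,v_1\}$, removing it would give a minimum $(p_2,v_1)$-cut in which $p_2$ (and hence $p_1$, via the remaining $U^2$ edge) sits on the side \emph{opposite} $v_1$, so your inequality is never triggered and no contradiction arises. The correct argument looks at the $\{p_1,p_2\}$ edge instead: removing it from $T_h$ yields some minimum $(p_1,p_2)$-cut in $G_h$, and every such cut must place each $v_1\in C_1$ on $p_1$'s side (otherwise the $U$-edge $\{p_1,v_1\}$ is crossed, adding $U$ to the cut while saving at most $c(E)<U$). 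Hence $v_1$ lies in $p_1$'s component of $T_h\setminus\{p_1,p_2\}$, which forces the lifted edge to be $\{p_1,v_1\}$.
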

\begin{proof}
This is true similarly to Claim~\ref{Claim:T_m}, and as $p_1$ is on the same side as $p_2$ in every minimum cut in $G_h$ that is not a minimum $(p_1,p_2)$-cut.
\end{proof}

\begin{figure*}[ht]
  \begin{center}
    \includegraphics[width=6.5in]{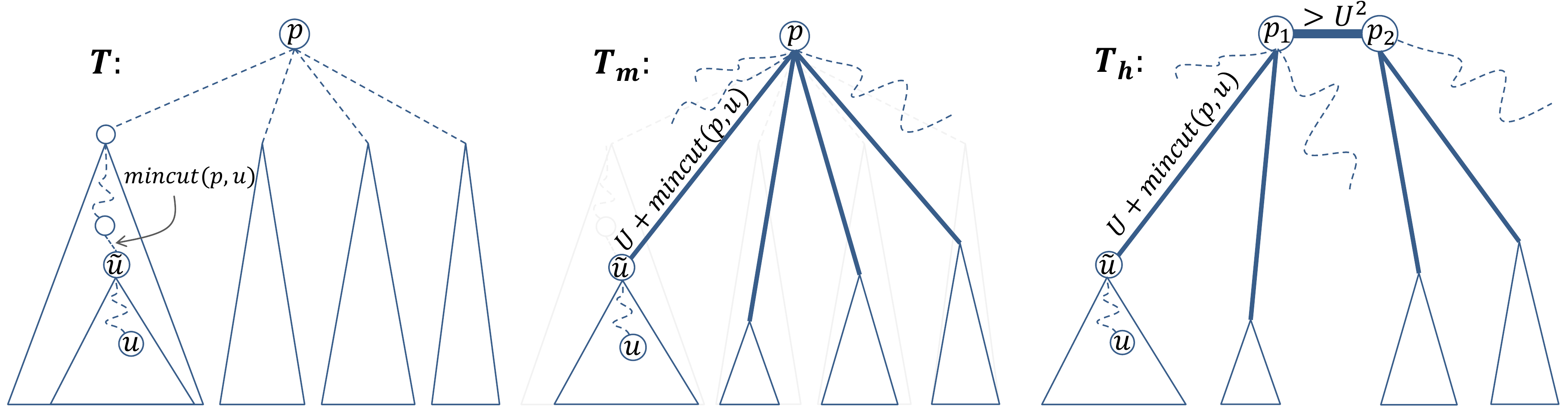}
    \end{center}
    \caption{An illustration of the trees $T,T_m, T_h$ in the procedure, where bold edges represent edges with added capacity $U$ and $U^2$. In this illustration, $\tilde{u}$ is a connected node.}
    \label{Figs:Procedure}
\end{figure*}

We will show the following claim for graphs $G_m$ along the execution.
\begin{claim}\label{Claim:technical_proc}
For every \lucky node $v$, if $v\in V(G_m)$ then $C_v\subset V(G_m)$ (i.e., the nodes of $C_v$ have not been contracted).
\end{claim}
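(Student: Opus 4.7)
The plan is to prove the claim by induction on the recursion depth. The base case is immediate: at the top-level call, no contractions have yet occurred, so $V(G_m)=V$ and $C_v\subset V(G_m)$. For the inductive step, I would assume that at the current level the invariant holds for every lucky $v$ (so in particular $C_v\subset V(G_m)$), and show that one more recursive step preserves it. Suppose $v\in C_2$, so the recursive call is on $G_h(S_{p_1p_2})$ and contracts the set $S_{p_1p_2}$ (the case $v\in C_1$ is symmetric). The key sub-claim is then $C_v\cap S_{p_1p_2}=\emptyset$: under it no vertex of $C_v$ is absorbed by the contraction, so $C_v$ survives into the next level's $G_m$.

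The main tool will be posimodularity (Fact~\ref{Fact:Posimodularity}) in $G_h$ applied to $A:=C_v$ and $B:=S_{p_1p_2}$. I rely on four observations. First, $p\notin C_v$ by definition, so neither $p_1$ nor $p_2$ lies in $A$. Second, because $v$ is lucky the only $C$-node inside $C_v$ is $v$ itself, so the added capacity-$U$ edges contribute exactly $U$ to $\delta_{G_h}(A)$, giving $\delta_{G_h}(A)=U+\MF_G(p,v)$. Third, $B\setminus A$ still contains $p_1$ and not $p_2$, hence it is a $(p_1,p_2)$-cut in $G_h$ and $\delta_{G_h}(B\setminus A)\geq \delta_{G_h}(B)$. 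Fourth and most delicate, $A\setminus B$ contains $v$ but avoids both $p_1$ and $p_2$; merging $p_1,p_2$ back into $p$ turns $A\setminus B$ into a $(p,v)$-cut in $G_m$ whose only $C$-member on $v$'s side is $v$ itself, so $\delta_{G_h}(A\setminus B)=\delta_{G_m}(A\setminus B)\geq U+\MF_G(p,v)$.

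Plugging these bounds into $\delta_{G_h}(A)+\delta_{G_h}(B)\geq \delta_{G_h}(A\setminus B)+\delta_{G_h}(B\setminus A)$ forces equality throughout; in particular $A\setminus B=C_v\setminus S_{p_1p_2}$ is itself a minimum $(p,v)$-cut (in $G_m$, and therefore in $G$) of the same value as $C_v$. Since $C_v$ is the \emph{latest} minimum $(p,v)$-cut, i.e.\ the unique set-theoretic minimum among min cuts on $v$'s side, we must have $C_v\subseteq C_v\setminus S_{p_1p_2}$, which is exactly $C_v\cap S_{p_1p_2}=\emptyset$. The main obstacle I anticipate is the bookkeeping across recursion levels: the pivot at level $k{+}1$ is the uncontracted half $p_2$ of the previous level's split, while $C_v$ refers to the original top-level latest min cut. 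To push the induction through one needs the auxiliary invariant that $C_v$ remains a minimum $(p,v)$-cut of the same value $\MF_G(p,v)$ in the current auxiliary graph, and that $v$ stays lucky there; but this follows from the inductive step itself, since the contracted set $S_{p_1p_2}$ lies entirely on $p$'s side of $C_v$ and therefore affects neither the value nor the latest-ness of $C_v$, nor the intersection of $C_v$ with the shrinking set $C$.
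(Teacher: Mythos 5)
Your proof is correct, but it takes a genuinely different route from the paper's. The paper argues structurally: it first establishes Claims~\ref{Claim:T_m} and~\ref{Claim:T_h} about the cut-equivalent trees of the augmented graphs $G_m$ and $G_h$ (every \lucky node hangs off the pivot by an edge of capacity $U+\MF_G(p,v)$ with a minimum $(p,v)$-cut as its subtree, and $p_1,p_2$ are adjacent via the $U^2$-edge), and then the induction step is simply that $C_v$ ``sticks together'' on $v$'s side of the computed minimum $(p_1,p_2)$-cut, using the fact that the latest cut $C_v$ is contained in the $v$-side of every minimum $(p,v)$-cut. You instead run a direct uncrossing argument: posimodularity (Fact~\ref{Fact:Posimodularity}) in $G_h$ applied to $A=C_v$ and $B=S_{p_1p_2}$, with luckiness pinning down $\delta_{G_h}(A)=U+\MF_G(p,v)$, minimality of $B$ giving $\delta(B\setminus A)\ge\delta(B)$, and the lower bound $\delta(A\setminus B)\ge U+\MF_G(p,v)$ forcing equality, so that $C_v\setminus S_{p_1p_2}$ is a minimum $(p,v)$-cut in $G$ and latestness collapses it to $C_v$. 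This is essentially the submodularity-style argument of Li and Panigrahy~\cite{LP20}, which the paper explicitly chose to avoid here (the stated point of this appendix is to give a proof that exploits the \GHT structure rather than submodularity of cuts directly). What your route buys is self-containedness and locality: you never need the structural claims about $T_m$ and $T_h$, and the roles of luckiness and latestness are isolated in two clean inequalities; the cost is the cross-level bookkeeping you flag at the end, namely that because $C_v$ is uncontracted (the induction hypothesis) and all contractions and pivot-splits happen outside $C_v$, all the cut values you compare ($\delta(C_v)$, $\delta(C_v\setminus S_{p_1p_2})$) are preserved from $G$ to the current auxiliary graph -- a point you handle correctly, and which the paper's tree-based induction absorbs implicitly via Claim~\ref{Claim:T_h}. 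One minor remark: you do not actually need the auxiliary invariant that $C_v$ stays a \emph{latest} minimum cut in the current contracted graph; invoking latestness once, in $G$, at the final step suffices, since $C_v\setminus S_{p_1p_2}\subseteq C_v$ is shown to be a minimum $(p,v)$-cut of $G$ itself.
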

Given claim~\ref{Claim:technical_proc}, it is easy to see that eventually $C_v$ will be contained in a recursive call that has exactly one connected node $v$, and there $C_v$ is a valid option for a $(p,v)$-cut. As a result, this cut will be returned, as required.

\begin{proof}[Proof of Claim~\ref{Claim:technical_proc}]
This is proved by induction on the recursion depth. The base case is depth $0$, namely the input graph right after the preprocessing step, that is $G_m$, which contains $C_v$ for every connected node $v$.
Let $G'_m$ be a graph along the execution.
Assuming that for a \lucky node $v$, it belongs to $V(G'_m)$, and we will show that $C_v$ is contained in $V(G'_m)$.
Consider the graph $G''_m$ in the previous depth upon which its $\MC(p_1,p_2)$, $G'_m$ was constructed.
Since $v\in V(G''_m)$, it holds by the induction hypothesis that $C_v\subset V(G''_m)$.
By the Claim~\ref{Claim:T_h}, $C_v$ must stick together in either side of $\LMC(p_1,p_2)$, and thus $C_v\subset V(G'_m)$ (this is true since $C_{v}\subset S_{vp}$ in every minimum $(p,v)$-cut $(S_{pv},S_{vp})$ in $G$), as required.
\end{proof}

\paragraph*{Running time.}
Note that after $O(\log |C|)$ iterations, we returned a cut for every node in $C$. Furthermore, for every depth in the recursion, the total number of  nodes in all graphs is bounded by $\tO(n)$, and of the edges by $\tO(m)$, and the total size of all flows queried in this depth is bounded by $\tO(c(E))=\tO(\sum_{e\in E}c(e)+|C|U+U^2)$, the latter is correct since the super-nodes in a single depth form a partition tree, and so by Claim $3.9$ in~\cite{AKT20} the total sum of all edge weights in these auxiliary graphs is bounded by $O(c(E))$. 

Finally, as we will show shortly, we will actually not use the edges with capacity $U,U^2$, and so the total flow queried simply becomes $\tO(c(E))=\tO(\sum_{e\in E}c(e))$.

\paragraph*{Lifting the assumption on heavy edges.}
Finally, we note that the added weights are not really necessary. The algorithm would be the same, except that we do not add edges of high capacity.
Instead, just before computing $\MC_{G_h}(p_1,p_2)$ we define a graph $\tilde{G}_h$ where we contract $p_1$ and $C_1$ into $p'_1$, and also $p_2$ and $C_2$ into $p'_2$, and then find $\MC_{\tilde{G}_h}(p'_1,p'_2)$. 
Recall that $(S_{p_1p_2},S_{p_2p_1})$ is $\MC_{G_h}(p_1,p_2)$.
Since in every $\MC_{G_h}(p_1,p_2)$ it holds that $C_1\subset S_{p_1p_2}$ and $C_2\subset S_{p_2p_1}$, $\MC_{\tilde{G}_h}(p_1,p_2)$ partitions $V$ in the same way as some $\MC_{G_h}(p_1,p_2)$ (considering super-nodes as the nodes they contain). This is because if we contract nodes on each side of a min cut but not across, then the same cut remains.

\fi 

\end{document}